\newtheorem{definition}{Definition}[section]
\newtheorem{lemma}{Lemma}[section]
\newtheorem{theorem}{Theorem}[section]
\newtheorem{property}{Property}[section]
\newtheorem{proposition}{Proposition}[section]
\newenvironment{proof}{\emph{Proof.}}{\hfill$\square$\medskip}
\newcommand\neworrenewcommand[1]{%
    \let#1\relax
    \newcommand#1%
}
\newcommand{\cpp}{C\nolinebreak\hspace{-.05em}\raisebox{.4ex}{\tiny\bf +}\nolinebreak\hspace{-.10em}\raisebox{.4ex}{\tiny\bf +}}
\newcommand\etal{~\textit{et al.}\xspace}
\newcommand\setcomp[2]{\left\{{#1}\ \left|\ {#2}\right.\right\}}
\newcommand\set[1]{\left\{{#1}\right\}}
\newcommand\tuple[1]{\brac{#1}}
\newcommand\brac[1]{\left({#1}\right)}
\newcommand\apply[2]{{#1}\mathord{\brac{#2}}}
\newcommand\idxi{i}
\newcommand\idxj{j} 
\newcommand\idxz{z}
\newcommand\numof{\ell} 
\neworrenewcommand\ordinal{\alpha}
\newcommand\exptower[2]{2 \uparrow_{#1} \brac{#2}}
\newcommand\sizeof[1]{\left|{#1}\right|}
\newcommand\bigo[1]{\apply{\mathcal{O}}{#1}} 
\newcommand\poly{f}
\newcommand\polyof[1]{\apply{\poly}{#1}} 
\newcommand\const{d} 
\newcommand\regset{R}
\newcommand\negation[1]{\overline{#1}}
\newcommand\graph{\mathcal{G}} 
\newcommand\vertices{\mathcal{V}}
\newcommand\edges{\mathcal{E}}
\newcommand\bnode[1]{\rnode{#1}{$\bullet$}}
\newcommand\prestar[2]{\apply{Pre^*_{#1}}{#2}}
\newcommand\pre[3]{\apply{Pre^{#1}_{#2}}{#3}}
\newcommand\cpds{\mathcal{C}}
\newcommand\controls{\mathcal{P}}
\newcommand\cpdsrules{\mathcal{R}}
\newcommand\cpdsord{n}
\newcommand\opord{k}
\newcommand\cops[1]{\mathcal{O}_{#1}}
\newcommand\copsgen[1]{\mathcal{G}_{#1}}
\newcommand\genrules{\cpdsrules_{\copsgen{\cpdsord}}} 
\newcommand\genrulesfull[2]{\cpdsrules^{{#1}, {#2}}_{\copsgen{\cpdsord}}} 
\newcommand\cha{a}
\newcommand\chb{b}
\newcommand\chc{c} 
\newcommand\chd{d}
\newcommand\cpush[2]{push^{#2}_{#1}}
\newcommand\scopy[1]{copy_{#1}}
\newcommand\pop[1]{pop_{#1}}
\newcommand\collapse[1]{collapse_{#1}}
\newcommand\rew[1]{rew_{#1}} 
\newcommand\noop{noop}
\newcommand\cpdsruler{r}
\newcommand\cpdsrule[4]{\tuple{{#1},{#2},{#3},{#4}}}
\newcommand\cpdarule[5]{\tuple{{#1},{#2},{#3},{#4},{#5}}}
\newcommand\control{p}
\newcommand\genop{o} 
\newcommand\optuple{\cpdsruler}
\newcommand\opseq{\overrightarrow{\optuple}} 
\newcommand\configc{c}
\newcommand\config[2]{\langle {#1}, {#2} \rangle}
\newcommand\stackw{w}
\newcommand\stacku{u}
\newcommand\stackv{v}
\newcommand\ctop[1]{top_{#1}}
\newcommand\cpdstran{\longrightarrow}
\newcommand\cpdsrun{\longrightarrow^\ast}
\newcommand\alphabet\Sigma 
\newcommand\ccompose[3]{{#1} :_{#2} {#3}}
\newcommand\sbot{\perp}
\newcommand\annot[2]{{#1}^{#2}}
\newcommand\genlang{{\lang_g}} 
\newcommand\runsegment{\sigma}
\newcommand\altrunsegment{\tau}
\newcommand\oalphabet\Gamma
\newcommand\ocha\gamma
\newcommand\cpdatran[1]{\xrightarrow{#1}} 
\newcommand\controlin{{\control_{\text{in}}}}
\newcommand\controlout{{\control_{\text{out}}}}
\newcommand\cpdalang[6]{\apply{\lang^{{#5},{#6}}_{{#2},{#3}, {#4}}}{#1}}
\newcommand{\numstacks}{m} 
\newcommand\emptystack[1]{\sbot_{#1}}
\newcommand\langcheckcpds{\cpds_\emptyset}
\newcommand\regtuples{\chi}
\newcommand\numphases{\zeta}
\newcommand\pbrunseg[1]{\runsegment_{#1}}
\newcommand\pbcontrol[1]{\control_{#1}}
\newcommand\pbsa[2]{\saauta^{#1}_{#2}}
\newcommand\pbstack[1]{\iota_{#1}}
\newcommand\pbcpds[1]{\cpds_{#1}}
\newcommand\numscopes{\zeta} 
\newcommand\salyrst[2]{#1^{#2}}
\newcommand\lsashift{\text{\tt Shift}}
\newcommand\lsaenvmove{\text{\tt EnvMove}}
\newcommand\lsasaturate[1]{\text{\tt Saturate}_{#1}}
\newcommand\lsapredecessor[1]{\text{\tt Predecessor}_{#1}}
\newcommand\reachgraph[2]{\graph^{#2}_{#1}}
\newcommand\sbmax{N} 
\newcommand\sbstack[2]{{_{#1}\mathord{#2}}}
\newcommand\sbstackbr[2]{\brac{_{#1}\mathord{#2}}}
\newcommand\sbchar[3]{{_{#1,#2}\mathord{#3}}}
\newcommand\sbpr{\mathfrak{p}}
\newcommand\sbcr{\mathfrak{c}}
\newcommand{\AFStack}[2]{\mathrm{Stacks}_{#1}^{#2}}
\newcommand\sastates{\mathbb{Q}}
\newcommand\sastateset{Q}
\newcommand\sadelta{\Delta}
\newcommand\safinals{\mathcal{F}}
\newcommand\sastate{q}
\newcommand\saauta{A}
\newcommand\saautb{B}
\newcommand\sopen[1]{[_{#1}}
\newcommand\sclose[1]{]_{#1}}
\newcommand\sbrac[2]{[{#1}]_{#2}}
\newcommand\lang{\mathcal{L}}
\newcommand\slang[2]{\apply{\lang_{#1}}{#2}}
\newcommand\satran[1]{\xrightarrow{#1}}
\newcommand\satrancol[2]{\xrightarrow[{#2}]{#1}} 
\newcommand\satranfull[4]{{#1} \xrightarrow[{#3}]{#2} \brac{{#4}}} 
\newcommand\satranfullk[3]{{#1} \xrightarrow{#2} \brac{{#3}}}
\newcommand\branch{{col}}
\newcommand\sat{t}
\newcommand\satstep{\Pi} 
\newcommand\auxsat[1]{\satstep_{#1}}
\newcommand\gentaaut{\mathcal{T}}
\newcommand\taaut[3]{\gentaaut^{#1}_{{#2},{#3}}}
\newcommand\tastate{t}
\newcommand\tastates[1]{T_{#1}}
\newcommand\tadelta{\delta}
\newcommand\tatran[1]{\xrightarrow{#1}}
\newcommand\tarun[1]{\xrightarrow{#1}_\ast}
\newcommand\tacha{\ctop{1}}
\newcommand\tacontrol{control}
\newcommand\leftcpda{\cpds^L}
\newcommand\rightcpds{\cpds^R}
\newcommand\shortlong[2]{#2}
\newif\ifdraft\draftfalse 
\newenvironment{namedtheorem}[2]{%
    \expandafter\gdef\csname reftheorem#1\endcsname{%
        Theorem~\ref{#1} (#2)%
    }%
    \begin{theorem}[{#2}] \label{#1}%
}{%
    \end{theorem}%
}
\newcommand\reftheorem[1]{\expandafter\csname reftheorem#1\endcsname}
\newenvironment{namedlemma}[2]{%
    \expandafter\gdef\csname reflemma#1\endcsname{%
        Lemma~\ref{#1} (#2)%
    }%
    \begin{lemma}[{#2}] \label{#1}%
}{%
    \end{lemma}%
}
\newcommand\reflemma[1]{\expandafter\csname reflemma#1\endcsname}
\newcommand\refproperty[1]{\expandafter\csname refproperty#1\endcsname}
\newenvironment{nameddefinition}[2]{%
    \expandafter\gdef\csname refdefinition#1\endcsname{%
        Definition~\ref{#1} (#2)%
    }%
    \begin{definition}[{#2}] \label{#1}%
}{%
    \end{definition}%
}
\newcommand\refdefinition[1]{\expandafter\csname refdefinition#1\endcsname}
\title{Saturation of Concurrent Collapsible Pushdown Systems}
\author{M. Hague} 
\date{Royal Holloway University of London, and LIGM, Marne-la-Vall\'ee \\
\texttt{matthew.hague@rhul.ac.uk}}
\begin{document}

\maketitle

\begin{abstract}
    
Multi-stack pushdown systems are a well-studied model of concurrent computation
using threads with first-order procedure calls.  While, in general, reachability
is undecidable, there are numerous restrictions on stack behaviour that lead to
decidability.  To model higher-order procedures calls, a generalisation of
pushdown stacks called collapsible pushdown stacks are required.  Reachability
problems for multi-stack collapsible pushdown systems have been little studied.
Here, we study ordered, phase-bounded and scope-bounded multi-stack collapsible
pushdown systems using saturation techniques, showing decidability of control
state reachability and giving a regular representation of all configurations
that can reach a given control state.

\end{abstract}

\section{Introduction}

Pushdown systems augment a finite-state machine with a stack and accurately
model first-order recursion.  Such systems then are ideal for the analysis of
sequential first-order programs and several successful tools, such as
Moped~\cite{S02} and SLAM~\cite{BR02}, exist for their analysis.  However, the
domination of multi- and many-core machines means that programmers must be
prepared to work in concurrent environments, with several interacting execution
threads.

Unfortunately, the analysis of concurrent pushdown systems is well-known to be
undecidable.  However, most concurrent programs don't interact pathologically
and many restrictions on interaction have been discovered that give decidability
(e.g.~\cite{BET03,BMT05,SV06,HLMS10,K11c}).

One particularly successful approach is \emph{context-bounding}.  This
underapproximates a concurrent system by bounding the number of context switches
that may occur~\cite{QR05}.  It is based on the observation that most real-world
bugs require only a small number of thread interactions~\cite{Q08}.
Additionally, a number of more relaxed restrictions on stack behaviour have been
introduced.  In particular phase-bounded~\cite{lTMP07},
scope-bounded~\cite{lTN11}, and ordered~\cite{BCCC96} (corrected
in~\cite{ABH08}) systems.  There are also generic frameworks --- that bound the
tree-~\cite{MP11} or split-width~\cite{CGK12} of the interactions between
communication and storage --- that give decidability for all communication
architectures that can be defined within them.

Languages such as \cpp, Haskell, Javascript, Python, or Scala increasingly
embrace higher-order procedure calls, which present a challenge to verification.
A popular approach to modelling higher-order languages for verification is that
of (higher-order recursion) schemes~\cite{D82,O06,KNUW05}.   Collapsible
pushdown systems (CPDS) are an extension of pushdown systems~\cite{HMOS08} with
a ``stack-of-stacks'' structure.  The ``collapse'' operation allows a CPDS to
retrieve information about the context in which a stack character was created.
These features give CPDS equivalent modelling power to schemes~\cite{HMOS08}.

These two formalisms have good model-checking properties.  E.g, it is decidable
whether a $\mu$-calculus formula holds on the execution graph of a
scheme~\cite{O06} (or CPDS~\cite{HMOS08}).  Although, the complexity of such
analyses is high, it has been shown by Kobayashi~\cite{K11c} (and Broadbent\etal
for CPDS~\cite{BCHS13}) that they can be performed in practice on real code
examples.

However concurrency for these models has been little studied.  Work by Seth
considers phase-bounding for CPDS without collapse~\cite{S09} by reduction to a
finite state parity game.  Recent work by Kobayashi and Igarashi studies
context-bounded recursion schemes~\cite{KI13}.

Here, we study global reachability problems for ordered, phase-bounded, and
scope-bounded CPDS.  We use \emph{saturation} methods, which have been
successfully implemented by e.g. Moped~\cite{S02} for pushdown systems and
C-SHORe~\cite{BCHS13} for CPDS.  Saturation was first applied to model-checking
by Bouajjani\etal~\cite{BEM97} and Finkel\etal~\cite{FWW97}.  We presented a
saturation technique for CPDS in ICALP 2012~\cite{BCHS12}.  Here, we present the
following advances.  
\begin{compactenum}
    \item Global reachability for ordered CPDSs (\S\ref{sec:ordered-reach}).
          This is based on Atig's algorithm~\cite{A12} for ordered PDSs and
          requires a non-trivial generalisation of his notion of \emph{extended}
          PDSs (\S\ref{sec:extended-cpds}).  For this we introduce the notion of
          \emph{transition automata} that encapsulate the behaviour of the
          saturation algorithm.  In \shortlong{the full
          article}{Appendix~\ref{sec:phase-appendix}} we show how to use the
          same machinery to solve the global reachability problem for
          phase-bounded CPDSs.

    \item Global reachability for scope-bounded CPDSs (\S\ref{sec:scope-reach}).
        This is a backwards analysis based upon La Torre and Napoli's forwards
        analysis for scope-bounded PDSs, requiring new insights to complete the
        proofs.
\end{compactenum}
Because the naive encoding of a single second-order stack has an undecidable MSO
theory (we show this folklore result in \shortlong{the full
paper}{Appendix~\ref{sec:mso-undec}}) it remains a challenging open problem to
generalise the generic frameworks above (\cite{MP11,CGK12}) to CPDSs, since
these frameworks rely on MSO decidability over graph representations of the
storage and communication structure.

\shortlong{A full version of this paper with all definitions and proofs is
available~\cite{H13}.}{}


\section{Preliminaries} 
\label{sec:preliminaries}

Before defining CPDSs, we define $\exptower{0}{x} = x$ and
$\exptower{\idxi+1}{x} = 2^{\exptower{\idxi}{x}}$.  

\subsection{Collapsible Pushdown Systems (CPDS)}

For a readable introduction to CPDS we defer to a survey by Ong~\cite{O13}.
Here, we can only briefly describe higher-order collapsible stacks and their
operations.  We use a notion of collapsible stacks called \emph{annotated
stacks} (which we refer to as collapsible stacks).  These were introduced in
ICALP 2012, and are essentially equivalent to the classical model~\cite{BCHS12}. 

\subparagraph*{Higher-Order Collapsible Stacks}

An order-$1$ stack is a stack of symbols from a stack alphabet $\alphabet$, an
order-$\cpdsord$ stack is a stack of order-$(\cpdsord-1)$ stacks.  A collapsible
stack of order $\cpdsord$ is an order-$\cpdsord$ stack in which the stack
symbols are annotated with collapsible stacks which may be of any order $\leq
\cpdsord$.  Note, often in examples we will omit annotations for clarity.  We
fix the maximal order to $\cpdsord$, and use $\opord$ to range between
$\cpdsord$ and $1$.  We simultaneously define for all $1 \leq \opord \leq
\cpdsord$, the set $\AFStack{\opord}{\cpdsord}$ of order-$\opord$ stacks whose
symbols are annotated by stacks of order at most $\cpdsord$.  Note, we use
subscripts to indicate the order of a stack.  Furthermore, the definition below
uses a least fixed-point.  This ensures that all stacks are finite.  An
order-$\opord$ stack is a collapsible stack in $\AFStack{\opord}{\cpdsord}$.   

\begin{definition}[Collapsible Stacks]
    The family of sets $(\AFStack{\opord}{\cpdsord})_{1 \leq \opord \leq
    \cpdsord}$ is the smallest family (for point-wise inclusion) such that:
    \begin{compactenum}
        \item for all $2 \leq \opord \leq \cpdsord$,
              $\AFStack{\opord}{\cpdsord}$ is the set of all (possibly empty)
              sequences $\sbrac{\stackw_1 \ldots \stackw_\numof}{\opord}$ with
              $\stackw_1, \ldots, \stackw_\numof\in \AFStack{\opord-1}{\cpdsord}$.

        \item $\AFStack{1}{\cpdsord}$ is all sequences
              $\sbrac{\annot{\cha_1}{\stackw_1} \ldots
              \annot{\cha_\numof}{\stackw_\numof}}{1}$ with $\numof \geq 0$ and
              for all $1 \leq \idxi \leq \numof$, $\cha_\idxi$ is a stack symbol
              in $\alphabet$ and $\stackw_\idxi$ is a collapsible stack in
              $\bigcup\limits_{1 \leq \opord \leq \cpdsord}
              \AFStack{\opord}{\cpdsord}$.
    \end{compactenum}
\end{definition}

An order-$\cpdsord$ stack can be represented naturally as an edge-labelled tree
over the alphabet
$\set{\sopen{\cpdsord-1},\ldots,\sopen{1},\sclose{1},\ldots,\sclose{\cpdsord-1}}
\uplus \alphabet$, with $\alphabet$-labelled edges having a second target to the
tree representing the annotation.  We do not use $\sopen{\cpdsord}$ or
$\sclose{\cpdsord}$ since they would appear uniquely at the beginning and end of
the stack.  An example order-$3$ stack is given below, with only a few
annotations shown (on $\cha$ and $\chc$).  The annotations are order-$3$ and
order-$2$ respectively. 
\begin{center}
    \vspace{3ex}
    \begin{psmatrix}[nodealign=true,colsep=2ex,rowsep=2ex]
        \bnode{N1} && \bnode{N2} && \bnode{N3} &\pnode{N34}& \bnode{N4} &&
        \bnode{N5} && \bnode{N6} && \bnode{N7} &
        
        \bnode{N8} && \bnode{N9} && \bnode{N10} &\pnode{N1011}& \bnode{N11} &&
        \bnode{N12} && \bnode{N13} &
        
        \bnode{N14} &\pnode{N1415}& \bnode{N15} && \bnode{N16} && \bnode{N17} \\

        \psset{angle=-90,linearc=.2}
        \ncline{->}{N1}{N2}^{$\sopen{2}$}
        \ncline{->}{N2}{N3}^{$\sopen{1}$}
        \ncline{->}{N3}{N4}^{$\cha$} 
        \ncbar[arm=1.5ex]{->}{N34}{N8}
        \ncline{->}{N4}{N5}^{$\chb$}
        \ncline{->}{N5}{N6}^{$\sclose{1}$}
        \ncline{->}{N6}{N7}^{$\sclose{2}$}

        \ncline{->}{N8}{N9}^{$\sopen{2}$}
        \ncline{->}{N9}{N10}^{$\sopen{1}$}
        \ncline{->}{N10}{N11}^{$\chc$}
        \ncbar[arm=1.5ex]{->}{N1011}{N14}
        \ncline{->}{N11}{N12}^{$\sclose{1}$}
        \ncline{->}{N12}{N13}^{$\sclose{2}$}

        \ncline{->}{N14}{N15}^{$\sopen{1}$}
        \ncline{->}{N15}{N16}^{$\chd$}
        \ncline{->}{N16}{N17}^{$\sclose{1}$}
    \end{psmatrix}
    \vspace{-3ex}
\end{center} 

Given an order-$\cpdsord$ stack
$\stackw = \sbrac{\stackw_1\ldots\stackw_\numof}{\cpdsord}$, we define
$\ctop{\cpdsord+1}(\stackw)=\stackw$ and
\[ 
    \begin{array}{rcll}
        \apply{\ctop{\cpdsord}}{\sbrac{\stackw_1 \ldots
        \stackw_\numof}{\cpdsord}} &=& \stackw_1 & \text{when $\numof > 0$} \\
        
        \apply{\ctop{\cpdsord}}{\sbrac{}{\cpdsord}} &=& \sbrac{}{\cpdsord-1} &
        \text{otherwise} \\ 
        
        \apply{\ctop{\opord}}{\sbrac{\stackw_1 \ldots \stackw_\numof}{\cpdsord}}
        &=& \apply{\ctop{\opord}}{\stackw_1} & \text{when $\opord < \cpdsord$
        and $\numof > 0$} 
    \end{array}        
\]     
noting that $\apply{\ctop{\opord}}{\stackw}$ is undefined if
$\apply{\ctop{\opord'}}{\stackw} = \sbrac{}{\opord'-1}$ for any $\opord' >
\opord$.

We write $\ccompose{\stacku}{\opord}{\stackv}$ --- where $\stacku$ is
order-$(\opord-1)$ --- to denote the stack obtained by placing $\stacku$ on top
of the $\ctop{\opord}$ stack of $\stackv$.  That is, if $\stackv =
\sbrac{\stackv_1 \ldots \stackv_\numof}{\opord}$ then
$\ccompose{\stacku}{\opord}{\stackv} = \sbrac{\stacku \stackv_1 \ldots
\stackv_\numof}{\opord}$, and if $\stackv = \sbrac{\stackv_1 \ldots
\stackv_\numof}{\opord'}$ with $\opord' > \opord$,
$\ccompose{\stacku}{\opord}{\stackv} =
\sbrac{\brac{\ccompose{\stacku}{\opord}{\stackv_1}} \stackv_2 \ldots
\stackv_\numof}{\opord'}$.  This composition associates to the right.  E.g., the
stack $\sbrac{\sbrac{\sbrac{\annot{\cha}{\stackw} \chb}{1}}{2}}{3}$ above can be
written $\ccompose{\stacku}{3}{\stackv}$ where $\stacku$ is the order-$2$ stack
$\sbrac{\sbrac{\annot{\cha}{\stackw} \chb}{1}}{2}$ and $\stackv$ is the empty
order-$3$ stack $\sbrac{}{3}$.  Then
$\ccompose{\stacku}{3}{\ccompose{\stacku}{3}{\stackv}}$ is
$\sbrac{\sbrac{\sbrac{\annot{\cha}{\stackw} \chb}{1}}{2}
\sbrac{\sbrac{\annot{\cha}{\stackw} \chb}{1}}{2}}{3}$.

\subparagraph*{Operations on Order-$\cpdsord$ Collapsible Stacks}

The following operations can be performed on an order-$\cpdsord$ stack where
$\noop$ is the null operation $\apply{\noop}{\stackw} = \stackw$.
\[
    \begin{array}{rcl}
        \cops{\cpdsord} &=& \set{\noop, \pop{1}} \cup \setcomp{\rew{\cha},
        \cpush{\cha}{\opord}, \scopy{\opord}, \pop{\opord}}{\cha \in \alphabet
        \land 2 \leq \opord \leq \cpdsord} 
    \end{array}
\]
We define each $\genop \in \cops{\cpdsord}$ for an order-$\cpdsord$ stack
$\stackw$.  Annotations are created by $\cpush{\cha}{\opord}$, which pushes a
character onto $\stackw$ and annotates it with
$\apply{\ctop{\opord+1}}{\apply{\pop{\opord}}{\stackw}}$.  This, in essence,
attaches a closure to a new character.
\begin{compactenum}
    \item 
        We set $\apply{\pop{\opord}}{\ccompose{\stacku}{\opord}{\stackv}} =
        \stackv$.
        
    \item 
        We set $\apply{\scopy{\opord}}{\ccompose{\stacku}{\opord}{\stackv}} =
        \ccompose{\stacku}{\opord}{\ccompose{\stacku}{\opord}{\stackv}}$.

    \item 
        We set
        $\apply{\collapse{\opord}}{\ccompose{\annot{\cha}{\stacku'}}{1}{\ccompose{\stacku}{(\opord+1)}{\stackv}}}=
        \ccompose{\stacku'}{(\opord+1)}{\stackv}$ when $\stacku$ is
        order-$\opord$ and $1 \leq \opord < \cpdsord$; and
        $\apply{\collapse{\cpdsord}}{\ccompose{\annot{\cha}{\stacku}}{1}{\stackv}}
        =  \stacku$ when $\stacku$ is order-$\cpdsord$.

    \item 
        We set $\apply{\cpush{\chb}{\opord}}{\stackw} =
        \ccompose{\annot{\chb}{\stacku}}{1}{\stackw}$ where $\stacku =
        \apply{\ctop{\opord+1}}{\apply{\pop{\opord}}{\stackw}}$.

    \item
        We set $\apply{\rew{\chb}}{\ccompose{\annot{\cha}{\stacku}}{1}{\stackv}}
        = \ccompose{\annot{\chb}{\stacku}}{1}{\stackv}$.
\end{compactenum}
For example, beginning with $\sbrac{\sbrac{\cha}{1}\sbrac{\chb}{1}}{2}$ and
applying $\cpush{\chc}{2}$ we obtain
$\sbrac{\sbrac{\annot{\chc}{\sbrac{\sbrac{\chb}{1}}{2}} \cha}{1}
\sbrac{\chb}{1}}{2}$.  In this setting, the order-$2$ context information for
the new character $\chc$ is $\sbrac{\sbrac{\chb}{1}}{2}$.  We can then apply
$\scopy{2}; \collapse{2}$ to get
$\sbrac{\sbrac{\annot{\chc}{\sbrac{\sbrac{\chb}{1}}{2}} \cha}{1}
\sbrac{\annot{\chc}{\sbrac{\sbrac{\chb}{1}}{2}} \cha}{1} \sbrac{\chb}{1}}{2}$
then $\sbrac{\sbrac{\chb}{1}}{2}$.  That is, $\collapse{\opord}$ replaces the
current $\ctop{\opord+1}$ stack with the annotation attached to $\chc$.  

\subparagraph*{Collapsible Pushdown Systems}

We are now ready to define collapsible PDS.

\begin{definition}[Collapsible Pushdown Systems]
    An order-$\cpdsord$ \emph{collapsible pushdown system ($\cpdsord$-CPDS)} is
    a tuple $\cpds = \tuple{\controls, \alphabet, \cpdsrules}$ where $\controls$
    is a finite set of control states, $\alphabet$ is a finite stack alphabet,
    and $\cpdsrules \subseteq \brac{\controls \times \alphabet \times
    \cops{\cpdsord} \times \controls}$ is a set of rules.
\end{definition}

We write \emph{configurations} of a CPDS as a pair $\config{\control}{\stackw}
\in \controls \times \AFStack{\cpdsord}{\cpdsord}$.  We have a transition
$\config{\control}{\stackw} \cpdstran \config{\control'}{\stackw'}$ via a rule
$\cpdsrule{\control}{\cha}{\genop}{\control'}$ when $\apply{\ctop{1}}{\stackw} =
\cha$ and $\stackw' = \apply{\genop}{\stackw}$.

\subparagraph*{Consuming and Generating Rules}

We distinguish two kinds of rule or operation: a rule
$\cpdsrule{\control}{\cha}{\genop}{\control'}$ or operation $\genop$ is
\emph{consuming} if $\genop = \pop{\opord}$ or $\genop = \collapse{\opord}$ for
some $\opord$.  Otherwise, it is \emph{generating}.  We write
$\genrulesfull{\controls}{\alphabet}$ for the set of generating rules of the
form $\cpdsrule{\control}{\cha}{\genop}{\control'}$ such that $\control,
\control' \in \controls$ and $\cha \in \alphabet$, and $\genop \in
\cops{\cpdsord}$.  We simply write $\genrules$ when no confusion may arise.

\subsection{Saturation for CPDS}
\label{sec:stackaut}

Our algorithms for concurrent CPDSs build upon the saturation technique for
CPDSs~\cite{BCHS12}.  In essence, we represent sets of configurations $C$ using
a $\controls$-stack automaton $\saauta$ reading stacks.  We define such automata
and their languages $\apply{\lang}{\saauta}$ below.  Saturation adds new
transitions to $\saauta$ --- depending on rules of the CPDS and existing
transitions in $\saauta$ --- to obtain $\saauta'$ representing configurations
with a path to a configuration in $C$.  I.e., given a CPDS $\cpds$ with control
states $\controls$ and a $\controls$-stack automaton $\saauta_0$, we compute
$\prestar{\cpds}{\saauta_0}$ which is the smallest set s.t.
$\prestar{\cpds}{\saauta_0} \supseteq \apply{\lang}{\saauta_0}$ and
$\prestar{\cpds}{\saauta_0} \supseteq \setcomp{\config{\control}{\stackw}}{
\exists \config{\control}{\stackw} \cpdstran \config{\control'}{\stackw'}
\;\textrm{s.t.\;} \config{\control'}{\stackw'} \in \prestar{\cpds}{\saauta_0}}$.

\subparagraph*{Stack Automata} 

Sets of stacks are represented using order-$\cpdsord$ stack automata.  These are
alternating automata with a nested structure that mimics the nesting in a
higher-order {collapsible} stack.  We recall the definition below.

\begin{definition}[Order-$\cpdsord$ Stack Automata]
    An \emph{order-$\cpdsord$ stack automaton} is a tuple
    $
        \saauta = \tuple{ 
                      \sastates_\cpdsord,\ldots,\sastates_1, 
                      \alphabet,
                      \sadelta_\cpdsord,\ldots,\sadelta_1,
                      \safinals_\cpdsord,\ldots,\safinals_1 } $ where
                      $\alphabet$ is a finite stack alphabet,
                      $\sastates_\cpdsord, \ldots, \sastates_1$ are disjoint,
                      and
    \begin{compactenum}
        \item for all $2 \leq \opord \leq \cpdsord$, we have $\sastates_\opord$
              is a finite set of states, $\safinals_\opord \subseteq
              \sastates_\opord$ is a set of accepting states, and
              $\sadelta_\opord \subseteq \sastates_\opord \times
              \sastates_{\opord-1} \times 2^{\sastates_\opord}$ is a transition
              relation such that for all $\sastate$ and $\sastateset$ there is
              \emph{at most one} $\sastate'$ with $\tuple{\sastate, \sastate',
              \sastateset} \in \sadelta_\opord$, and

         \item $\sastates_1$ is a finite set of states, $\safinals_1 \subseteq
               \sastates_1$ is a set of accepting states, and the transition
               relation is $\sadelta_1 \subseteq \bigcup\limits_{2 \leq \opord
               \leq \cpdsord}\brac{\sastates_1 \times \alphabet \times
               2^{\sastates_\opord} \times 2^{\sastates_1}}$.
    \end{compactenum}
\end{definition}

States in $\sastates_\opord$ recognise order-$\opord$ stacks.  Stacks are read
from ``top to bottom''.  A stack $\ccompose{\stacku}{\opord}{\stackv}$ is
accepted from $\sastate$ if there is a transition $\tuple{\sastate, \sastate',
\sastateset} \in \sadelta_\opord$, written $\sastate \satran{\sastate'}
\sastateset$, such that $\stacku$ is accepted from $\sastate' \in
\sastates_{(\opord-1)}$ and $\stackv$ is accepted from each state in
$\sastateset$.  At order-$1$, a stack
$\ccompose{\annot{\cha}{\stacku}}{1}{\stackv}$ is accepted from $\sastate$ if
there is a transition $\tuple{\sastate, \cha, \sastateset_\branch, \sastateset}$
where $\stacku$ is accepted from all states in $\sastateset_\branch$ and
$\stackv$ is accepted from all states in $\sastateset$.  An empty order-$\opord$
stack is accepted by any state in $\safinals_\opord$.  We write $\stackw \in
\slang{\sastate}{\saauta}$ to denote the set of all stacks $\stackw$ accepted
from $\sastate$.  Note that a transition to the empty set is distinct from
having no transition.

We show a part run using $\sastate_3 \satran{\sastate_2} \sastateset_3 \in
\sadelta_3$, $\sastate_2 \satran{\sastate_1} \sastateset_2 \in \sadelta_2$,
$\sastate_1 \satrancol{\cha}{\sastateset_\branch} \sastateset_1 \in \sadelta_1$.
\begin{center}
    \begin{psmatrix}[nodealign=true,colsep=2ex,rowsep=2ex]
        \Rnode{N1}{$\sastate_3$} && \Rnode{N2}{$\sastate_2$} &&
        \Rnode{N3}{$\sastate_1$} &\pnode{N34}& \Rnode{N4}{$\sastateset_1$}
        && \Rnode{N5}{$\cdots$} && \Rnode{N6}{$\sastateset_2$} &&
        \Rnode{N7}{$\sastateset_3$} && && \Rnode{N8}{$\sastateset_\branch$} &&
        \Rnode{N9}{$\cdots$} 

        \psset{angle=-90,linearc=.2}
        \ncline{->}{N1}{N2}^{$\sopen{2}$}
        \ncline{->}{N2}{N3}^{$\sopen{1}$}
        \ncline{->}{N3}{N4}^{$\cha$} 
        \ncbar[arm=1.5ex]{->}{N34}{N8}
        \ncline{->}{N4}{N5}^{$\chb$}
        \ncline{->}{N5}{N6}^{$\sclose{1}$}
        \ncline{->}{N6}{N7}^{$\sclose{2}$}
        \ncline{->}{N8}{N9}^{$\sopen{2}$}
    \end{psmatrix}
\end{center}

\subparagraph*{Long-form Transitions}

We will often use a \emph{long-form} notation (defined below) that captures
nested sequences of transitions.  E.g. we can write
$\satranfull{\sastate_3}{\cha}{\sastateset_\branch}{\sastateset_1,
\sastateset_2, \sastateset_3}$ to represent the use of $\sastate_3
\satran{\sastate_2} \sastateset_3$, $\sastate_2 \satran{\sastate_1}
\sastateset_2$, and $\sastate_1 \satrancol{\cha}{\sastateset_\branch}
\sastateset_1$ for the first three transitions of the run above.  Note that this
latter long-form transition starts at the very beginning of the stack and reads
its $\ctop{1}$ character.  Formally, for a sequence of transitions $\sastate
\satran{\sastate_{\opord-1}} \sastateset_\opord, \sastate_{\opord-1}
\satran{\sastate_{\opord-2}} \sastateset_{\opord-1}, \ldots, \sastate_1
\satrancol{\cha}{\sastateset_\branch} \sastateset_1$ in $\sadelta_\opord$ to
$\sadelta_1$ respectively, we write
$\satranfull{\sastate}{\cha}{\sastateset_\branch}{\sastateset_1,\ldots,\sastateset_\opord}$.

\subparagraph*{$\mathcal{P}$-Stack Automata}

We define $\mathcal{P}$-automata~\cite{BEM97} for CPDSs.  Given control states
$\controls$, an \emph{order-$\cpdsord$ $\controls$-stack automaton} is an
order-$\cpdsord$ stack automaton such that for each $\control \in \controls$
there exists a state $\sastate_\control \in \sastates_\cpdsord$.  We set
$\apply{\lang}{\saauta} = \setcomp{\config{\control}{\stackw}}{\stackw \in
\slang{\sastate_\control}{\saauta}}$.

\subparagraph*{The Saturation Algorithm}

We recall the saturation algorithm.  For a detailed explanation of the
saturation function complete with examples, we refer the reader to our ICALP
paper~\cite{BCHS12}.  Here we present an abstracted view of the algorithm,
relegating details that are not directly relevant to the remainder of the main
article to \shortlong{the full version}{Appendix~\ref{sec:saturation-defs}}.

The saturation algorithm iterates a \emph{saturation function} $\satstep$ that
adds new transitions to a given automaton.  Beginning with $\saauta_0$
representing a target set of configurations, we iterate $\saauta_{\idxi+1} =
\apply{\satstep}{\saauta_\idxi}$ until $\saauta_{\idxi+1} = \saauta_\idxi$.
Once this occurs, we have that $\apply{\lang}{\saauta_\idxi} =
\prestar{\cpds}{\saauta_0}$.  

We define $\satstep$ in terms of a family of auxiliary saturation functions
$\auxsat{\cpdsruler}$ (defined in \shortlong{the full
article}{Appendix~\ref{sec:saturation-defs}}) which return a set of long-form
transitions to be added by saturation.  When $\cpdsruler$ is consuming,
$\apply{\auxsat{\cpdsruler}}{\saauta}$ returns the set of long-form transitions
to be added to $\saauta$ due to the rule $\cpdsruler$.  When $\cpdsruler$ is
generating $\auxsat{\cpdsruler}$ also takes as an argument a long-form
transition $\sat$ of $\saauta$.  Thus $\apply{\auxsat{\cpdsruler}}{\sat,
\saauta}$ returns the set of long-form transitions that should be added to
$\saauta$ as a result of the rule $\cpdsruler$ combined with the transition
$\sat$ (and possibly other transitions of $\saauta$).

For example, if $\cpdsruler = \cpdsrule{\control}{\cha}{\rew{\chb}}{\control'}$
and $\sat =
\satranfull{\sastate_{\control'}}{\chb}{\sastateset_\branch}{\sastateset_1,
\ldots, \sastateset_\cpdsord}$ is a transition of $\saauta$, then
$\apply{\auxsat{\cpdsruler}}{\sat, \saauta}$ contains only the long-form
transition $\sat' =
\satranfull{\sastate_\control}{\cha}{\sastateset_\branch}{\sastateset_1, \ldots,
\sastateset_\cpdsord}$.  The idea is if
$\config{\control'}{\ccompose{\annot{\chb}{\stacku}}{1}{\stackw}}$ is accepted
by $\saauta$ via a run whose first (sequence of) transition(s) is $\sat$,  then
by adding $\sat'$ we will be able to accept
$\config{\control}{\ccompose{\annot{\cha}{\stacku}}{1}{\stackw}}$ via a run
beginning with $\sat'$ instead of $\sat$.  We have
$\config{\control}{\ccompose{\annot{\cha}{\stacku}}{1}{\stackw}} \in
\prestar{\cpds}{\saauta}$ since it can reach
$\config{\control'}{\ccompose{\annot{\chb}{\stacku}}{1}{\stackw}}$ via the rule
$\cpdsruler$.

\begin{nameddefinition}{def:satstep}{The Saturation Function $\satstep$}
    For a CPDS with rules $\cpdsrules$, and given an order-$\cpdsord$ stack
    automaton $\saauta_\idxi$ we define $\saauta_{\idxi+1} =
    \apply{\satstep}{\saauta_\idxi}$. The state-sets of $\saauta_{\idxi+1}$ are
    defined implicitly by the transitions which are those in $\saauta_\idxi$
    plus, for each $\cpdsruler = \cpdsrule{\control}{\cha}{\genop}{\control'}
    \in \cpdsrules$, when
    \begin{compactenum}
        \item $\genop$ is consuming and $\sat \in
            \apply{\auxsat{\cpdsruler}}{\saauta_\idxi}$, then add $\sat$ to
            $\saauta_{\idxi+1}$, 

        \item $\genop$ is generating, $\sat$ is in $\saauta_\idxi$, and $\sat'
              \in \apply{\auxsat{\cpdsruler}}{\sat, \saauta}$, then add $\sat'$
              to $\saauta_{\idxi+1}$.
    \end{compactenum}
\end{nameddefinition}

In ICALP 2012 we showed that saturation adds up to
$\bigo{\exptower{\cpdsord}{\polyof{\sizeof{\controls}}}}$ transitions, for some
polynomial $\poly$, and that this can be reduced to
$\bigo{\exptower{\cpdsord-1}{\polyof{\sizeof{\controls}}}}$ (which is optimal)
by restricting all $\sastateset_\cpdsord$ to have size $1$ when $\saauta_0$ is
``non-alternating at order-$\cpdsord$''.  Since this property holds of all
$\saauta_0$ used here, we use the optimal algorithm for complexity arguments.


\section{Extended Collapsible Pushdown Systems}
\label{sec:extended-cpds}

To analyse concurrent systems, we extend CPDS following Atig~\cite{A12}.  Atig's
extended PDSs allow words from arbitrary languages to be pushed on the stack.
Our notion of extended CPDSs allows sequences of \emph{generating operations}
from a language $\genlang$ to be applied, rather than a single operation per
rule.  We can specify $\genlang$ by any system (e.g.  a Turing machine).

\begin{definition}[Extended CPDSs]
    An order-$\cpdsord$ \emph{extended CPDS ($\cpdsord$-ECPDS)} is a tuple
    $\cpds = \tuple{\controls, \alphabet, \cpdsrules}$ where $\controls$ is a
    finite set of control states, $\alphabet$ is a finite stack alphabet, and
    $\cpdsrules \subseteq \brac{\controls \times \alphabet \times
    \cops{\cpdsord} \times \controls} \cup \brac{\controls \times \alphabet
    \times 2^{\brac{\genrulesfull{\controls}{\alphabet}}^\ast} \times
    \controls}$ is a set of rules.
\end{definition}

As before, we have a transition $\config{\control}{\stackw} \cpdstran
\config{\control'}{\stackw'}$ of an $\cpdsord$-ECPDS via a rule
$\cpdsrule{\control}{\cha}{\genop}{\control'}$ with $\apply{\ctop{1}}{\stackw} =
\cha$ and $\stackw' = \apply{\genop}{\stackw}$.  Additionally, we have a
transition $\config{\control}{\stackw} \cpdstran \config{\control'}{\stackw'}$
when we have a rule $\cpdsrule{\control}{\cha}{\genlang}{\control'}$, a sequence
$\cpdsrule{\control}{\cha}{\genop_1}{\control_1}\cpdsrule{\control_1}{\cha_2}{\genop_2}{\control_2}\ldots\cpdsrule{\control_{\numof-1}}{\cha_\numof}{\genop_\numof}{\control'}
\in \genlang$ and $\stackw' =
\apply{\genop_\numof}{\cdots\apply{\genop_1}{\stackw}}$.  That is, a single
extended rule may apply a sequence of stack updates in one step.  A run of an
ECPDS is a sequence $\config{\control_0}{\stackw_0} \cpdstran
\config{\control_1}{\stackw_1} \cpdstran \cdots$.

\subsection{Reachability Analysis}

We adapt saturation for ECPDSs.  In Atig's algorithm, an essential property is
the decidability of $\genlang \cap \apply{\lang}{\saauta}$ for some order-1
$\controls$-stack automaton $\saauta$ and a language $\genlang$ appearing in a
rule of the extended PDS.  We need analogous machinery in our setting.  For
this, we first define a class of finite automata called \emph{transition}
automata, written $\gentaaut$.  The states of these automata will be long-form
transitions of a stack automaton $\sat =
\satranfull{\sastate}{\cha}{\sastateset_\branch}{\sastateset_1, \ldots,
\sastateset_\cpdsord}$.  Transitions $\tastate \tatran{\cpdsruler} \tastate'$
are labelled by rules.  We write $\tastate \tarun{\opseq} \tastate'$ to denote a
run over $\opseq \in \brac{\genrules}^\ast$.

During the saturation algorithm we will build from $\saauta_\idxi$ a transition
automaton $\gentaaut$.  Then, for each rule
$\cpdsrule{\control}{\cha}{\genlang}{\control'}$ we add to $\saauta_{\idxi+1}$ a
new long-form transition $\sat$ if there is a word $\opseq \in \genlang$ such
that $\tastate \tarun{\opseq} \tastate'$ is a run of $\gentaaut$ and $\sat'$ is
already a transition of $\saauta_\idxi$.

For example, consider $\cpdsrule{\control}{\cha}{\genlang}{\control'}$ where
$\genlang = \set{\cpdsrule{\control}{\cha}{\rew{\chb}}{\control'}}$.  A
transition 
\[ 
    \brac{\satranfull{\sastate_\control}{\cha}{\sastateset_\branch}{\sastateset_1,
    \ldots, \sastateset_\cpdsord}}
    \tatran{\cpdsrule{\control}{\cha}{\rew{\chb}}{\control'}}
    \brac{\satranfull{\sastate_{\control'}}{\chb}{\sastateset_\branch}{\sastateset_1,
    \ldots, \sastateset_\cpdsord}}
\]
will correspond to the fact that the presence of
$\satranfull{\sastate_{\control'}}{\chb}{\sastateset_\branch}{\sastateset_1,
\ldots, \sastateset_\cpdsord}$ in $\saauta_\idxi$ causes
$\satranfull{\sastate_\control}{\cha}{\sastateset_\branch}{\sastateset_1,
\ldots, \sastateset_\cpdsord}$ to be added by $\satstep$.  A run $\tastate_1
\tatran{\optuple_1} \tastate_2 \tatran{\optuple_2} \tastate_3$ comes into play
when e.g. $\genlang = \set{\optuple_1\optuple_2}$.  If the rule were split into
two ordinary rules with intermediate control states, $\satstep$ would first add
$\tastate_2$ derived from $\tastate_3$, and then from $\tastate_2$ derive
$\tastate_1$.  In the case of extended CPDSs, the intermediate transition
$\tastate_2$ is not added to $\saauta_{\idxi+1}$, but its effect is still
present in the addition of $\tastate_1$.  Below, we repeat the above intuition
more formally.  Fix a $\cpdsord$-ECPDS $\cpds = \tuple{\controls, \alphabet,
\cpdsrules}$.

\subparagraph*{Transition Automata}

We build a transition automaton from a given $\controls$-stack automaton
$\saauta$.  Let $\saauta$ have order-$\cpdsord$ to order-$1$ state-sets
$\sastateset_\cpdsord, \ldots, \sastateset_1$ and alphabet $\alphabet$, let
$\tastates{\saauta}$ be the set of all
$\satranfull{\sastate}{\cha}{\sastateset_\branch}{\sastateset_1, \ldots,
\sastateset_\cpdsord}$ with $\sastate \in \sastateset_\cpdsord$, for all
$\opord$, $\sastateset_\opord \subseteq \sastates_\opord$, and for some
$\opord$, $\sastateset_\branch \subseteq \sastates_\opord$.

\begin{nameddefinition}{def:tranaut}{Transition Automata}
    Given an order-$\cpdsord$ $\controls$-stack automaton $\saauta$ with
    alphabet $\alphabet$, and $\tastate, \tastate' \in \tastates{\saauta}$, we
    define the transition automaton $\taaut{\saauta}{\tastate}{\tastate'} =
    \tuple{\tastates{\saauta}, \genrulesfull{\controls}{\alphabet}, \tadelta,
    \tastate, \tastate'}$ such that $\tadelta \subseteq \tastates{\saauta}
    \times \genrulesfull{\controls}{\alphabet} \times \tastates{\saauta}$ is the
    smallest set such that $\sat_1 \satran{\cpdsruler} \sat_2 \in \tadelta$ if
    $\sat_1 \in \apply{\auxsat{\cpdsruler}}{\sat_2, \saauta}$.
\end{nameddefinition}

We define $\apply{\lang}{\taaut{\saauta}{\sat}{\sat'}} = \setcomp{\opseq}{\sat
\tarun{\opseq} \sat'}$.

\subparagraph*{Extended Saturation Function}

We now extend the saturation function following the intuition explained above.
For $\tastate =
\satranfull{\sastate_\control}{\cha}{\sastateset_\branch}{\sastateset_1, \ldots,
\sastateset_\cpdsord}$, let $\apply{\tacha}{\tastate} = \cha$ and
$\apply{\tacontrol}{\tastate} = \control$. 

\begin{definition}[Extended Saturation Function $\satstep$]
    The extended $\satstep$ is $\satstep$ from Definition~\ref{def:satstep} plus
    for each extended rule $\cpdsrule{\control}{\cha}{\genlang}{\control'} \in
    \cpdsrules$ and $\sat, \sat'$, we add $\sat$ to $\saauta_{\idxi+1}$ whenever 
    \begin{inparaenum} 
        \item $\apply{\tacontrol}{\sat} = \control$ and $\apply{\tacha}{\sat} =
              \cha$, 
        \item $\sat'$ is a transition of $\saauta_\idxi$ with
              $\apply{\tacontrol}{\sat'} = \control'$, and 
        \item $\genlang \cap \apply{\lang}{\taaut{\saauta_\idxi}{\sat}{\sat'}}
              \neq \emptyset$.  
    \end{inparaenum}
\end{definition}

\begin{namedtheorem}{thm:ecpds-reachability}{Global Reachability of ECPDS}
    Given an ECPDS $\cpds$ and a $\controls$-stack automaton $\saauta_0$, the
    fixed point $\saauta$ of the extended saturation procedure accepts
    $\prestar{\cpds}{\saauta_0}$.
\end{namedtheorem}

In order for the saturation algorithm to be effective, we need to be able to
decide $\genlang \cap \apply{\lang}{\taaut{\saauta_\idxi}{\sat}{\sat'}} \neq
\emptyset$.  We argue in the \shortlong{full paper}{appendix} that number of
transitions added by extended saturation has the same upper bound as the
unextended case.

\section{Multi-Stack CPDSs}

We define a general model of concurrent collapsible pushdown systems, which we
later restrict.  In the sequel, assume a bottom-of-stack symbol $\sbot$ and
define the ``empty'' stacks $\emptystack{0} = \sbot$ and $\emptystack{\opord+1}
= \sbrac{\emptystack{\opord}}{\opord+1}$.  As standard, we assume that $\sbot$
is neither pushed onto, nor popped from, the stack (though may be copied by
$\scopy{\opord}$).

\begin{definition}[Multi-Stack Collapsible Pushdown Systems]
    An order-$\cpdsord$ \emph{multi-stack collapsible pushdown system
    ($\cpdsord$-MCPDS)} is a tuple $\cpds = \tuple{\controls, \alphabet,
    \cpdsrules_1, \ldots, \cpdsrules_\numstacks}$ where $\controls$ is a finite
    set of control states, $\alphabet$ is a finite stack alphabet, and for each
    $1 \leq \idxi \leq \numstacks$ we have a set of rules $\cpdsrules_\idxi
    \subseteq \controls \times \alphabet \times \cops{\cpdsord} \times
    \controls$.
\end{definition}

A configuration of $\cpds$ is a tuple $\config{\control}{\stackw_1, \ldots,
\stackw_\numstacks}$.  There is a transition
$\config{\control}{\stackw_1,\ldots,\stackw_\numstacks} \cpdstran
\config{\control'}{\stackw_1, \ldots, \stackw_{\idxi-1}, \stackw'_\idxi,
\stackw_{\idxi+1}, \ldots, \stackw_\numstacks}$ via
$\cpdsrule{\control}{\cha}{\genop}{\control'} \in \cpdsrules_\idxi$ when $\cha =
\apply{\ctop{1}}{\stackw_\idxi}$ and $\stackw'_\idxi =
\apply{\genop}{\stackw_\idxi}$.

We also need MCPD\emph{Automata}, which are MCPDSs defining languages over an
input alphabet $\oalphabet$.  For this, we add labelling input characters to the
rules.  Thus, a rule $\cpdarule{\control}{\cha}{\ocha}{\genop}{\control'}$ reads
a character $\ocha \in \oalphabet$.  This is defined formally in \shortlong{the
full paper}{Appendix~\ref{sec:multi-appendix}}.  

We are interested in two problems for a given $\cpdsord$-MCPDS $\cpds$.

\begin{definition}[Control State Reachability Problem]
    Given control states $\controlin, \controlout$ of $\cpds$, decide if there
    is for some $\stackw_1, \ldots, \stackw_\numstacks$ a run
    $\config{\controlin}{\emptystack{\cpdsord}, \ldots, \emptystack{\cpdsord}}
    \cpdstran \cdots \cpdstran \config{\controlout}{\stackw_1, \ldots,
    \stackw_\numstacks}$.
\end{definition}

\begin{definition}[Global Control State Reachability Problem]
    Given a control state $\controlout$ of $\cpds$, construct a representation
    of the set of configurations $\config{\control}{\stackw_1, \ldots,
    \stackw_\numstacks}$ such that there exists for some $\stackw'_1, \ldots,
    \stackw'_\numstacks$ a run $\config{\control}{\stackw_1, \ldots,
    \stackw_\numstacks} \cpdstran \cdots \cpdstran
    \config{\controlout}{\stackw'_1, \ldots, \stackw'_\numstacks}$.
\end{definition}

We represent sets of configurations as follows.  In \shortlong{the full
paper}{Appendix~\ref{sec:multi-appendix}} we show it forms an effective boolean
algebra,  membership is linear time, and emptiness is in PSPACE.

\begin{nameddefinition}{def:regular-set-multi}{Regular Set of Configurations} 
    A regular set $\regset$ of configurations of a multi-stack CPDS $\cpds$ is
    definable via a finite set $\regtuples$ of tuples $\tuple{\control,
    \saauta_1, \ldots, \saauta_\numstacks}$ where $\control$ is a control state
    of $\cpds$ and $\saauta_\idxi$ is a stack automaton with designated initial
    state $\sastate_\idxi$ for each $\idxi$.  We have
    $\config{\control}{\stackw_1, \ldots, \stackw_\numstacks} \in \regset$ iff
    there is some $\tuple{\control, \saauta_1, \ldots, \saauta_\numstacks} \in
    \regtuples$ such that $\stackw_\idxi \in
    \slang{\sastate_\idxi}{\saauta_\idxi}$ for each $\idxi$.  
\end{nameddefinition}

Finally, we often partition runs of an MCPDS $\runsegment =
\runsegment_1\ldots\runsegment_\numof$ where each $\runsegment_\idxi$ is a
sequence of configurations of the MCPDS.  A transition from $\configc$ to
$\configc'$ occurs in segment $\runsegment_\idxi$ if $\configc'$ is a
configuration in $\runsegment_\idxi$.  Thus, transitions from
$\runsegment_\idxi$ to $\runsegment_{\idxi+1}$ are said to belong to
$\runsegment_{\idxi+1}$.

\section{Ordered CPDS}
\label{sec:ordered-reach}

We generalise \emph{ordered multi-stack pushdown systems}~\cite{BCCC96}.
Intuitively, we can only remove characters from stack $\idxi$ whenever all
stacks $\idxj < \idxi$ are empty. 

\begin{definition}[Ordered CPDS]
    An order-$\cpdsord$ \emph{ordered CPDS} ($\cpdsord$-OCPDS) is an
    $\cpdsord$-MCPDS $\cpds = \tuple{\controls, \alphabet, \cpdsrules_1, \ldots,
    \cpdsrules_\numstacks}$ such that a transition from
    $\config{\control}{\stackw_1,\ldots,\stackw_\numstacks}$ using the rule
    $\cpdsruler$ on stack $\idxi$ is permitted iff, when $\cpdsruler$ is
    consuming, for all $1 \leq \idxj < \idxi$ we have $\stackw_\idxj =
    \emptystack{\cpdsord}$.
\end{definition}

\begin{theorem}[Decidability of Reachability Problems]
    For $\cpdsord$-OCPDSs the control state reachability problem and the global
    control state reachability problem are decidable.
\end{theorem}

We outline the proofs below.  In \shortlong{the full
paper}{Appendix~\ref{sec:ordered-appendix}} we show control state reachability
uses  $\bigo{\exptower{\numstacks(\cpdsord-1)}{\numof}}$ time, where $\numof$ is
polynomial in the size of the OCPDS, and we have at most
$\bigo{\exptower{\numstacks\cpdsord}{\numof}}$ tuples in the solution to the
global problem.  First observe that reachability can be reduced to reaching
$\config{\controlout}{\emptystack{\cpdsord}, \ldots, \emptystack{\cpdsord}}$ by
clearing the stacks at the end of the run.

\subparagraph*{Control State Reachability}

Using our notion of ECPDS, we may adapt Atig's inductive algorithm for ordered
PDSs~\cite{A12} for the control state reachability problem.  The induction is
over the number of stacks.  W.l.o.g. we assume that all rules
$\cpdsrule{\control}{\sbot}{\genop}{\control'}$ of $\cpds$ have $\genop =
\cpush{\cha}{\cpdsord}$.

In the base case, we have an $\cpdsord$-OCPDS with a single stack, for which the
global reachability problem is known to be decidable (e.g.~\cite{BEM97}).  

In the inductive case, we have an $\cpdsord$-OCPDS $\cpds$ with $\numstacks$
stacks.  By induction, we can decide the reachability problem for
$\cpdsord$-OCPDSs with fewer than $\numstacks$ stacks.  We first show how to
reduce the problem to reachability analysis of an extended CPDS, and then
finally we show how to decide $\genlang \cap
\apply{\lang}{\taaut{\saauta_\idxi}{\sat}{\sat'}} \neq \emptyset$ using an
$\cpdsord$-OCPDS with $(\numstacks - 1)$ stacks.

Consider the $\numstacks$th stack of $\cpds$.  A run of $\cpds$ can be split
into
$\runsegment_1\altrunsegment_1\runsegment_2\altrunsegment_2\ldots\runsegment_\numof\altrunsegment_\numof$.
During the subruns $\runsegment_\idxi$, the first $(\numstacks-1)$ stacks are
non-empty, and during $\altrunsegment_\idxi$, the first $(\numstacks-1)$ stacks
are empty.  Moreover, during each $\runsegment_\idxi$,  only generating
operations may occur on stack $\numstacks$.

We build an extended CPDS that directly models the $\numstacks$th stack during
the $\altrunsegment_\idxi$ segments where the first $(\numstacks-1)$ stacks are
empty, and uses rules of the form
$\cpdsrule{\control}{\cha}{\genlang}{\control'}$ to encapsulate the behaviour of
the $\runsegment_\idxi$ sections where the first $(\numstacks-1)$ stacks are
non-empty.  The $\genlang$ attached to such a rule is the sequence of updates
applied to the $\numstacks$th stack during $\runsegment_\idxi$.

We begin by defining, from the OCPDS $\cpds$ with $\numstacks$ stacks, an OCPDA
$\leftcpda$ with $(\numstacks-1)$ stacks.  This OCPDA will be used to define the
$\genlang$ described above.  $\leftcpda$ simulates a segment
$\runsegment_\idxi$. Since all updates to stack $\numstacks$ in
$\runsegment_\idxi$ are generating,
$\leftcpda$ need only track its top character, hence only keeps $(\numstacks-1)$
stacks.  The top character of stack $\numstacks$ is kept in the control state,
and the operations that would have occurred on stack $\numstacks$ are output.

\begin{definition}[$\leftcpda$]
    Given an $\cpdsord$-OCPDS $\cpds = \tuple{\controls, \alphabet,
    \cpdsrules_1, \ldots, \cpdsrules_\numstacks}$, we define $\leftcpda$ to be
    an $\cpdsord$-OCPDA with $(\numstacks-1)$ stacks $\tuple{\controls \times
    \alphabet, \alphabet, \cpdsrules'_1 \cup \cpdsrules', \cpdsrules'_2, \ldots,
    \cpdsrules'_{\numstacks-1}}$ over input alphabet $\genrules$ where for all
    $\idxi$ 
    \[
        \cpdsrules'_\idxi = \setcomp{\cpdarule{\tuple{\control,
        \cha}}{\chb}{\cpdsrule{\control}{\cha}{\noop}{\control'}}{\genop}{\tuple{\control',
        \cha}}}{\cha \in \alphabet \land
        \cpdsrule{\control}{\chb}{\genop}{\control'} \in \cpdsrules_\idxi}
        \text{, and}
    \]
    \[
        \begin{array}{rcl}
            \cpdsrules' &=& 
            
            \setcomp{\cpdarule{\tuple{\control,
            \cha}}{\chb}{\cpdsruler}{\noop}{\tuple{\control', \chc}}}{\chb \in
            \alphabet \land \cpdsruler =
            \cpdsrule{\control}{\cha}{\rew{\chc}}{\control'} \in
            \cpdsrules_\numstacks}\ \cup \\

            & & 

            \setcomp{\cpdarule{\tuple{\control,
            \cha}}{\chb}{\cpdsruler}{\noop}{\tuple{\control', \cha}}}{\chb \in
            \alphabet \land \cpdsruler =
            \cpdsrule{\control}{\cha}{\scopy{\opord}}{\control'} \in
            \cpdsrules_\numstacks}\ \cup \\ 
            
            & &
            
            \setcomp{\cpdarule{\tuple{\control,
            \cha}}{\chb}{\cpdsruler}{\noop}{\tuple{\control', \chc}}}{\chb \in
            \alphabet \land \cpdsruler =
            \cpdsrule{\control}{\cha}{\cpush{\chc}{\opord}}{\control'} \in
            \cpdsrules_\numstacks}\ \cup \\

            & &
            
            \setcomp{\cpdarule{\tuple{\control,
            \cha}}{\chb}{\cpdsruler}{\noop}{\tuple{\control', \cha}}}{\chb \in
            \alphabet \land \cpdsruler =
            \cpdsrule{\control}{\cha}{\noop}{\control'} \in
            \cpdsrules_\numstacks} \ .
        \end{array}
    \]
\end{definition}

We define the language
$\cpdalang{\leftcpda}{\control}{\cha}{\control'}{\chb}{\idxi}$ to be the set of
words $\ocha_1\ldots\ocha_\numof$ such that there exists a run of $\leftcpda$
over input $\ocha_1\ldots\ocha_\numof$ from $\config{\tuple{\control,
\cha}}{\stackw_1, \ldots, \stackw_{\numstacks-1}}$ to $\config{\tuple{\control',
\chc}}{\emptystack{\cpdsord}, \ldots, \emptystack{\cpdsord}}$ for some $\chc$,
where $\stackw_\idxi = \apply{\cpush{\chb}{\cpdsord}}{\emptystack{\cpdsord}}$
and $\stackw_\idxj = \emptystack{\cpdsord}$ for all $\idxj \neq \idxi$.  This
language describes the effect on stack $\numstacks$ of a run $\runsegment_\idxj$
from $\control$ to $\control'$.  (Note, by assumption, all $\runsegment_\idxj$
start with some $\cpush{\chb}{\cpdsord}$.)

We now define the extended CPDS $\rightcpds$ that simulates $\cpds$ by keeping
track of stack $\numstacks$ in its stack and using extended rules based on
$\leftcpda$ to simulate parts of the run where the first $(\numstacks-1)$ stacks
are not all empty.  Note, since all
rules operating on $\sbot$ (i.e.
$\cpdsrule{\control}{\sbot}{\genop}{\control'}$) have $\genop =
\cpush{\chb}{\cpdsord}$, rules from $\cpdsrules_1, \ldots,
\cpdsrules_{\numstacks-1}$ may only fire during (or at the start of) the
segments where the first $(\numstacks-1)$ stacks are non-empty (and thus appear
in $\cpdsrules_\genlang$ below).

\begin{definition}[$\rightcpds$]
    Given an $\cpdsord$-OCPDS $\cpds = \tuple{\controls \times \alphabet,
    \alphabet, \cpdsrules_1, \ldots, \cpdsrules_\numstacks}$ with $\numstacks$
    stacks, we define $\rightcpds$ to be an $\cpdsord$-ECPDS such that
    $\rightcpds = \tuple{\controls, \alphabet, \cpdsrules'}$ where $\cpdsrules'
    = \cpdsrules_\numstacks \cup \cpdsrules_\genlang$ and
    \[
        \cpdsrules_\genlang =
        \setcomp{\cpdsrule{\control}{\cha}{\cpdalang{\leftcpda}{\control_1}{\cha}{\control_2}{\chb}{\idxi}}{\control_2}}{\cha
        \in \alphabet \land
        \cpdsrule{\control}{\sbot}{\cpush{\chb}{\cpdsord}}{\control_1} \in
        \cpdsrules_\idxi \land 1 \leq \idxi < \numstacks}
    \]
\end{definition}

\begin{namedlemma}{lem:ecpds-sim-ocpds}{$\rightcpds$ simulates $\cpds$}
    Given an $\cpdsord$-OCPDS $\cpds$ and control states $\controlin,
    \controlout$, we have  $\config{\controlin}{\stackw} \in
    \prestar{\rightcpds}{\saauta}$, where $\saauta$ is the $\controls$-stack
    automaton accepting only the configuration
    $\config{\controlout}{\emptystack{\cpdsord}}$ iff
    $\config{\controlin}{\emptystack{\cpdsord}, \ldots, \emptystack{\cpdsord},
    \stackw} \cpdstran \cdots \cpdstran
    \config{\controlout}{\emptystack{\cpdsord}, \ldots, \emptystack{\cpdsord}}$.
\end{namedlemma}

Lemma~\ref{lem:ecpds-sim-ocpds} only gives an effective decision procedure if we
can decide $\genlang \cap \apply{\lang}{\taaut{\saauta_\idxi}{\sat}{\sat'}} \neq
\emptyset$ for all rules $\cpdsrule{\control}{\cha}{\genlang}{\control'}$
appearing in $\rightcpds$.  For this, we use a standard product construction
between the $\leftcpda$ associated with $\genlang$, and
$\taaut{\saauta_\idxi}{\sat}{\sat'}$.  This gives an ordered CPDS with
$(\numstacks-1)$ stacks, for which, by induction over the number of stacks,
reachability (and emptiness) is decidable.  Note, the initial transition of the
construction sets up the initial stacks of $\leftcpda$.

\begin{nameddefinition}{def:langcheckcpds}{$\langcheckcpds$}
    Given the non-emptiness problem
    $\cpdalang{\leftcpda}{\control_1}{\cha}{\control_2}{\chb}{\idxi} \cap
    \apply{\lang}{\taaut{\saauta_\idxi}{\sat}{\sat'}} \neq \emptyset$, where
    $\apply{\ctop{1}}{\sat} = \cha$, $\leftcpda = \tuple{\controls \times
    \alphabet, \alphabet, \cpdsrules_1, \ldots, \cpdsrules_{\numstacks-1}}$ and
    $\taaut{\saauta_\idxi}{\sat}{\sat'} = \tuple{\tastates{\saauta_\idxi},
    \genrules, \tadelta, \sat, \sat'}$, we define an $\cpdsord$-OCPDS
    $\langcheckcpds = \tuple{\controls^\emptyset, \alphabet,
    \cpdsrules^\emptyset_1, \ldots, \cpdsrules^\emptyset_\idxi \cup
    \cpdsrules_{I/O}, \ldots, \cpdsrules^\emptyset_{\numstacks-1}}$ where, for
    all $1 \leq \idxi \leq (\numstacks - 1)$,
    \begin{align*}
        \controls^\emptyset &= \set{\control_1, \control_2} \uplus
        \setcomp{\tuple{\control, \sat_1}}{\sat_1 \in \tastates{\saauta_\idxi}
        \land \apply{\tacontrol}{\sat_1} = \control} \ ,  \\
        \cpdsrules_{I/O} &=
        \set{\cpdsrule{\control_1}{\sbot}{\cpush{\chb}{\cpdsord}}{\tuple{\control_1,
        \sat}}} \cup \setcomp{\cpdsrule{\tuple{\control_2,
        \sat}}{\sbot}{\noop}{\control_2}}{\sat \in \tastates{\saauta_\idxi}} \ ,
        \text{ and}\\
        \cpdsrules^\emptyset_\idxi &= \setcomp{\cpdsrule{\tuple{\control,
        \sat_1}}{\chc}{\genop}{\tuple{\control',
        \sat_2}}}{\cpdarule{\tuple{\control,
        \apply{\ctop{1}}{\sat_1}}}{\chc}{\optuple}{\genop}{\tuple{\control',
        \apply{\ctop{1}}{\sat_2}}} \in \cpdsrules_\idxi \land \tuple{\sat_1,
        \optuple, \sat_2} \in \sadelta}  
    \end{align*}
\end{nameddefinition}

\begin{namedlemma}{lem:ocpds-lang-emp}{Language Emptiness for OCPDS}
    We have $\cpdalang{\leftcpda}{\control_1}{\cha}{\control_2}{\chb}{\idxi}
    \cap \apply{\lang}{\taaut{\saauta_\idxi}{\sat}{\sat'}} \neq \emptyset$ iff,
    in $\langcheckcpds$ from Definition~\ref{def:langcheckcpds}, we have that
    $\config{\control_2}{\emptystack{\cpdsord}, \ldots \emptystack{\cpdsord}}$
    is reachable from $\config{\control_1}{\emptystack{\cpdsord}, \ldots,
    \emptystack{\cpdsord}}$.
\end{namedlemma}

\subparagraph*{Global Reachability}

We sketch a solution to the global reachability problem, giving a full proof in
\shortlong{the full paper}{Appendix~\ref{sec:ordered-appendix}}.  From
\reflemma{lem:ecpds-sim-ocpds} we gain a representation $\saauta_\numstacks =
\prestar{\rightcpds}{\saauta}$ of the set of configurations
$\config{\control}{\emptystack{\cpdsord}, \ldots, \emptystack{\cpdsord},
\stackw_\numstacks}$ that have a run to
$\config{\controlout}{\emptystack{\cpdsord}, \ldots, \emptystack{\cpdsord}}$.
Now take any $\config{\control}{\emptystack{\cpdsord}, \ldots,
\emptystack{\cpdsord}, \stackw_{\numstacks - 1}, \stackw_\numstacks}$ that
reaches $\config{\controlout}{\emptystack{\cpdsord}, \ldots,
\emptystack{\cpdsord}}$.  The run must pass some
$\config{\control'}{\emptystack{\cpdsord}, \ldots, \emptystack{\cpdsord},
\stackw'_\numstacks}$ with $\config{\control'}{\stackw'_\numstacks}$ accepted by
$\saauta_\numstacks$.  From the product construction above, one can (though not
immediately) extract a tuple $\tuple{\control, \saauta_{\numstacks-1},
\saauta'_\numstacks}$ such that $\stackw_{\numstacks-1}$ is accepted by
$\saauta_{\numstacks-1}$ and $\stackw_\numstacks$ is accepted by
$\saauta'_\numstacks$.  We repeat this reasoning down to stack $1$ and obtain a
tuple of the form $\tuple{\control, \saauta_1, \ldots, \saauta_\numstacks}$.  We
can only obtain a finite set of tuples in this manner, giving a solution to the
global reachability problem.

\section{Scope-Bounded CPDS} 
\label{sec:scope-reach}

Recently, scope-bounded multi-pushdown systems were introduced~\cite{lTN11} and
their reachability problem was shown to be decidable.  Furthermore, reachability
for scope- and phase-bounding was shown to be incomparable~\cite{lTN11}.  Here
we consider scope-bounded CPDS.  

A run $\runsegment = \runsegment_1\ldots\runsegment_\numof$ of an MCPDS is
\emph{context-partitionable} when, for each $\runsegment_\idxi$, if a transition
in $\runsegment_\idxi$ is via $\cpdsruler \in \cpdsrules_\idxj$ on stack
$\idxj$, then all transitions of $\runsegment_\idxi$ are via rules in
$\cpdsrules_\idxj$ on stack $\idxj$.  A \emph{round} is a context-partitioned
run $\runsegment_1\ldots\runsegment_\numstacks$, where during
$\runsegment_\idxi$ only $\cpdsrules_\idxi$ is used.  A
\emph{round-partitionable} run can be partitioned
$\runsegment_1\ldots\runsegment_\numof$ where each $\runsegment_\idxi$ is a
round.  A run of an SBCPDS is such that any character or stack removed from a
stack must have been created at most $\numscopes$ rounds earlier.  For this, we
define pop- and collapse-rounds for stacks.  That is, we mark each stack and
character with the round in which it was created.  When we copy a stack via
$\scopy{\opord}$, the pop-round of the new copy of the stack is the current
round.  However, all stacks and characters within the copy of $\stacku$ keep the
same pop- and collapse-round as in the original $\stacku$.

E.g. take $\sbrac{\stacku}{2}$ where $\stacku = \sbrac{\cha \chb}{1}$, $\stacku$
and $\cha$ have pop-round $2$, and $\chb$ has pop-round $1$.  Suppose in round
$3$ we use $\scopy{2}$ to obtain $\sbrac{\stacku \stacku}{2}$.  The new copy of
$\stacku$ has pop-round $3$ (the current round), but the $\cha$ and $\chb$
appearing in the copy of $\stacku$ still have pop-rounds $2$ and $1$
respectively.  If the scope-bound is $2$, the latest each $\cha$ and the
original $\stacku$ could be popped is in round $4$, but the new $\stacku$ may be
popped in round $5$.  

We will write $\sbstack{\sbpr}{\stackw}$ for a stack $\stackw$ with pop-round
$\sbpr$ and $\sbchar{\sbpr}{\sbcr}{\cha}$ for a character with pop-round $\sbpr$
and collapse-round $\sbcr$.  Pop- and collapse-rounds will be sometimes omitted
for clarity.  Note, the outermost stack will always have pop-round $0$.  In
particular, for all $\ccompose{\stacku}{\opord}{\stackv}$ in the definition
below, the pop-round of $\stackv$ is 0.  

\begin{definition}[Pop- and Collapse-Round]
    Given a round-partitioned run $\runsegment_1 \ldots \runsegment_\numof$ we
    define inductively the pop- and collapse-rounds.  The pop- and
    collapse-round of each stack and character in the first configuration of
    $\runsegment_1$ is $0$.  Take a transition $\config{\control}{\stackw}
    \cpdstran \config{\control'}{\stackw'}$ with $\config{\control'}{\stackw'}$
    in $\runsegment_\idxz$ via a rule
    $\cpdsrule{\control}{\cha}{\genop}{\control'}$.  If $\genop = \noop$ then
    $\stackw = \stackw'$, otherwise when
    \begin{compactenum}
        \item $\genop = \scopy{\opord}$ and $\stackw =
            \ccompose{\sbstack{\sbpr}{\stacku}}{\opord}{\stackv}$, then
            $\stackw' =
            \ccompose{\sbstack{\idxz}{\stacku}}{\opord}{(\ccompose{\sbstack{\sbpr}{\stacku}}{\opord}{\stackv})}$
            where $\sbstack{\idxz}{\stacku} =
            \sbstack{\idxz}{\sbrac{\sbstack{\sbpr_1}{\stacku_1}\ldots\sbstack{\sbpr_\numof}{\stacku_\numof}}{\opord-1}}$
            when  $\sbstack{\sbpr}{\stacku} =
            \sbstack{\sbpr}{\sbrac{\sbstack{\sbpr_1}{\stacku_1}\ldots\sbstack{\sbpr_\numof}{\stacku_\numof}}{\opord-1}}$.

        \item $\genop = \cpush{\chb}{\opord}$, then $\stackw' =
              \ccompose{\sbchar{\idxz}{\sbcr}{\annot{\chb}{\sbstackbr{\sbpr'}{\stacku}}}}{1}{\stackw}$
              where $\sbstack{\sbpr'}{\stacku} =
              \apply{\ctop{\opord+1}}{\apply{\pop{\opord}}{\stackw}}$ and
              $\sbcr$ is the pop-round of $\apply{\ctop{\opord}}{\stackw}$.
              (Note, when $\opord = \cpdsord$, we know $\sbpr' = 0$ since the
              $\ctop{\cpdsord+1}$ stack is outermost.)

        \item $\genop = \pop{\opord}$, when $\stackw =
              \ccompose{\stacku}{\opord}{\stackv}$ then $\stackw' = \stackv$.

        \item We set
              $\apply{\collapse{\opord}}{\ccompose{\annot{\cha}{\sbstackbr{\sbpr}{\stacku'}}}{1}{\ccompose{\stacku}{(\opord+1)}{\stackv}}}=
              \ccompose{\sbstack{\sbpr}{\stacku'}}{(\opord+1)}{\stackv}$ when
              $\stacku$ is order-$\opord$ and $1 \leq \opord < \cpdsord$; and
              $\apply{\collapse{\cpdsord}}{\ccompose{\annot{\cha}{\sbstackbr{0}{\stacku}}}{1}{\stackv}}
              =  \sbstack{0}{\stacku}$ when $\stacku$ is order-$\cpdsord$.

        \item $\genop = \rew{\chb}$ and $\stackw =
              \ccompose{\sbchar{\sbpr}{\sbcr}{\annot{\cha}{\sbstackbr{\sbpr'}{\stacku}}}}{1}{\stackv}$,
              then $\stackw' =
              \ccompose{\sbchar{\sbpr}{\sbcr}{\annot{\chb}{\sbstackbr{\sbpr'}{\stacku}}}}{1}{\stackv}$.
    \end{compactenum}
\end{definition}

\begin{definition}[Scope-Bounded CPDS]
    A $\numscopes$-scope-bounded $\cpdsord$-CPDS ($\cpdsord$-SBCPDS) $\cpds$ is
    an order-$\cpdsord$ MCPDS whose runs are all runs of $\cpds$ that are
    round-partitionable, that is $\runsegment_1\ldots\runsegment_\numof$, such
    that for all $\idxz$, if a transition in $\runsegment_\idxz$ from
    $\config{\control}{\stackw}$ to $\config{\control'}{\stackw'}$ is 
    \begin{compactenum}
        \item a $\pop{\opord}$ transition with $1 < \opord \leq \cpdsord$ and
            $\stackw = \ccompose{\sbstack{\sbpr}{\stacku}}{\opord}{\stackv}$,
            then $\idxz - \numscopes \leq \sbpr$,
        \item a $\pop{1}$ transition with $\stackw =
            \ccompose{\sbchar{\sbpr}{\sbcr}{\annot{\cha}{\stacku}}}{1}{\stackv}$,
            then $\idxz - \numscopes \leq \sbpr$, or
        \item  a $\collapse{\opord}$ transition with $\stackw =
            \ccompose{\sbchar{\sbpr}{\sbcr}{\annot{\cha}{\stacku}}}{1}{\stackv}$,
            then $\idxz - \numscopes \leq \sbcr$.  
    \end{compactenum}
\end{definition}

La Torre and Napoli's decidability proof for the order-$1$ case already uses the
saturation method~\cite{lTN11}.  However, while La Torre and Napoli use a
forwards-reachability analysis, we must use a backwards analysis.  This is
because the forwards-reachable set of configurations is in general not regular.
We thus perform a backwards analysis for CPDS, resulting in a similar approach.
However, the proofs of correctness of the algorithm are quite different.  

\begin{theorem}[Decidability of Reachability Problems]
    For $\cpdsord$-OCPDSs the control state reachability problem and the global
    control state reachability problem are decidable.
\end{theorem}

In \shortlong{the full paper}{Appendix~\ref{sec:ordered-appendix}} we show our
non-global algorithm requires $\bigo{\exptower{\cpdsord-1}{\numof}}$ space,
where $\numof$ is polynomial in $\numscopes$ and the size of the SBCPDS, and we
have at most $\bigo{\exptower{\cpdsord}{\numof}}$ tuples in the global
reachability solution.  La Torre and Parlato give an alternative control state
reachability algorithm at order-$1$ using \emph{thread interfaces}, which allows
sequentialisation~\cite{lTP12} and should generalise order-$\cpdsord$, but, does
not solve the global reachability problem.

\subparagraph*{Control State Reachability}

Fix initial and target control states $\controlin$ and $\controlout$.  The
algorithm first builds a \emph{reachability graph}, which is a finite graph with
a certain kind of path iff $\controlout$ can be reached from $\controlin$.  To
build the graph, we define layered stack automata. These have states
$\salyrst{\sastate_\control}{\idxi}$ for each $1 \leq \idxi \leq \numscopes$
which represent the stack contents $\idxi$ rounds later.  Thus, a layer
automaton tracks the stack across $\numscopes$ rounds, which allows analysis of
scope-bounded CPDSs.

\begin{definition}[$\numscopes$-Layered Stack Automata]
    A \emph{$\numscopes$-layered stack automaton} is a stack automaton $\saauta$
    such that $\sastates_\cpdsord =
    \setcomp{\salyrst{\sastate_\control}{\idxi}}{\control \in \controls \land 1
    \leq \idxi \leq \numscopes}$.
\end{definition}

A state $\salyrst{\sastate_\control}{\idxi}$ is of layer $\idxi$.  A state
$\sastate'$ labelling $\sastate \satran{\sastate'} \sastateset$ has the same
layer as $\sastate$.  We require that there is no $\sastate \satran{\sastate'}
\sastateset$ with $\sastate'' \in \sastateset$ where $\sastate$ is of layer
$\idxi$ and $\sastate''$ is of layer $\idxj < \idxi$.  Similarly, there is no
$\sastate \satrancol{\cha}{\sastateset_\branch} \sastateset$ with $\sastate' \in
\sastateset \cup \sastateset_\branch$ where $\sastate$ is of layer $\idxi$ and
$\sastate'$ is of layer $\idxj < \idxi$.

Next, we define several operations from which the reachability graph is
constructed.  The $\lsapredecessor{\idxj}$ operation connects stack $\idxj$
between two rounds.  We define for stack $\idxj$ 
\[
    \apply{\lsapredecessor{\idxj}}{\saauta, \sastate_\control,
    \sastate_{\control'}} =
    \apply{\lsasaturate{\idxj}}{\apply{\lsaenvmove}{\apply{\lsashift}{\saauta},
    \salyrst{\sastate_{\control_1}}{1}, \salyrst{\sastate_{\control_2}}{2}}} 
\]
where definitions of $\lsashift$, $\lsaenvmove$ and $\lsasaturate{\idxj}$ are
given in \shortlong{the full paper}{Appendix~\ref{sec:scope-appendix}}.
$\lsashift$ moves transitions in layer $\idxi$ to layer $(\idxi+1)$.  E.g.
$\salyrst{\sastate_\control}{1} \satran{\sastate}
\set{\salyrst{\sastate_{\control'}}{2}}$ would become
$\salyrst{\sastate_\control}{2} \satran{\sastate}
\set{\salyrst{\sastate_{\control'}}{3}}$.  Moreover, transitions involving
states in layer $\numscopes$ are removed.  This is because the stack elements in
layer $\numscopes$ will ``go out of scope''.  $\lsaenvmove$ adds a new
transition (analogously to a
$\cpdsrule{\control_1}{\cha}{\rew{\cha}}{\control_2}$ rule) corresponding to the
control state change from $\control_1$ to $\control_2$ effected by the runs over
the other stacks between the current round and the next (hence layers $1$ and
$2$ in the definition above).  $\lsasaturate{\idxj}$ gets by saturation all
configurations of stack $\idxj$ that can reach via $\cpdsrules_\idxj$ the stacks
accepted from the layer-$1$ states of its argument (i.e. saturation using
initial states $\setcomp{\salyrst{\sastate_\control}{1}}{\control \in
\controls}$, which accept stacks from the next round).

The current layer automaton represents a stack across up to $\numscopes$ rounds.
The predecessor operation adds another round on to the front of this
representation.  A key new insight in our proofs is that if a transition goes to
a layer $\idxi$ state, then it represents part of a run where the stack read by
the transition is removed in $\idxi$ rounds time.  Thus, if we add a transition
at layer $0$ (were it to exist) that depends on a transition of layer
$\numscopes$, then the push or copy operation would have a corresponding pop
$(\numscopes+1)$ scopes away.  Scope-bounding forbids this.

\subparagraph*{The Reachability Graph}

The reachability graph $\reachgraph{\cpds}{\controlout} = \tuple{\vertices,
\edges}$ has vertices $\vertices$ and edges $\edges$.  Firstly, $\vertices$
contains some \emph{initial} vertices $\tuple{\control_0, \saauta_1, \control_1,
\ldots, \control_{\numstacks-1}, \saauta_\numstacks, \control_\numstacks}$ where
$\control_\numstacks = \controlout$, and for all $1 \leq \idxi \leq \numstacks$
we have that $\saauta_\idxi$ is the layer automaton
$\apply{\lsasaturate{\idxi}}{\saauta}$ where for all $\stackw$, $\saauta$
accepts $\config{\control_\idxi}{\stackw}$ from
$\salyrst{\sastate_{\control_\idxi}}{1}$.  Furthermore, we require that there is
some $\stackw$ such that $\config{\control_{\idxi-1}}{\stackw}$ is accepted by
$\saauta_\idxi$ from $\salyrst{\sastate_{\control_\idxi}}{1}$.  That is, there
is a run from $\config{\control_{\idxi-1}}{\stackw}$ to $\control_\idxi$.
Intuitively, initial vertices model the final round of a run to $\controlout$
with context switches at $\control_0, \ldots, \control_\numstacks$.  

The complete set $\vertices$ is the set of all tuples $\tuple{\control_0,
\saauta_1, \control_1, \ldots, \control_{\numstacks-1}, \saauta_\numstacks,
\control_\numstacks}$ where there is some $\stackw$ such that
$\config{\control_{\idxi-1}}{\stackw}$ is accepted by $\saauta_\idxi$ from state
$\salyrst{\sastate_{\control_{\idxi-1}}}{1}$.  To ensure finiteness, we can
bound $\saauta_\idxi$ to at most $\sbmax$ states.  The value of $\sbmax$ is
$\bigo{\exptower{\cpdsord-1}{\numof}}$ where $\numof$ is polynomial in
$\numscopes$ and the size of $\cpds$.  We give a full definition of $\sbmax$ and
proof in \shortlong{the full paper}{Appendix~\ref{sec:scope-appendix}}.

We have an edge from a vertex $\tuple{\control_0, \saauta_1, \ldots,
\saauta_\numstacks, \control_\numstacks}$ to $\tuple{\control'_0, \saauta'_1,
\ldots, \saauta'_\numstacks, \control'_\numstacks}$ whenever
$\control_\numstacks = \control'_0$ and for all $\idxi$ we have $\saauta_\idxi =
\apply{\lsapredecessor{\idxi}}{\saauta'_\idxi, \sastate_{\control_\idxi},
\sastate_{\control'_{\idxi-1}}}$.  An edge means the two rounds can be
concatenated into a run since the control states and stack contents match up.

\begin{namedlemma}{lem:scope-reach-graph}
                  {Simulation by $\reachgraph{\cpds}{\controlout}$}  
    Given a scope-bounded CPDS $\cpds$ and control states $\controlin,
    \controlout$, there is a run of $\cpds$ from $\config{\controlin}{\stackw_1,
    \ldots, \stackw_\numstacks}$ to $\config{\controlout}{\stackw'_1, \ldots,
    \stackw'_\numstacks}$ for some $\stackw'_1, \ldots, \stackw'_\numstacks$ iff
    there is a path in $\reachgraph{\cpds}{\controlout}$ to a vertex
    $\tuple{\control_0, \saauta_1, \ldots, \saauta_\numstacks,
    \control_\numstacks}$ with $\control_0 = \controlin$ from an initial vertex where for all
    $\idxi$ we have $\config{\control_{\idxi-1}}{\stackw_\idxi}$ accepted from
    $\salyrst{\sastate_{\control_\idxi}}{1}$ of $\saauta_\idxi$.
\end{namedlemma}

\subparagraph*{Global Reachability}

The $\tuple{\control_0, \saauta_1, \control_1, \ldots,
\control_{\numstacks-1}, \saauta_\numstacks, \control_\numstacks}$ in
$\reachgraph{\cpds}{\controlout}$ reachable from an initial vertex are finite in
number.  We know by Lemma~\ref{lem:scope-reach-graph} that there is such a
vertex accepting all $\config{\control_{\idxi-1}}{\stackw_\idxi}$ iff
$\config{\control_0}{\stackw_1, \ldots, \stackw_\numstacks}$ can reach the
target control state.  Let $\regtuples$ be the set of tuples $\tuple{\control_0,
\saauta_1, \ldots, \saauta_\numstacks}$ for each reachable vertex as above,
where $\saauta_\idxi$ is restricted to the initial state
$\salyrst{\sastate_{\control_{\idxi-1}}}{1}$.  This is a regular solution to the
global control state reachability problem.

\section{Conclusion}

We have shown decidability of global reachability for ordered and scope-bounded
collapsible pushdown systems (and phase-bounded in the \shortlong{full
article}{appendix}).  This leads to a challenge to find a general framework
capturing these systems.  Furthermore, we have only shown upper-bound results.
Although, in the case of phase-bounded systems, our upper-bound matches that of
Seth for CPDSs without collapse~\cite{S09}, we do not know if it is optimal.
Obtaining matching lower-bounds is thus an interesting though non-obvious
problem.  Recently, a more relaxed notion of scope-bounding has been
studied~\cite{lTN12}.  It would be interesting to see if we can extend our
results to this notion.  We are also interested in developing and implementing
algorithms that may perform well in practice.

\vspace{-1ex}

\subparagraph*{Acknowledgments}

Many thanks for initial discussions with Arnaud Carayol and to the
referees for their helpful remarks.  This work was supported by Fond.  Sci.
Math. Paris; AMIS [ANR 2010 JCJC 0203 01 AMIS]; FREC [ANR 2010 BLAN 0202 02
FREC]; VAPF (R\'egion IdF); and the Engineering and Physical Sciences Research
Council [EP/K009907/1].

\vspace{-2ex}

\bibliographystyle{abbrv} 
\bibliography{references.bib}

\newpage
\appendix

\section{Undecidability of MSO Over The Naive Encoding of Order-$2$ Stacks}
\label{sec:mso-undec}

We show that the naive graph representation of an order-$2$ stack leads to the
undecidability of MSO.  By naive graph representation we mean a graph where each
node is a configuration on a run of the CPDS, and we have an edge labelled $S$
between $\configc_1$ and $\configc_2$ if the configurations are neighbouring on
the run.  We have an further edge labelled $1$ if $\configc_2$ was obtained by
popping a character via $\pop{1}$ that was first pushed on to the stack by a
$\cpush{\cha}{\opord}$ at node $\configc_1$.  More formally, we define the
\emph{originating configuration} for each character.

\begin{definition}[Originating Configuration]
    Given a run as a sequence of configurations $\configc_1, \configc_2, \ldots$
    we define inductively the originating configuration of each character.  The
    originating configuration of each character in $\configc_1$ is $1$.  Take a
    transition $\configc_\idxi \cpdstran \configc_{\idxi+1}$ via a rule
    $\cpdsrule{\control}{\cha}{\genop}{\control'}$.  If 
    \begin{enumerate}
        \item $\genop = \scopy{\opord}$, then each character copied inherits its
              originating configuration from the character it is a copy of.  All
              other characters keep the same originating configuration.

        \item $\genop = \cpush{\chb}{\opord}$, all characters maintain the same
              originating configuration except the new $\chb$ character that has
              originating configuration $\idxi$.

        \item $\genop = \rew{\chb}$, all characters maintain the same
              originating configuration except the new $\chb$ character that has
              the originating configuration of the $\cha$ character it is
              replacing.

        \item $\genop = \noop, \pop{\opord}$ or $\collapse{\opord}$, all
              originating configurations are inherited from the previous stack.
    \end{enumerate}
\end{definition}

Thus, from a run $\configc_1, \configc_2, \ldots$ we define a graph
$\tuple{\vertices, \edges_1, \edges_2}$ with vertices $\vertices =
\set{\configc_1, \configc_2, \ldots}$ and edge sets $\edges_1$ and $\edges_2$,
where $\edges_1 = \setcomp{\tuple{\configc_\idxi, \configc_{\idxi+1}}}{1 \leq
\idxi}$ and $\edges_2$ contains all pairs $\tuple{\configc_\idxi,
\configc_\idxj}$ where $\configc_\idxj$ was obtained by a $\pop{1}$ from
$\configc_{\idxj-1}$ and the originating configuration of the character removed
is $\idxi$.

Now, consider the CPDS generating the following run 
\[
    \begin{array}{l}
        \config{p_0}{\sbrac{\sbrac{\sbot}{1}}{2}} \cpdstran
        \config{p_1}{\sbrac{\sbrac{\cha\sbot}{1}}{2}} \cpdstran
        \config{p_2}{\sbrac{\sbrac{\cha\sbot}{1}\sbrac{\cha\sbot}{1}}{2}}
        \cpdstran \config{p_2}{\sbrac{\sbrac{\sbot}{1}\sbrac{\cha\sbot}{1}}{2}}
        \cpdstran  \\

        \ \\
        
        \config{p_0}{\sbrac{\sbrac{\cha\sbot}{1}}{2}} \cpdstran
        \config{p_1}{\sbrac{\sbrac{\cha\cha\sbot}{1}}{2}} \cpdstran
        \config{p_2}{\sbrac{\sbrac{\cha\cha\sbot}{1}\sbrac{\cha\cha\sbot}{1}}{2}}
        \cpdstran
        \config{p_2}{\sbrac{\sbrac{\cha\sbot}{1}\sbrac{\cha\cha\sbot}{1}}{2}}
        \cpdstran \\

        \config{p_2}{\sbrac{\sbrac{\sbot}{1}\sbrac{\cha\cha\sbot}{1}}{2}}
        \cpdstran \\
        
        \ \\ 
        
        \config{p_0}{\sbrac{\sbrac{\cha\cha\sbot}{1}}{2}} \cpdstran
        \config{p_1}{\sbrac{\sbrac{\cha\cha\cha\sbot}{1}}{2}} \cpdstran
        \config{p_2}{\sbrac{\sbrac{\cha\cha\cha\sbot}{1}\sbrac{\cha\cha\cha\sbot}{1}}{2}}
        \cpdstran
        \config{p_2}{\sbrac{\sbrac{\cha\cha\sbot}{1}\sbrac{\cha\cha\cha\sbot}{1}}{2}}
        \\

        \cpdstran
        \config{p_2}{\sbrac{\sbrac{\cha\sbot}{1}\sbrac{\cha\cha\cha\sbot}{1}}{2}}
        \cpdstran
        \config{p_2}{\sbrac{\sbrac{\sbot}{1}\sbrac{\cha\cha\cha\sbot}{1}}{2}}
        \cpdstran \\

        \ \\

        \config{p_0}{\sbrac{\sbrac{\cha\cha\cha\sbot}{1}}{2}} \cpdstran \cdots \
        .
    \end{array}
\]
That is, beginning at $\config{p_0}{\emptystack{2}}$ the CPDS pushes an $\cha$
character, copies the stack with a $\scopy{2}$ and removes all $\cha$s.  After
all $\cha$s are removed, it performs $\pop{2}$ the obtain the stack below
containing only $\cha$.  It pushes another $\cha$ onto the stack and repeats
this process.  After each $\pop{2}$ it adds one more $\cha$ character, performs
a $\scopy{2}$, pops all $\cha$s and so on.  This produces the graph shown below
with $\edges_1$ represented with solid lines, and $\edges_2$ with dashed lines.
Furthermore, nodes from which an $\cha$ is pushed are the target of a dashed
arrow, and nodes reached by popping an $\cha$ are the sources of dashed arrows.

\newcommand\confnode[1]{\rnode{N#1}{$c_{#1}$}} 
\begin{center}
    \begin{psmatrix}[nodealign=true,rowsep=15ex]
        \\

        \multido{\i=1+1}{15}{%
            \confnode{\i} \quad
        }%
        \rnode{end}{$\cdots$}
        \psset{nodesep=.5ex}
        \multido{\i=1+1,\I=2+1}{14}{%
            \ncline{->}{N\i}{N\I}
        }%
        \ncline{->}{N15}{end}

        \psset{linestyle=dashed,arcangle=-45}
        \ncarc{->}{N4}{N1}
        \ncarc{->}{N9}{N1}
        \ncarc{->}{N15}{N1}

        \ncarc{->}{N8}{N6}
        \ncarc{->}{N14}{N6}

        \ncarc{->}{N13}{N11}
    \end{psmatrix}
    \vspace{2ex}
\end{center}

In this graph we can interpret the infinite half-grid.  We restrict the graph to
nodes that are the source of a dashed arrow.  We define horizontal and vertical
edges to obtain the grid below.

\begin{center}
\begin{psmatrix}[nodealign=true,rowsep=5ex,colsep=5ex]

    & & \rnode{N44}{$\vdots$} & \\
    & & \rnode{N34}{$\configc_{13}$} & \rnode{end3}{$\cdots$} \\
    & \rnode{N23}{$\configc_8$} & \rnode{N24}{$\configc_{14}$} &
    \rnode{end2}{$\cdots$} \\

    \rnode{N12}{$\configc_4$} & \rnode{N13}{$\configc_9$} &
    \rnode{N14}{$\configc_{15}$} & \rnode{end1}{$\cdots$} 

    \psset{nodesep=1ex}

    \ncline{->}{N12}{N13}
    \ncline{->}{N13}{N14}
    \ncline{->}{N23}{N24}
    \ncline{->}{N13}{N23}
    \ncline{->}{N14}{N24}
    \ncline{->}{N24}{N34}

    \ncline{->}{N14}{end1}
    \ncline{->}{N24}{end2}
    \ncline{->}{N34}{end3}
    \ncline{->}{N34}{N44}
\end{psmatrix}
\vspace{1ex}
\end{center}
There is a vertical edge from $\configc$ to $\configc'$ whenever
$\tuple{\configc', \configc} \in \edges_1$.  There is a horizontal edge from
$\configc$ to $\configc'$ whenever we have $\configc''$ such that
\begin{enumerate}
    \item $\tuple{\configc'', \configc} \in
          \edges_2$ and $\tuple{\configc'', \configc'} \in \edges_2$, and

    \item there is a path in $\edges_1$ from $\configc$ to $\configc'$, and

    \item there is no $\configc'''$ on the above path with $\tuple{\configc'',
    \configc'''} \in \edges_2$.
\end{enumerate}
Thus, we can MSO-interpret the infinite half-grid, and hence MSO is undecidable
over this graph.

This naive encoding contains basic matching information about pushes and pops.
It remains an interesting open problem to obtain an encoding of CPDS that is
amenable to MSO based frameworks that give positive decidability results for
concurrent behaviours.

\section{Definition of The Saturation Function}
\label{sec:saturation-defs}

We first introduce two more short-hand notation for sets of transitions.  

The first is a variant on the long-form transitions.  E.g. for the run in
Section~\ref{sec:preliminaries} we can write
$\satranfullk{\sastate_3}{\sastate_1}{\sastateset_2, \sastateset_3}$ to
represent the use of $\sastate_3 \satran{\sastate_2} \sastateset_3$ and
$\sastate_2 \satran{\sastate_1} \sastateset_2$ as the first two transitions in
the run.  That is, for a sequence  $\sastate \satran{\sastate_{\opord-1}}
\sastateset_\opord, \sastate_{\opord-1} \satran{\sastate_{\opord-2}}
\sastateset_{\opord-1}, \ldots, \sastate_{\opord'} \satran{\sastate_{\opord'-1}}
\sastateset_{\opord'}$ in $\sadelta_\opord$ to $\sadelta_{\opord'}$
respectively, we write
$\satranfullk{\sastate}{\sastate_{\opord'-1}}{\sastateset_{\opord'}, \ldots,
\sastateset_\opord}$.

The second notation represents sets of long-form transitions.  We write
$\satranfull{\sastateset}{\cha}{\sastateset_\branch}{\sastateset_1,\ldots,\sastateset_\opord}$
if there is a set $\set{\sat_1, \ldots, \sat_\numof}$ of long-form transitions
such that $\sastateset = \set{\sastate_1,\ldots,\sastate_\numof}$ and for all $1
\leq \idxi \leq \numof$ we have $\sat_\idxi =
\satranfull{\sastate_\idxi}{\cha}{\sastateset^\idxi_\branch}{\sastateset^\idxi_1,\ldots,\sastateset^\idxi_\opord}$
and $\sastateset_\branch = \bigcup_{1 \leq \idxi \leq \numof}
\sastateset^\idxi_\branch \subseteq \sastates_{\opord'}$ for some $\opord'$, and
for all $\opord'$, $\sastateset_{\opord'} = \bigcup_{1 \leq \idxi \leq \numof}
\sastateset^\idxi_{\opord'}$.

\begin{nameddefinition}{def:satstepaux}
                       {The Auxiliary Saturation Function $\auxsat{\cpdsruler}$}

    For a consuming CPDS rule $\cpdsruler =
    \cpdsrule{\control}{\cha}{\genop}{\control'}$ we define for a given stack
    automaton $\saauta$, the set $\apply{\auxsat{\cpdsruler}}{\saauta}$ to be
    the smallest set such that, when
    \begin{enumerate}
        \item $\genop = \pop{\opord}$, for each
              $\satranfullk{\sastate_{\control'}}{\sastate_\opord}{\sastateset_{\opord+1},
              \dots, \sastateset_\cpdsord}$ in $\saauta$, the set
              $\apply{\auxsat{\cpdsruler}}{\saauta}$ contains the transition
              $\satranfull{\sastate_{\control}}{\cha}{\emptyset}{\emptyset,
              \ldots, \emptyset, \set{\sastate_\opord}, \sastateset_{\opord+1},
              \ldots, \sastateset_\cpdsord}$,

        \item $\genop = \collapse{\opord}$, when $\opord = \cpdsord$, the set
              $\apply{\auxsat{\cpdsruler}}{\saauta}$ contains
              $\satranfull{\sastate_{\control}}{\cha}{\set{\sastate_{\control'}}}{
              \emptyset, \ldots, \emptyset}$, and when $\opord < \cpdsord$, for
              each transition
              $\satranfullk{\sastate_{\control'}}{\sastate_\opord}{\sastateset_{\opord+1},
              \dots, \sastateset_\cpdsord}$ in $\saauta$, the set
              $\apply{\auxsat{\cpdsruler}}{\saauta}$ contains the transition
              $\satranfull{\sastate_{\control}}{\cha}{\set{\sastate_\opord}}{\emptyset,
              \ldots, \emptyset, \sastateset_{\opord+1}, \ldots,
              \sastateset_\cpdsord}$, 
    \end{enumerate}

    For a generating CPDS rule $\cpdsruler =
    \cpdsrule{\control}{\cha}{\genop}{\control'}$ we define for a given stack
    automaton $\saauta$ and long-form transition $\sat$ of $\saauta$, the set
    $\apply{\auxsat{\cpdsruler}}{\sat, \saauta}$ to be the smallest set such
    that, when 
    \begin{enumerate}
        \item $\genop = \scopy{\opord}$, $\sat =
              \satranfull{\sastate_{\control'}}{\cha}{\sastateset_\branch}{\sastateset_1,
              \ldots,\sastateset_\opord,\ldots, \sastateset_\cpdsord}$ and
              $\satranfull{\sastateset_\opord}{\cha}{\sastateset'_\branch}{\sastateset'_1,
              \ldots, \sastateset'_\opord}$ is in $\saauta$, the set
              $\apply{\auxsat{\cpdsruler}}{\sat, \saauta}$ contains the
              transition
              \[
                  \satranfull{\sastate_{\control}}{\cha}{\sastateset_\branch
                  \cup \sastateset'_\branch}{\sastateset_1 \cup \sastateset'_1,
                  \ldots, \sastateset_{\opord-1} \cup \sastateset'_{\opord-1},
                  \sastateset'_\opord, \sastateset_{\opord+1}, \ldots,
                  \sastateset_\cpdsord} \ , 
              \]

        \item $\genop = \cpush{\chb}{\opord}$,  for all transitions $\sat =
              \satranfull{\sastate_{\control'}}{\chb}{\sastateset_\branch}{\sastateset_1,
              \ldots, \sastateset_\cpdsord}$ and $\sastateset_1
              \satrancol{\cha}{\sastateset'_\branch} \sastateset'_1$ is in
              $\saauta$ with $\sastateset_\branch \subseteq \sastates_\opord$,
              the set $\apply{\auxsat{\cpdsruler}}{\sat, \saauta}$ contains the
              transition
              \[
                  \satranfull{\sastate_{\control}}{\cha}{\sastateset'_\branch}{\sastateset'_1,
                  \sastateset_2, \ldots, \sastateset_{\opord-1},
                  \sastateset_\opord \cup \sastateset_\branch,
                  \sastateset_{\opord+1}, \ldots, \sastateset_\cpdsord} \ , 
              \]

        \item $\genop = \rew{\chb}$ or $\genop = \noop$, $\sat =
              \satranfull{\sastate_{\control'}}{\chb}{\sastateset_\branch}{\sastateset_1,
              \dots, \sastateset_\cpdsord}$ the set
              $\apply{\auxsat{\cpdsruler}}{\sat, \saauta}$ contains the
              transition
              $\satranfull{\sastate_{\control}}{\cha}{\sastateset_\branch}{\sastateset_1,
              \dots, \sastateset_\cpdsord}$ (where $\chb = \cha$ if $\genop =
              \noop$).
    \end{enumerate}
\end{nameddefinition}

As a remark, omitted from the main body of the paper, during saturation, we add
transitions
$\satranfull{\sastate_\cpdsord}{\cha}{\sastateset_\branch}{\sastateset_1,
\ldots, \sastateset_\cpdsord}$ to the automaton.  Recall this represents a
sequence of transitions $\sastate \satran{\sastate_{\opord-1}}
\sastateset_\opord \in \sadelta_\opord, \sastate_{\opord-1}
\satran{\sastate_{\opord-2}} \sastateset_{\opord-1} \in \sadelta_{\opord-1},
\ldots, \sastate_1 \satrancol{\cha}{\sastateset_\branch} \sastateset_1 \in
\sadelta_1$.  Hence, we first, for each $\cpdsord \geq \opord > 1$, add
$\sastate_\opord \satran{\sastate_{\opord-1}} \sastateset_\opord$ to
$\sadelta_\opord$ if it does not already exist.  Then, we add $\sastate_1
\satrancol{\cha}{\sastateset_\branch} \sastateset_1$ to $\sadelta_1$.  Note, in
particular,  we only add \emph{at most one} $\sastate'$ with $\tuple{\sastate,
\sastate', \sastateset} \in \sadelta_\opord$ for all $\sastate$ and
$\sastateset$.  This ensures termination.

Also, we say a state is \emph{initial} if it is of the form $\sastate_\control
\in \sastateset_{\cpdsord}$ for some control state $\control$ or if it is a
state $\sastate_{\opord} \in \sastateset_{\opord}$ for $\opord < \cpdsord$ such
that there exists a transition $\sastate_{\opord+1} \satran{\sastate_{\opord}}
\sastateset_{\opord+1}$ in $\sadelta_{\opord+1}$.  A pre-condition (that does
not sacrifice generality) of the saturation technique is that there are no
incoming transitions to initial states.

\section{Proofs for Extended CPDS} 
\label{sec:ecpds-appendix}

We provide the proof of~\reftheorem{thm:ecpds-reachability}.  The proof is via
the two lemmas in the sections that follow.  A large part of the proof is
identical to ICALP 2012 and hence not repeated here.

\subsection{Completeness of Saturation for ECPDS}

\begin{namedlemma}{lem:ecpds-completeness}{Completeness of $\satstep$}
    Given an extended CPDS $\cpds$ and an order-$\cpdsord$ stack automaton
    $\saauta_0$, the automaton $\saauta$ constructed by saturation with
    $\satstep$ is such that $\config{\control}{\stackw} \in
    \prestar{\cpds}{\saauta_0}$ implies $\stackw \in
    \slang{\sastate_\control}{\saauta}$.
\end{namedlemma}
\begin{proof}
    We begin with a definition of $\prestar{\cpds}{\saauta_0}$ that permits an
    inductive proof of completeness.  Thus, let $\prestar{\cpds}{\saauta_0} =
    \bigcup\limits_{\ordinal < \omega} \pre{\ordinal}{\cpds}{\saauta_0}$ where
    \[
        \begin{array}{rcl}
            \pre{0}{\cpds}{\saauta_0} &=&
            \setcomp{\config{\control}{\stackw}}{\stackw \in
            \slang{\sastate_\control}{\saauta_0}} \\

            \\ 
            
            \pre{\ordinal+1}{\cpds}{\saauta_0} &=&
            \setcomp{\config{\control}{\stackw}}{ \exists
            \config{\control}{\stackw} \cpdstran \config{\control'}{\stackw'}
            \in \pre{\ordinal}{\cpds}{\saauta_0} }
         \end{array}
    \]
    The proof is by induction over $\ordinal$.  In the base case, we have
    $\stackw \in \slang{\sastate_\control}{\saauta_0}$ and the existence of a
    run of $\saauta_0$, and thus a run in $\saauta$ comes directly from the run
    of $\saauta_0$.  Now, inductively assume $\config{\control}{\stackw}
    \cpdstran \config{\control'}{\stackw'}$ and an accepting run of $\stackw'$
    from $\sastate_{\control'}$ of $\saauta$.
   
    There are two cases depending on the rule used in the transition above.
    Here we consider the case where the rule is of the form
    $\cpdsrule{\control}{\apply{\ctop{1}}{\stackw}}{\genlang}{\control'}$.  The
    case where the rule is a standard CPDS rule is identical to ICALP 2012 and
    hence we do not repeat it here (although a variation of the proof appears in
    the proof of Lemma~\ref{lem:scope-reach-graph-only-if}).

    Take the rule
    $\cpdsrule{\control}{\apply{\ctop{1}}{\stackw}}{\genlang}{\control'}$ and
    the sequence $\cpdsrule{\control_0}{\cha_1}{\genop_1}{\control_1} \ldots,
    \cpdsrule{\control_{\numof-1}}{\cha_\numof}{\genop_\numof}{\control_\numof}
    \in \genlang$ that witnessed the transition, observing that $\control_0 =
    \control$ and $\control_\numof = \control'$.  Now, let $\stackw_\idxi =
    \apply{\genop_\numof}{\cdots\apply{\genop_{\idxi+1}}{\stackw'}}$ for all $0
    \leq \idxi \leq \numof$.  Note, $\stackw = \stackw_0$ and $\stackw' =
    \stackw_\numof$.  
    
    Take $\sat' =
    \satranfull{\sastate_{\control'}}{\chb}{\sastateset_\branch}{\sastateset_1,
    \ldots, \sastateset_\cpdsord}$ to be the first transition on the accepting
    run of $\config{\control'}{\stackw'}$.  Beginning with $\sat_\numof =
    \sat'$, we are going to show that there is a run of
    $\config{\control_\idxi}{\stackw_\idxi}$ beginning with $\sat_\idxi$ and
    thereafter only using transitions appearing in $\saauta$.  Since, by the
    definition of $\satstep$, we add $\sat_0 = \sat$ to $\saauta$, we will
    obtain an accepting run of $\saauta$ for $\config{\control_0}{\stackw_0} =
    \config{\control}{\stackw}$ as required.  We will induct from $\numof$ down
    to $0$.

    The base case $\idxi = \numof$ is trivial, since $\sat_\numof = \sat'$ and
    we already have an accepting run of $\saauta$ over
    $\config{\control_\numof}{\stackw_\numof}$ beginning with $\sat_\numof$.
    Now, assume the case for $\config{\control_\idxi}{\stackw_\idxi}$ and
    $\sat_\idxi$.  We show the case for $\idxi-1$.  Take
    $\tuple{\control_{\idxi-1}, \cha_\idxi, \genop_\idxi, \control_\idxi}$, we
    do a case split on $\genop_\idxi$.  A reader familiar with the saturation
    method for CPDS will observe that the arguments below are very similar to
    the arguments for ordinary CPDS rules.
    \begin{enumerate}
        \item When $\genop_\idxi = \scopy{\opord}$, let $\stackw_{\idxi-1} =
              \ccompose{\stacku_{\opord-1}}{\opord}{\ccompose{\cdots}{\cpdsord}{\stacku_\cpdsord}}$.
              We know 
              \[
                  \stackw_\idxi =
                  \ccompose{\stacku_{\opord-1}}{\opord}{\ccompose{\stacku_{\opord-1}}{\opord}{\ccompose{\stacku_\opord}{(\opord+1)}{\ccompose{\cdots}{\cpdsord}{\stacku_\cpdsord}}}}
                  \ .  
              \] 
              Let $\sat_\idxi =
              \satranfull{\sastate_{\control_\idxi}}{\cha}{\sastateset_\branch}{\sastateset_1,
              \ldots, \sastateset_\opord, \ldots \sastateset_\cpdsord}$ and
              $\satranfull{\sastateset_\opord}{\cha}{\sastateset'_\branch}{\sastateset'_1,
              \ldots, \sastateset'_\opord}$ be the initial transitions used on
              the run of $\stackw_\idxi$ (where the transition from
              $\sastateset_\opord$ reads the second copy of
              $\stacku_{\opord-1}$).  

              From the construction of $\taaut{\saauta}{\sat}{\sat'}$ we have
              have a transition $\sat_{\idxi-1}
              \tatran{\cpdsrule{\control_{\idxi-1}}{\cha_\idxi}{\genop_\idxi}{\control_\idxi}}
              \sat_\idxi$ where
              \[ 
                  \sat_{\idxi-1} =
                  \satranfull{\sastate_{\control_{\idxi-1}}}{\cha}{\sastateset_\branch
                  \cup \sastateset'_\branch}{\sastateset_1 \cup \sastateset'_1,
                  \ldots, \sastateset_{\opord-1} \cup \sastateset'_{\opord-1},
                  \sastateset'_\opord, \sastateset_{\opord+1}, \ldots,
                  \sastateset_\cpdsord} \ .
              \]
              Since we know
              $\ccompose{\stacku_\opord}{(\opord+1)}{\ccompose{\cdots}{\cpdsord}{\stacku_\cpdsord}}$
              is accepted from $\sastateset'_\opord$ via
              $\sastateset_{\opord+1}, \ldots, \sastateset_\cpdsord$, and we
              know that $\stacku_{\opord-1}$ is accepted from $\sastateset_1,
              \ldots, \sastateset_{\opord-1}$ and $\sastateset'_1, \ldots,
              \sastateset'_{\opord-1}$ via $\cha$-transitions labelling
              annotations with $\sastateset_\branch$ and $\sastateset'_\branch$
              respectively, we obtain an accepting run of $\stackw_{\idxi-1}$.

        \item When $\genop_\idxi = \cpush{\chc}{\opord}$, let $\stackw_{\idxi-1}
              =
              \ccompose{\stacku_{\opord-1}}{\opord}{\ccompose{\stacku_\opord}{\opord+1}{\ccompose{\cdots}{\cpdsord}{\stacku_\cpdsord}}}$.
              We know $\stackw_\idxi =
              \apply{\cpush{\chc}{\opord}}{\stackw_{\idxi-1}}$ is 
              \[
                  \ccompose{\annot{\chc}{\stacku_\opord}}{1}{\ccompose{\stacku_{\opord-1}}{\opord}{\ccompose{\cdots}{\cpdsord}{\stacku_\cpdsord}}}
                  \ .
              \]
              Let $\sat_\idxi =
              \satranfull{\sastate_{\control_\idxi}}{\chc}{\sastateset_\branch}{\sastateset_1,
              \ldots, \sastateset_\cpdsord} \quad \text{and} \quad \sastateset_1
              \satrancol{\cha}{\sastateset'_\branch} \sastateset'_1$ be the
              first transitions used on the accepting run of $\stackw_\idxi$.
              The construction of $\taaut{\saauta}{\sat}{\sat'}$ means we have a
              transition $\sat_{\idxi-1}
              \tatran{\cpdsrule{\control_{\idxi-1}}{\cha_\idxi}{\genop_\idxi}{\control_\idxi}}
              \sat_\idxi$ where $\sat_{\idxi-1} =
              \satranfull{\sastate_{\control_{\idxi-1}}}{\cha}{\sastateset'_\branch}{\sastateset'_1,
              \sastateset_2,  \ldots, \sastateset_\opord \cup
              \sastateset_\branch, \ldots, \sastateset_\cpdsord}$.  Thus we can
              construct an accepting run of $\stackw_{\idxi-1}$ (which is
              $\stackw_\idxi$ without the first $\chc$ on top of the top
              order-$1$ stack).  A run from $\sastateset_\opord \cup
              \sastateset_\branch$ exists since $\stacku_\opord$ is also the
              stack annotating $\chc$.

         \item When $\genop_\idxi = \rew{\chc}$ let
               $\satranfull{\sastate_{\control_\idxi}}{\chc}{\sastateset_\branch}{\sastateset_1,
               \ldots, \sastateset_\cpdsord}$ be the first transition on the
               accepting run of $\stackw_\idxi =
               \ccompose{\annot{\chc}{\stacku}}{1}{\stackv}$ for some $\stackv$
               and $\stacku$.  From the construction of
               $\taaut{\saauta}{\sat}{\sat'}$ we know we have a transition
               $\sat_{\idxi-1}
               \tatran{\cpdsrule{\control_{\idxi-1}}{\cha_\idxi}{\genop_\idxi}{\control_\idxi}}
               \sat_\idxi$ where $\sat_{\idxi-1} =
               \satranfull{\sastate_{\control_{\idxi-1}}}{\cha}{\sastateset_\branch}{\sastateset_1,
               \ldots, \sastateset_\cpdsord}$, from which we get an accepting
               run of $\stackw_{\idxi-1} =
               \ccompose{\annot{\cha}{\stacku}}{1}{\stackv}$ as required.  

         \item When $\genop_\idxi = \noop$ let
               $\satranfull{\sastate_{\control_\idxi}}{\cha}{\sastateset_\branch}{\sastateset_1,
               \ldots, \sastateset_\cpdsord}$ be the first transition on the
               accepting run of $\stackw_\idxi =
               \ccompose{\annot{\cha}{\stacku}}{1}{\stackv}$ for some $\stackv$
               and $\stacku$.  From the construction of
               $\taaut{\saauta}{\sat}{\sat'}$ we know we have a transition
               $\sat_{\idxi-1}
               \tatran{\cpdsrule{\control_{\idxi-1}}{\cha_\idxi}{\genop_\idxi}{\control_\idxi}}
               \sat_\idxi$ where $\sat_{\idxi-1} =
               \satranfull{\sastate_{\control_{\idxi-1}}}{\cha}{\sastateset_\branch}{\sastateset_1,
               \ldots, \sastateset_\cpdsord}$, from which we get an accepting
               run of $\stackw_{\idxi-1} =
               \ccompose{\annot{\cha}{\stacku}}{1}{\stackv}$ as required.  
    \end{enumerate}
    Hence, for every $\config{\control}{\stackw} \in \prestar{\cpds}{\saauta_0}$
    we have $\stackw \in \slang{\sastate_\control}{\saauta}$.  
\end{proof}

\subsection{Soundness of Saturation for ECPDS}

As in the previous section, the soundness argument repeats a large part of the
proof given in ICALP 2012.  We first recall the machinery used for soundness,
before giving the soundness proof.

First, assume all stack automata are such that their initial states are not
final.  This is assumed for the automaton $\saauta_{0}$ in and preserved by the
saturation function $\Gamma$.

We assign a ``meaning'' to each state of the automaton.  For this, we define
what it means for an order-$\opord$ stack $\stackw$ to satisfy a state
$q\in\sastates_\opord$, which is denoted $\stackw \models \sastate$.

\begin{definition}[$\stackw \models \sastate$] 
   For any $\sastateset\subseteq \sastates_\opord$ and any order-$\opord$ stack
   $\stackw$, we write $\stackw \models \sastateset$ if $\stackw \models
   \sastate$ for all $\sastate\in\sastateset$, and we define $\stackw \models
   \sastate$ by a case distinction on $\sastate$.  
    \begin{enumerate}
        \item $\sastate$ is an initial state in $\sastates_\cpdsord$. Then for
              any order-$\cpdsord$ stack $w$, we say that $\stackw \models
              \sastate$ if $\config{\sastate}{\stackw} \in
              \prestar{\cpds}{\saauta_0}$.

        \item $\sastate$ is an initial state in $\sastates_\opord$, labeling a
              transition $\sastate_{\opord+1} \satran{\sastate}
              \sastateset_{\opord+1} \in \sadelta_{\opord+1}$. Then for any
              order-$\opord$ stack $\stackw$, we say that  $\stackw \models
              \sastate$ if for all order-$(\opord+1)$ stacks s.t. $\stackv
              \models \sastateset_{\opord+1}$,  then
              $\ccompose{\stackw}{(\opord+1)}{\stackv} \models
              \sastate_{\opord+1}$.  

        \item $\sastate$ is a non-initial state in $\sastates_\opord$. Then for
              any order-$\opord$ stack $w$, we say that $\stackw \models
              \sastate$ if $\saauta_0$ accepts $w$ from $\sastate$.
    \end{enumerate}
\end{definition}

By unfolding the definition, we have that an order-$\opord$ stack
$\stackw_\opord$ satisfies an initial state
$\sastate_\opord\in\sastates_{\opord}$ with
$\satranfullk{\sastate}{\sastate_\opord}{\sastateset_{\opord+1}, \dots,
\sastateset_\cpdsord}$ if for any order-$(\opord+1)$ stack $\stackw_{\opord+1}
\models \sastateset_{\opord+1}$, \ldots, and any order-$\cpdsord$ stack
$\stackw_\cpdsord \models \sastateset_\cpdsord$, we have
$\ccompose{\stackw_\opord}{(\opord+1)}{\ccompose{\cdots}{\cpdsord}{\stackw_\cpdsord}}
\models \sastate$.  

\begin{definition}[Soundness of transitions]
    A transition
    $\satranfull{\sastate}{\cha}{\sastateset_\branch}{\sastateset_1,\ldots,\sastateset_\opord}$
    is sound if for any order-$1$ stack $\stackw_1 \models \sastateset_1$,
    \ldots, and any order-$\opord$ stack $\stackw_\opord \models
    \sastateset_\opord$ and any stack $\stacku \models \sastateset_\branch$, we
    have
    $\ccompose{\annot{\cha}{\stacku}}{1}{\ccompose{\stackw_1}{2}{\ccompose{\cdots}{\opord}{\stackw_\opord}}}
    \models \sastate$.  
\end{definition}

The proof of the following lemma can be found in ICALP 2012~\cite{BCHS12}.

\begin{lemma}[\cite{BCHS12}] \label{lem:soundness-trans-ext}         
    If $\satranfull{\sastate_\control}{\cha}{\sastateset_\branch}{\sastateset_1,
    \ldots, \sastateset_\cpdsord}$ is sound, then any transition
    $\satranfull{\sastate_\opord}{\cha}{\sastateset_\branch}{\sastateset_1,
    \ldots, \sastateset_\opord}$ contained within the transition from
    $\sastate_\control$ is sound.  
\end{lemma}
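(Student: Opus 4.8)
The plan is to fix an arbitrary order $\opord$ with $1 \le \opord \le \cpdsord$ and prove directly that the contained transition $\satranfull{\sastate_\opord}{\cha}{\sastateset_\branch}{\sastateset_1, \ldots, \sastateset_\opord}$ is sound, reducing this to the assumed soundness of the full transition by means of the unfolding characterisation of $\models$ recalled above. The case $\opord = \cpdsord$ is immediate, since then $\sastate_\opord = \sastate_\control$ and the contained transition is the full transition itself; so assume $\opord < \cpdsord$. In the chain $\sastate_\control = \sastate_\cpdsord \satran{\sastate_{\cpdsord-1}} \sastateset_\cpdsord, \ldots, \sastate_1 \satrancol{\cha}{\sastateset_\branch} \sastateset_1$ underlying the full long-form transition, each $\sastate_j$ with $\opord \le j < \cpdsord$ labels a transition $\sastate_{j+1} \satran{\sastate_j} \sastateset_{j+1}$ and is therefore an \emph{initial} state. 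In particular $\sastate_\opord$ is initial, so case~2 of the definition of $\models$ (and hence the unfolding characterisation) applies to it.

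First I would fix order-$\idxi$ stacks $\stackw_\idxi \models \sastateset_\idxi$ for $1 \le \idxi \le \opord$ together with a stack $\stacku \models \sastateset_\branch$, and set $V = \ccompose{\annot{\cha}{\stacku}}{1}{\ccompose{\stackw_1}{2}{\ccompose{\cdots}{\opord}{\stackw_\opord}}}$; the goal becomes $V \models \sastate_\opord$. Since $\sastate_\opord$ is the target label of the upper long-form $\satranfullk{\sastate_\control}{\sastate_\opord}{\sastateset_{\opord+1}, \ldots, \sastateset_\cpdsord}$, the unfolding characterisation reduces $V \models \sastate_\opord$ to showing, for all $\stackw_{\opord+1} \models \sastateset_{\opord+1}, \ldots, \stackw_\cpdsord \models \sastateset_\cpdsord$, that $\ccompose{V}{(\opord+1)}{\ccompose{\cdots}{\cpdsord}{\stackw_\cpdsord}} \models \sastate_\control$.

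The bridge to the hypothesis is the regrouping identity
\[
    \ccompose{V}{(\opord+1)}{\ccompose{\stackw_{\opord+1}}{(\opord+2)}{\ccompose{\cdots}{\cpdsord}{\stackw_\cpdsord}}} = \ccompose{\annot{\cha}{\stacku}}{1}{\ccompose{\stackw_1}{2}{\ccompose{\cdots}{\cpdsord}{\stackw_\cpdsord}}} \ ,
\]
which holds because compositions $:_\idxi$ at distinct orders associate: placing $V$ on the $\ctop{(\opord+1)}$ stack and then, inside $V$, the lower stacks and $\annot{\cha}{\stacku}$ at smaller orders yields the same order-$\cpdsord$ stack as performing every composition in the flat, right-nested order. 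Granting this, the right-hand side is precisely the stack appearing in the soundness condition of the full transition, and now \emph{all} the side conditions $\stackw_1 \models \sastateset_1, \ldots, \stackw_\cpdsord \models \sastateset_\cpdsord$ and $\stacku \models \sastateset_\branch$ hold; hence the assumed soundness gives $\models \sastate_\control$. As this holds for every choice of the upper stacks $\stackw_{\opord+1}, \ldots, \stackw_\cpdsord$, the unfolding characterisation delivers $V \models \sastate_\opord$, establishing soundness of the contained transition.

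I expect the only genuine work to be the regrouping identity: one must check that $\ccompose{\cdot}{(\opord+1)}{\cdot}$ applied on top of the order-$\opord$ stack $V$ coincides with interleaving the lower-order compositions that build $V$ with the higher-order ones. This is a routine but fiddly induction on $\cpdsord - \opord$ using the right-associativity of composition and the definitions of $\ctop{\idxi}$. Everything else is a matter of matching quantifiers, the key point being that the universally quantified upper stacks supplied by the unfolding, together with the fixed lower stacks that define $V$, furnish exactly the data consumed by the soundness definition of the full transition.
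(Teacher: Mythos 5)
Your proof is correct and is essentially the intended argument: the paper does not reprove this lemma (it is imported from the ICALP 2012 paper), and the expected proof proceeds exactly as you describe --- unfold the definition of $\models$ at the initial state $\sastate_\opord$ along the chain of transitions above order $\opord$, then use the commutation of the composition operators at distinct orders to identify the resulting stack with the one appearing in the soundness condition of the full transition from $\sastate_\control$. You correctly isolate the regrouping identity as the only step needing a (routine) induction on $\cpdsord - \opord$.
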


\begin{definition}[Soundness of stack automata]
    A stack automaton $\saauta$ is sound if the following holds.
    \begin{itemize}
        \item $\saauta$ is obtained from  $\saauta_{0}$ by adding new initial
              states of order $<n$ and transitions starting in an initial state.

        \item In $\saauta$, any transition
              $\satranfull{\sastate}{\cha}{\sastateset_\branch}{\sastateset_1,
              \ldots, \sastateset_\opord}$ for $\opord \leq \cpdsord$ is sound.
    \end{itemize}
\end{definition}

Unsurprisingly, if some order-$\cpdsord$ stack $\stackw$ is accepted by a
\emph{sound} stack automaton $\saauta$ from a state $\sastate_{\control}$ then
$\config{\control}{\stackw}$ belongs to $ \prestar{\cpds}{\saauta_0}$. More
generally, we have the following lemma whose proof can be found in ICALP 2012.

\begin{lemma}[\cite{BCHS12}]
    Let $\saauta$ be a sound stack automaton $\saauta$ and let  $\stackw$ be an
    order-$\opord$ stack. If $\saauta$ accepts $\stackw$ from a state $\sastate
    \in \sastates_{\opord}$ then $\stackw \models \sastate$.  In particular, if
    $\saauta$ accepts an order-$\cpdsord$ stack $\stackw$ from a state
    $\sastate_{\control}\in \sastates_{\cpdsord}$ then
    $\config{\control}{\stackw}$ belongs to $ \prestar{\cpds}{\saauta_0}$.
\end{lemma}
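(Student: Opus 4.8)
The plan is to prove the more general, all-orders claim --- \emph{if $\saauta$ accepts an order-$\opord$ stack $\stackw$ from $\sastate \in \sastates_\opord$ then $\stackw \models \sastate$} --- and then read off the order-$\cpdsord$ consequence at the end. I would carry out a single induction on the size (number of transitions) of the accepting run-tree of $\stackw$ from $\sastate$, stated uniformly over all orders, all states, and all stacks, so that the induction hypothesis is available for the sub-stacks read by any sub-run regardless of their order. The proof then splits on whether $\sastate$ is initial.

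First I would dispose of the non-initial case, which requires no induction. Here the structural hypotheses packaged into soundness do all the work: $\saauta$ is obtained from $\saauta_0$ only by adding initial states and transitions \emph{starting} in initial states, while the standing precondition forbids any transition from having an initial state as a target. Starting from a non-initial $\sastate$, every state reachable along the accepting run is therefore non-initial, and every transition used is already present in $\saauta_0$; the empty-stack subcase is identical, using that a non-initial final state of $\saauta$ is final in $\saauta_0$. Hence $\saauta_0$ accepts $\stackw$ from $\sastate$, which is exactly the definition of $\stackw \models \sastate$ for non-initial $\sastate$.

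For an initial state $\sastate \in \sastates_\opord$, whether $\opord = \cpdsord$ or $\opord < \cpdsord$, I would treat both uniformly. Since initial states are non-final, $\stackw$ is non-empty and its accepting run begins with a genuine long-form transition $\sat = \satranfull{\sastate}{\cha}{\sastateset_\branch}{\sastateset_1, \ldots, \sastateset_\opord}$ reading $\cha = \apply{\ctop{1}}{\stackw}$. This exhibits a decomposition $\stackw = \ccompose{\annot{\cha}{\stacku}}{1}{\ccompose{\stackw_1}{2}{\ccompose{\cdots}{\opord}{\stackw_\opord}}}$ in which $\saauta$ accepts $\stacku$ from every state of $\sastateset_\branch$ and each $\stackw_\idxi$ from every state of $\sastateset_\idxi$, each via a proper subtree of the original run. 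The induction hypothesis then gives $\stacku \models \sastateset_\branch$ and $\stackw_\idxi \models \sastateset_\idxi$ for all $\idxi$. Because $\saauta$ is sound, $\sat$ is a sound transition, and soundness of $\sat$ asserts precisely that, under exactly these satisfaction premises, $\ccompose{\annot{\cha}{\stacku}}{1}{\ccompose{\stackw_1}{2}{\ccompose{\cdots}{\opord}{\stackw_\opord}}} \models \sastate$, i.e. $\stackw \models \sastate$. Specialising to $\sastate = \sastate_\control \in \sastates_\cpdsord$ and unfolding the first clause of the definition of $\models$ yields $\config{\control}{\stackw} \in \prestar{\cpds}{\saauta_0}$, giving the ``in particular'' statement.

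The delicate points all sit in this uniform initial-state step. The key observation that makes the case analysis collapse is that the two definitions have been engineered to mesh rung by rung: the quantified meaning of $\stackw \models \sastate$ for a lower-order initial state is matched exactly by the definition of a sound transition, so no separate argument for $\opord < \cpdsord$ is needed (a reader who prefers a top-down phrasing can instead invoke Lemma~\ref{lem:soundness-trans-ext} to inherit soundness of the order-$\opord$ transition from its enclosing order-$\cpdsord$ one). The hard part will be verifying that these pre-established invariants --- soundness of \emph{every} long-form transition of $\saauta$, absence of incoming transitions to initial states, and initial states being non-final --- between them cover every possible shape of run (including empty stacks, annotations of arbitrary order, and the alternating branching into the sub-run on $\stacku$ and the order-$\idxi$ tail sub-runs), and that this branching strictly decreases the run-tree measure so that induction on run size, rather than on stack height, is the correct well-founded order.
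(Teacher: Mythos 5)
First, a point of reference: the paper does not reprove this lemma; it is imported from \cite{BCHS12} (``whose proof can be found in ICALP 2012''), so there is no in-paper proof to compare against and your argument has to stand on its own. Its core is right and is the natural one: induct on the size of the accepting run-tree, peel off the first long-form transition $\satranfull{\sastate}{\cha}{\sastateset_\branch}{\sastateset_1,\ldots,\sastateset_\opord}$, apply the induction hypothesis to the sub-runs on $\stacku$ and on the $\stackw_\idxi$, and invoke soundness of that transition, whose conclusion is literally $\stackw \models \sastate$. Your observation that the definition of a sound transition is phrased so as to discharge the quantified meaning of $\models$ at every order, making the $\opord < \cpdsord$ case come for free, is exactly the point of the setup, and the non-finality of initial states (which applies to every label state down the spine) is what guarantees the long-form transition exists.

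The one step I would not let stand as written is the non-initial case. You claim that from a non-initial $\sastate$ every state reachable along the accepting run is non-initial and hence every transition used lies in $\saauta_0$. But under the paper's convention (Appendix~\ref{sec:saturation-defs}) a state of order $\opord' < \cpdsord$ is \emph{initial} precisely when it labels some order-$(\opord'+1)$ transition; so the label $\sastate'$ of the very first (old) transition $\sastate \satran{\sastate'} \sastateset$ out of your non-initial $\sastate$ is itself initial, and soundness of automata explicitly permits new transitions to start at such states. The no-incoming-transitions precondition protects the \emph{target sets} $\sastateset$, not the labels, so a run from a non-initial state can in principle descend through new transitions and ``$\saauta_0$ accepts $\stackw$ from $\sastate$'' does not follow by transition-tracing alone. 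The repair is cheap and already in your toolkit: drop the special case and run the uniform argument, since the long-form transition out of the non-initial $\sastate$ is also required to be sound, and its conclusion $\stackw \models \sastate$ unfolds, for non-initial $\sastate$, to exactly ``$\saauta_0$ accepts $\stackw$ from $\sastate$'' (the empty-stack base case is covered because saturation leaves the final-state sets unchanged). With that adjustment the proof is complete.
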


We also recall that the initial automaton $\saauta_{0}$ is sound.

\begin{lemma}[Soundness of $\saauta_{0}$~\cite{BCHS12}] \label{lem:sound-a0}    
    The automaton $\saauta_{0}$ is sound.
\end{lemma}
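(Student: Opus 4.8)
The plan is to verify, directly for $\saauta_0$, the two clauses of the definition of a \emph{sound} stack automaton. The first clause is immediate: $\saauta_0$ is obtained from $\saauta_0$ by adding no new states and no new transitions, so it trivially has the required shape. The whole content lies in the second clause, namely that every long-form transition $\satranfull{\sastate}{\cha}{\sastateset_\branch}{\sastateset_1, \ldots, \sastateset_\opord}$ of $\saauta_0$ (for each $\opord \leq \cpdsord$) is sound.

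The first thing I would record is a structural observation. By the pre-condition that no initial state has an incoming transition, every state occurring in a \emph{target} position --- that is, every member of a set $\sastateset_\opord$, or of a branch set $\sastateset_\branch$ --- is non-initial. Hence, by the third clause of the definition of $\models$, for such a state $\sastate$ we have $\stackw \models \sastate$ exactly when $\saauta_0$ accepts $\stackw$ from $\sastate$. This lets me convert the hypotheses ``$\stackw_j \models \sastateset_j$'' and ``$\stacku \models \sastateset_\branch$'' appearing in the definition of soundness into genuine accepting runs of $\saauta_0$.

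I would then prove soundness of all transitions by downward induction on the order $\opord$ of the source $\sastate = \sastate_\opord$, splitting into three cases. (a) If $\sastate_\opord$ is non-initial, the observation above turns the hypotheses into accepting runs of $\saauta_0$ from each $\sastateset_j$ and from $\sastateset_\branch$; prepending the transition itself yields an accepting run of $\ccompose{\annot{\cha}{\stacku}}{1}{\ccompose{\stackw_1}{2}{\ccompose{\cdots}{\opord}{\stackw_\opord}}}$ from $\sastate_\opord$, so this stack $\models \sastate_\opord$ by the non-initial clause. (b) If $\sastate_\opord = \sastate_\control$ is an initial control state at order $\cpdsord$, the same accepting run shows $\config{\control}{\ccompose{\annot{\cha}{\stacku}}{1}{\ccompose{\stackw_1}{2}{\ccompose{\cdots}{\opord}{\stackw_\opord}}}} \in \apply{\lang}{\saauta_0} \subseteq \prestar{\cpds}{\saauta_0}$, which is exactly the meaning of $\models \sastate_\control$. (c) If $\sastate_\opord$ is initial with $\opord < \cpdsord$, then for every transition $\sastate_{\opord+1} \satran{\sastate_\opord} \sastateset_{\opord+1}$ that $\sastate_\opord$ labels, concatenating this step with the chain of $\sastate_\opord$ produces the order-$(\opord+1)$ transition $\satranfull{\sastate_{\opord+1}}{\cha}{\sastateset_\branch}{\sastateset_1, \ldots, \sastateset_\opord, \sastateset_{\opord+1}}$ of $\saauta_0$, which is sound by the induction hypothesis. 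Instantiating its soundness with any $\stackv \models \sastateset_{\opord+1}$ (and the same $\stackw_1, \ldots, \stackw_\opord, \stacku$), and using that composition at different orders commutes --- so that $\ccompose{(\ccompose{\annot{\cha}{\stacku}}{1}{\ccompose{\stackw_1}{2}{\ccompose{\cdots}{\opord}{\stackw_\opord}}})}{(\opord+1)}{\stackv}$ equals $\ccompose{\annot{\cha}{\stacku}}{1}{\ccompose{\stackw_1}{2}{\ccompose{\cdots}{\opord}{\ccompose{\stackw_\opord}{(\opord+1)}{\stackv}}}}$ --- gives precisely the condition required by the second clause of the definition of $\models$.

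The main obstacle I anticipate is case (c): recognising that the meaning of a low-order initial state is defined in terms of the next higher order, which is what forces the induction to run \emph{downward} on the order and requires threading the additional composand $\stackv$ through the associativity/commutativity of stack composition. Cases (a) and (b) are routine once the structural observation about target states is in place, and together they supply the base order $\opord = \cpdsord$ (where the only sources are non-initial states and control states, there being no order-$(\cpdsord+1)$ transitions to create middle initial states).
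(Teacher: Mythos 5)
Your proposal is correct, but note that the paper itself gives no proof of this lemma: it is recalled as a black box from~\cite{BCHS12}, so there is nothing in-paper to compare against line by line. Your argument is the natural self-contained one, and the two load-bearing points are exactly those you isolate. First, the precondition that initial states have no incoming transitions guarantees that every state occurring in a target set $\sastateset_\idxj$ or a branch set $\sastateset_\branch$ is non-initial, so the hypotheses $\stackw_\idxj \models \sastateset_\idxj$ and $\stacku \models \sastateset_\branch$ really do unfold into accepting runs of $\saauta_0$, which can then be prepended with the chain of transitions underlying the long-form transition; this settles the non-initial sources and the order-$\cpdsord$ control states (where one additionally uses $\apply{\lang}{\saauta_0} \subseteq \prestar{\cpds}{\saauta_0}$). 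Second, because the meaning of an initial state of order $\opord < \cpdsord$ is defined through the order-$(\opord+1)$ transition(s) it labels, the induction must indeed run downward on the order of the source, and the step relies on the identity $\ccompose{\brac{\ccompose{\stacku}{\opord'}{\stackv}}}{(\opord+1)}{\stackw} = \ccompose{\stacku}{\opord'}{\brac{\ccompose{\stackv}{(\opord+1)}{\stackw}}}$ for $\opord' \leq \opord$ --- the same rearrangement the paper uses implicitly in its ``unfolding'' remark following the definition of $\models$. You correctly quantify over \emph{all} transitions a given initial state labels, and your case split is exhaustive since an order-$\cpdsord$ state is initial exactly when it is some $\sastate_\control$. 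I see no gap.
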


We are now ready to prove that the soundness of saturation for extended CPDS.

\begin{namedlemma}{lem:ecpds-soundness}{Soundness of $\satstep$}
    The automaton $\saauta$ constructed by saturation with $\satstep$ and
    $\cpds$ from $\saauta_0$ is sound.
\end{namedlemma}
\begin{proof}
    The proof is by induction on the number of iterations of $\satstep$.  The
    base case is the automaton $\saauta_0$ and the result was established in
    Lemma~\ref{lem:sound-a0}.  As in the completeness case, the argument for the
    ordinary CPDS rules is identical to ICALP 2012 and not repeated here
    (although the arguments appear in the proof of
    Lemma~\ref{lem:scope-reach-graph-if}).  
    
    We argue the case for those transitions added because of extended rules
    $\cpdsrule{\control}{\cha}{\genlang}{\control'}$.

    Hence, we consider the inductive step for transitions introduced by extended
    rules of the form $\cpdsrule{\control}{\chc}{\genlang}{\control'}$.  Take
    the $\sat, \sat'$ and
    $\cpdsrule{\control_0}{\cha_1}{\genop_1}{\control_1}\cpdsrule{\control_1}{\cha_2}{\genop_2}{\control_2}\ldots\cpdsrule{\control_{\numof-1}}{\cha_\numof}{\genop_\numof}{\control_\numof}
    \in \genlang \cap \apply{\lang}{\taaut{\saauta_\idxi}{\sat}{\sat'}}$ with
    $\sat'$ being a transition of $\saauta_\idxi$ that led to the introduction
    of $\sat$.  Note $\control = \control_0$ and $\control' = \control_\numof$.      

    Let $\sat_0, \ldots, \sat_\numof$ be the sequence of states on the accepting
    run of $\taaut{\saauta_{\idxi}}{\sat}{\sat'}$.  In particular $\sat_0 =
    \sat$ and $\sat_\numof = \sat'$.  We will prove by induction from $\idxi =
    \numof$ to $\idxi = 0$ that for each $\sat_\idxi$, letting
    \[
        \sat_\idxi =
        \satranfull{\sastate_{\control_\idxi}}{\cha}{\sastateset_\branch}{\sastateset_1,
        \ldots, \sastateset_\cpdsord} \ ,
    \]
    and for all $\stacku \models \sastateset_\branch$, $\stackw_1 \models
    \sastateset_1$, \ldots, $\stackw_\cpdsord \models \sastateset_\cpdsord$ that
    for $\stackw^\idxi =
    \ccompose{\annot{\cha}{\stacku}}{1}{\ccompose{\stackw_1}{2}{\cdots
    \ccompose{}{\cpdsord}{\stackw_\cpdsord}}}$ we have
    $\apply{\genop_\numof}{\cdots\apply{\genop_{\idxi+1}}{\stackw^\idxi}}
    \models \sastate_{\control'}$.  Thus, at $\sat_0 = \sat$ , we have
    $\apply{\genop_\numof}{\cdots\apply{\genop_1}{\stackw^0}} \models
    \sastate_{\control'}$ and thus
    $\config{\control'}{\apply{\genop_\numof}{\cdots\apply{\genop_1}{\stackw^0}}}
    \in \prestar{\cpds}{\saauta_0}$.  Since the above sequence
    \[
        \cpdsrule{\control_0}{\cha_1}{\genop_1}{\control_1}\cpdsrule{\control_1}{\cha_2}{\genop_2}{\control_2}\ldots\cpdsrule{\control_{\numof-1}}{\cha_\numof}{\genop_\numof}{\control_\numof}
    \] 
    is in $\genlang$, we have $\config{\control_0}{\stackw^0} \in
    \prestar{\cpds}{\saauta_0}$ and thus $\stackw^0 \models \sastate_\control$,
    giving soundness of the new transition $\sat_0$.

    The base case is $\sat_\numof = \sat'$.  Since $\sat'$ appears in
    $\saauta_\idxi$, we know it is sound.  That gives us that $\stackw^\numof
    \models \sastate_{\control'}$ as required.  
    
    Now assume that $\sat_\idxi$ satisfies the hypothesis.  We prove that
    $\sat_{\idxi-1}$ does also.  Take the transition $\sat_{\idxi-1}
    \tatran{\cpdsrule{\control_{\idxi-1}}{\cha_\idxi}{\genop_\idxi}{\control_\idxi}}
    \sat_\idxi$.  We perform a case split on $\genop_\idxi$.  Readers familiar
    with ICALP 2012 will notice that the arguments here very much follow the
    soundness proof for ordinary rules.
    \begin{enumerate}
        \item Assume that $\genop_\idxi = \scopy{\opord}$, that we had
              \[ 
                  \sat_\idxi =
                  \satranfull{\sastate_{\control_\idxi}}{\cha}{\sastateset_\branch}{\sastateset_1,
                  \ldots, \sastateset_\cpdsord} \quad \text{and} \quad
                  \satranfull{\sastateset_\opord}{\cha}{\sastateset'_\branch}{\sastateset'_1,
                  \ldots, \sastateset'_\opord} 
              \]
              where the latter set of transition are in $\saauta_\idxi$ and
              therefore sound, and that 
              \[ 
                  \sat_{\idxi-1} =
                  \satranfull{\sastate_{\control_{\idxi-1}}}{\cha}{\sastateset_\branch
                  \cup \sastateset'_\branch}{\sastateset_1 \cup \sastateset'_1,
                  \ldots, \sastateset_{\opord-1} \cup \sastateset'_{\opord-1},
                  \sastateset'_\opord, \sastateset_{\opord+1}, \ldots,
                  \sastateset_\cpdsord} \ .
              \]
              To establish the property for this latter transition, we have to
              prove that for any $\stackw_1 \models \sastateset_1 \cup
              \sastateset'_1, \ldots$, any $\stackw_{\opord-1} \models
              \sastateset_{\opord-1} \cup \sastateset'_{\opord-1}$, any
              $\stackw_\opord \models \sastateset'_\opord,$ any
              $\stackw_{\opord+1} \models \sastateset_{\opord+1}, \ldots$, any
              $\stackw_\cpdsord \models \sastateset_\cpdsord$ and any $\stacku
              \models \sastateset_\branch \cup \sastateset'_\branch$, we have
              for $\stackw^{\idxi-1} =
              \ccompose{\annot{\cha}{\stacku}}{1}{\ccompose{\stackw_1}{2}{\ccompose{\cdots}{\cpdsord}{\stackw_\cpdsord}}}$
              that
              $\apply{\genop_\numof}{\cdots\apply{\genop_\idxi}{\stackw^{\idxi-1}}}
              \models \sastate_{\control'}$. 
              
              Let $\stackv = \apply{\ctop{\opord}}{\stackw^{\idxi-1}} =
              \ccompose{\annot{\cha}{\stacku}}{1}{\ccompose{\stackw_1}{2}{\ccompose{\cdots}{(\opord-1)}{\stackw_{\opord-1}}}}$.

              From the soundness of
              $\satranfull{\sastateset_\opord}{\cha}{\sastateset'_\branch}{\sastateset'_1,
              \ldots, \sastateset'_\opord}$ and as  $\stacku \models
              \sastateset'_\branch, \stackw_1 \models \sastateset'_1, \ldots,
              \stackw_\opord \models \sastateset'_\opord$, we have
              $\ccompose{\stackv}{\opord}{\stackw_\opord} \models
              \sastateset_\opord$.

              Then, from $\stackw_1 \models \sastateset_1, \ldots,
              \stackw_{\opord-1} \models \sastateset_{\opord-1}$, and
              $\ccompose{\stackv}{\opord}{\stackw_\opord} \models
              \sastateset_\opord$, and $\stackw_{\opord+1} \models
              \sastateset_{\opord+1}, \ldots, \stackw_\cpdsord \models
              \sastateset_\cpdsord$ and $\stacku \models \sastateset_\branch$
              and the induction hypothesis for $\sat_\idxi =
              \satranfull{\sastate_{\control_\idxi}}{\cha}{\sastateset_\branch}{\sastateset_1,
              \ldots, \sastateset_\cpdsord}$ we get
              \[
                  \apply{\genop_\numof}{\cdots\apply{\genop_{\idxi+1}}{\apply{\scopy{\opord}}{\stackw}}}
                  =
                  \apply{\genop_\numof}{\cdots\apply{\genop_{\idxi+1}}{\ccompose{\stackv}{\opord}{\ccompose{\stackv}{\opord}{\ccompose{\stackw_\opord}{(\opord+1)}{\ccompose{\cdots}{\cpdsord}{\stackw_\cpdsord}}}}}}
                  \models \sastate_{\control'} 
              \] 
              as required.

        \item Assume that $\genop_\idxi = \cpush{\chb}{\opord}$, that we have
              \[ 
                  \sat_\idxi =
                  \satranfull{\sastate_{\control_\idxi}}{\chb}{\sastateset_\branch}{\sastateset_1,
                  \ldots, \sastateset_\cpdsord} \quad \text{and} \quad
                  \sastateset_1 \satrancol{\cha}{\sastateset'_\branch}
                  (\sastateset'_1)
              \]
              where the latter set of transitions is sound, and that we have
              \[ 
                  \sat_{\idxi-1} =
                  \satranfull{\sastate_{\control_{\idxi-1}}}{\cha}{\sastateset'_\branch}{\sastateset'_1,
                  \sastateset_2, \ldots, \sastateset_\opord \cup
                  \sastateset_\branch, \ldots, \sastateset_\cpdsord} \ .
              \]
              To prove the induction hypothesis for the latter transition, we
              have to prove that for any $\stackw_1 \models \sastateset'_1$, any
              $\stackw_2 \models \sastateset_2, \ldots$, any $\stackw_{\opord-1}
              \models \sastateset_{\opord-1}$, any $\stackw_\opord \models
              \sastateset_\opord \cup \sastateset_\branch$, any
              $\stackw_{\opord+1} \models \sastateset_{\opord+1}, \ldots$, any
              $\stackw_\cpdsord \models \sastateset_\cpdsord$ and any $\stacku
              \models \sastateset'_\branch$, that we have for $\stackw^{\idxi-1}
              =
              \ccompose{\annot{\cha}{\stacku}}{1}{\ccompose{\stackw_1}{2}{\ccompose{\cdots}{\cpdsord}{\stackw_\cpdsord}}}$
              that
              $\apply{\genop_\numof}{\cdots\apply{\genop_\idxi}{\stackw^{\idxi-1}}}
              \models \sastate_{\control'}$.  
              
              From the soundness of
              $\satranfull{\sastateset_1}{\cha}{\sastateset'_\branch}{\sastateset'_1}$
              and as $\stacku \models \sastateset'_\branch$ and $\stackw_1
              \models \sastateset'_1$ we have
              $\ccompose{\annot{\cha}{\stacku}}{1}{\stackw_1} \models
              \sastateset_1$.

              Then, from $\ccompose{\annot{\cha}{\stacku}}{1}{\stackw_1} \models
              \sastateset_1, \stackw_2 \models \sastateset_2, \ldots,
              \stackw_\cpdsord \models \sastateset_\cpdsord$, and
              $\apply{\ctop{\opord+1}}{\apply{\pop{\opord}}{\stackw}} =
              \stackw_\opord \models \sastateset_\branch$, and induction for
              $\sat_\idxi =
              \satranfull{\sastate_{\control_\idxi}}{\chb}{\sastateset_\branch}{\sastateset_1,
              \ldots, \sastateset_\cpdsord}$, we get
              \[
                  \apply{\genop_\numof}{\cdots\apply{\genop_{\idxi+1}}{\apply{\cpush{\chb}{\opord}}{\stackw^{\idxi-1}}}}
                  =
                  \apply{\genop_\numof}{\cdots\apply{\genop_{\idxi+1}}{\ccompose{\annot{\chb}{\stackw_\opord}}{1}{\ccompose{\annot{\cha}{\stacku}}{1}{\ccompose{\stackw_1}{2}{\ccompose{\cdots}{\cpdsord}{\stackw_\cpdsord}}}}}}
                  \models \sastate_{\control'}
              \] 
              as required.

        \item Assume that $\genop = \rew{\chb}$, that we have $\sat_\idxi =
              \satranfull{\sastate_{\control_\idxi}}{\chb}{\sastateset_\branch}{\sastateset_1,
              \dots, \sastateset_\cpdsord}$ and that 
              \[
                  \sat_{\idxi-1} =
                  \satranfull{\sastate_{\control}}{\cha}{\sastateset_\branch}{\sastateset_1,
                  \dots, \sastateset_\cpdsord} \ .
              \]
              To prove the hypothesis for this later transition, we have to
              prove that for any  $\stackw_1 \models \sastateset_1, \ldots,$ for
              any $\stackw_\cpdsord \models \sastateset_\cpdsord$ and any
              $\stacku \models \sastateset_\branch$, we have that for
              $\stackw^{\idxi-1} =
              \ccompose{\annot{\cha}{\stacku}}{1}{\ccompose{\stackw_1}{2}{\ccompose{\cdots}{\cpdsord}{\stackw_\cpdsord}}}$
              we have
              $\apply{\genop_\numof}{\cdots\apply{\genop_\idxi}{\stackw^{\idxi-1}}}
              \models \sastate_{\control'}$. 
              
              From $\stackw_1 \models \sastateset_1, \ldots, \stackw_\cpdsord
              \models \sastateset_\cpdsord$, and $\stacku \models
              \sastateset_\branch$, and the hypothesis for $\sat_\idxi =
              \satranfull{\sastate_{\control_\idxi}}{\chb}{\sastateset_\branch}{\sastateset_1,
              \dots, \sastateset_\cpdsord}$, we get
              \[
                  \apply{\genop_\numof}{\cdots\apply{\genop_{\idxi+1}}{\apply{\rew{\chb}}{\stackw^{\idxi-1}}}}
                  =
                  \apply{\genop_\numof}{\cdots\apply{\genop_{\idxi+1}}{\ccompose{\annot{\chb}{\stacku}}{1}{\ccompose{\stackw_1}{2}{\ccompose{\cdots}{\cpdsord}{\stackw_\cpdsord}}}}}
                  \models \sastate_{\control'}
              \] 
              as required.

        \item Assume that $\genop = \noop$, that we have $\sat_\idxi =
              \satranfull{\sastate_{\control_\idxi}}{\chb}{\sastateset_\branch}{\sastateset_1,
              \dots, \sastateset_\cpdsord}$ and that 
              \[
                  \sat_{\idxi-1} =
                  \satranfull{\sastate_{\control}}{\cha}{\sastateset_\branch}{\sastateset_1,
                  \dots, \sastateset_\cpdsord} \ .
              \]
              To prove the hypothesis for this later transition, we have to
              prove that for any  $\stackw_1 \models \sastateset_1, \ldots,$ for
              any $\stackw_\cpdsord \models \sastateset_\cpdsord$ and any
              $\stacku \models \sastateset_\branch$, we have that for
              $\stackw^{\idxi-1} =
              \ccompose{\annot{\cha}{\stacku}}{1}{\ccompose{\stackw_1}{2}{\ccompose{\cdots}{\cpdsord}{\stackw_\cpdsord}}}$
              we have
              $\apply{\genop_\numof}{\cdots\apply{\genop_\idxi}{\stackw^{\idxi-1}}}
              \models \sastate_{\control'}$. 
              
              From $\stackw_1 \models \sastateset_1, \ldots, \stackw_\cpdsord
              \models \sastateset_\cpdsord$, and $\stacku \models
              \sastateset_\branch$, and the hypothesis for $\sat_\idxi =
              \satranfull{\sastate_{\control_\idxi}}{\cha}{\sastateset_\branch}{\sastateset_1,
              \dots, \sastateset_\cpdsord}$, we get
              \[
                  \apply{\genop_\numof}{\cdots\apply{\genop_{\idxi+1}}{\apply{\rew{\cha}}{\stackw^{\idxi-1}}}}
                  =
                  \apply{\genop_\numof}{\cdots\apply{\genop_{\idxi+1}}{\ccompose{\annot{\cha}{\stacku}}{1}{\ccompose{\stackw_1}{2}{\ccompose{\cdots}{\cpdsord}{\stackw_\cpdsord}}}}}
                  \models \sastate_{\control'}
              \] 
              as required.
    \end{enumerate}
    This completes the proof.    
\end{proof}

\subsection{Complexity of Saturation for ECPDS}

We argue that saturation for ECPDS maintains the same complexity as saturation
for CPDS.

\begin{proposition}
    The saturation construction for an order-$\cpdsord$ CPDS $\cpds$ and an
    order-$\cpdsord$ stack automaton $\saauta_0$ runs in $\cpdsord$-EXPTIME.
\end{proposition}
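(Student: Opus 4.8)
The plan is to split the argument into a bound on the number of saturation iterations and a bound on the cost of a single iteration, and then to verify that the extra machinery used for the extended rules (the transition automata and the language-intersection test) pushes neither bound beyond $\cpdsord$-EXPTIME.

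First I would recount the states and transitions, as in ICALP 2012~\cite{BCHS12}. The order-$\cpdsord$ states are fixed, one per control state, so $\sizeof{\sastates_\cpdsord} = \sizeof{\controls}$. For $\opord < \cpdsord$, the remark in Appendix~\ref{sec:saturation-defs} guarantees that each order-$\opord$ state is introduced as the unique $\sastate'$ of a transition $\sastate \satran{\sastate'} \sastateset$ with $\sastate \in \sastates_{\opord+1}$ and $\sastateset \subseteq \sastates_{\opord+1}$; since such a transition is determined by the pair $(\sastate, \sastateset)$, we get $\sizeof{\sastates_\opord} \leq \sizeof{\sastates_{\opord+1}} \cdot 2^{\sizeof{\sastates_{\opord+1}}}$ up to an additive term for the states already in $\saauta_0$. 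Iterating downward from $\cpdsord$ yields $\sizeof{\sastates_\opord} = \bigo{\exptower{\cpdsord - \opord}{\polyof{\sizeof{\controls}}}}$, and in particular $\sizeof{\sastates_1} = \bigo{\exptower{\cpdsord - 1}{\polyof{\sizeof{\controls}}}}$. Counting long-form transitions $\satranfull{\sastate}{\cha}{\sastateset_\branch}{\sastateset_1, \ldots, \sastateset_\cpdsord}$, the dominant factor is the choice of $\sastateset_1 \subseteq \sastates_1$, giving a total of $\bigo{\exptower{\cpdsord}{\polyof{\sizeof{\controls}}}}$. Crucially, this bound is insensitive to the extended rules, since the transitions they add are still long-form transitions over the very same state sets.

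Next I would bound the number of iterations. By Definition~\ref{def:satstepaux} and the ``at most one $\sastate'$'' convention, $\satstep$ only ever adds transitions and never removes one, so each non-final iteration strictly enlarges the transition set; the number of iterations is therefore at most the total transition count, $\bigo{\exptower{\cpdsord}{\poly}}$. For the cost of a single iteration on ordinary CPDS rules the work is routine: each $\auxsat{\cpdsruler}$ call inspects a rule together with one or two existing transitions, so the whole iteration is polynomial in the transition count, i.e. still $\bigo{\exptower{\cpdsord}{\poly}}$.

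The main obstacle is to show the extended rules stay within budget. For a rule $\cpdsrule{\control}{\cha}{\genlang}{\control'}$ and a pair $\sat, \sat'$, the algorithm builds $\taaut{\saauta_\idxi}{\sat}{\sat'}$ and tests $\genlang \cap \apply{\lang}{\taaut{\saauta_\idxi}{\sat}{\sat'}} \neq \emptyset$. Here I would observe, using Definition~\ref{def:tranaut}, that the state set $\tastates{\saauta_\idxi}$ is exactly the set of long-form transitions and hence has only $\bigo{\exptower{\cpdsord}{\poly}}$ elements, while its edge relation $\tadelta$ is obtained from the same $\auxsat{\cpdsruler}$ calls already bounded above; thus each transition automaton is an NFA of $\cpdsord$-exponential size, and the number of pairs $(\sat, \sat')$ is at most the square of the transition count, which remains $\bigo{\exptower{\cpdsord}{\poly}}$. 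The only genuinely external ingredient is the emptiness test, supplied as a decision procedure: for plain CPDS there are no extended rules and this step is vacuous, while in every instance used in this paper (for example the product OCPDS of Definition~\ref{def:langcheckcpds}) it is carried out within $\cpdsord$-EXPTIME. Multiplying the iteration bound by the per-iteration cost then gives a total running time of $\bigo{\exptower{\cpdsord}{\polyof{\sizeof{\controls}}}}$ for a suitable polynomial $\poly$, i.e. $\cpdsord$-EXPTIME. The delicate point to get right is the state-count recurrence at the start: the claimed tower hinges entirely on the ``at most one $\sastate'$'' convention, without which the lower-order state sets, and with them the transition automata, could blow up by an extra exponential.
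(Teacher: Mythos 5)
Your proposal is correct and follows essentially the same route as the paper's own argument: bound the states by an $(\cpdsord-1)$-fold exponential tower via the ``at most one $\sastate'$'' convention, bound the transitions by one further exponential, and observe that monotone addition of transitions caps the number of iterations. You go somewhat further than the paper in explicitly accounting for the transition automata and the $(\sat,\sat')$ pairs arising from extended rules, which the paper leaves implicit, but this is an elaboration of the same argument rather than a different approach.
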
 
\begin{proof}
    The number of states of $\saauta$ is bounded by
    $\exptower{(\cpdsord-1)}{\numof}$ where $\numof$ is the size of $\cpds$ and
    $\saauta_0$:  each state in $\sastates_\opord$ was either in $\saauta_0$ or
    comes from a transition in $\sadelta_{\opord+1}$.  Since the automata are
    alternating, there is an exponential blow up at each order except at
    order-$\cpdsord$.  Each iteration of the algorithm adds at least one new
    transition.  Only $\exptower{\cpdsord}{\numof}$ transitions can be added.
\end{proof}

The complexity can be reduced by a single exponential when runs of the stack
automata are ``non-alternating at order-$\cpdsord$''.  In this case an
exponential is avoided by only adding a transition
$\satranfull{\sastate_\control}{\cha}{\sastateset_\branch}{\sastateset_1,
\ldots, \sastateset_\cpdsord}$ when $\sastateset_\cpdsord$ contains at most one
element.

We refer the reader to ICALP 2012 for a full discussion of non-alternation since
it relies on the original notion of collapsible pushdown system that we have not
defined here.  ICALP 2012 describes the connection between our notion of CPDS
(using annotations) and the original notion, as well as defining non-alternation
at order-$\cpdsord$ and arguing completeness for the restricted saturation step.
It is straightforward to extend this proof to include ECPDS as in the proof of
\reflemma{lem:ecpds-completeness} above.


\section{Definitions and Proofs for Multi-Stack CPDS}
\label{sec:multi-appendix}

\subsection{Multi-Stack Collapsible Pushdown Automata}

We formally define mutli-stack collapsible pushdown automata.

\begin{definition}[Multi-Stack Collapsible Pushdown Automata]
    An order-$\cpdsord$ \emph{multi-stack collapsible pushdown automaton
    ($\cpdsord$-OCPDA)} over input alphabet $\oalphabet$ is a tuple $\cpds =
    \tuple{\controls, \alphabet, \cpdsrules_1, \ldots, \cpdsrules_\numstacks}$
    where $\controls$ is a finite set of control states, $\alphabet$ is a finite
    stack alphabet, $\oalphabet$ is a finite set of output symbols, and for each
    $1 \leq \idxi \leq \numstacks$ we have a set of rules $\cpdsrules_\idxi
    \subseteq \controls \times \alphabet \times \oalphabet \times
    \cops{\cpdsord} \times \controls$.
\end{definition}

Configurations of an OCPDA are defined identically to configurations for OCPDS.
We have a transition
\[
    \config{\control}{\stackw_1,\ldots,\stackw_\numstacks} \cpdatran{\ocha}
    \config{\control'}{\stackw_1, \ldots, \stackw_{\idxi-1}, \stackw'_\idxi,
    \stackw_{\idxi+1}, \ldots, \stackw_\numstacks}
\]
whenever $\cpdsruler = \cpdarule{\control}{\cha}{\ocha}{\genop}{\control'} \in
\cpdsrules_\idxi$ with $\cha = \apply{\ctop{1}}{\stackw}$, $\stackw'_\idxi =
\apply{\genop}{\stackw_\idxi}$.

\subsection{Regular Sets of Configurations}

We prove several properties about \refdefinition{def:regular-set-multi}. 

\begin{property}
    Regular sets of configurations of a multi-stack CPDS 
    \begin{enumerate}
        \item \label{item:bool-alg} form an effective boolean algebra, 
        
        \item \label{item:emptiness} the emptiness problem is decidable in
              PSPACE, 
              
        \item \label{item:membership} the membership problem is decidable in
              linear time.  
    \end{enumerate}
\end{property}
\begin{proof}
    We first prove $(\ref{item:bool-alg})$.  We recall from~\cite{BCHS12} that
    stack automata form an effective boolean algebra.  Given two regular sets
    $\regtuples_1$ and $\regtuples_2$, we can form $\regtuples = \regtuples_1
    \cup \regtuples_2$ as the simple union of the two sets of tuples.  We obtain
    the intersection of $\regtuples_1$ and $\regtuples_2$ by defining
    $\regtuples = \regtuples_1 \cap \regtuples_2$ via a product construction.
    That is, 
    \[
        \regtuples = \setcomp{
                         \tuple{\control, \saauta_1 \cap \saauta'_1, \ldots,
                         \saauta_\numstacks \cap \saauta'_\numstacks}
                     }{\begin{array}{c}        
                         \tuple{\control, \saauta_1, \ldots, \saauta_\numstacks}
                         \in \regtuples_1 \ \land \\

                         \tuple{\control, \saauta'_1, \ldots,
                         \saauta'_\numstacks} \in \regtuples_2
                     \end{array}} \ .
    \]        
    It remains to define the complement $\negation{\regtuples}$ of a set
    $\regtuples$.  Let $\regtuples = \regtuples_1 \cup \cdots \cup
    \regtuples_\numof$ where each $\regtuples_\idxi$ is a singleton set of
    tuples.  Observe that $\negation{\regtuples} = \negation{\regtuples_1}
    \cap \cdots \cap \negation{\regtuples_\numof}$.  Hence, we define for a
    singleton $\regtuples_\idxi$ its complement $\negation{\regtuples_\idxi}$.
    Let $\saauta$ be a stack automaton accepting all stacks.  Furthermore, let
    $\regtuples_\idxi$ contain only $\tuple{\control, \saauta_1, \ldots,
    \saauta_\numof}$.  We define
    \[
        \begin{array}{rcl}
            \negation{\regtuples_\idxi} &=& \setcomp{\tuple{\control', \saauta,
            \ldots, \saauta}}{\control \neq \control' \in \controls} \ \cup \\

            & & \setcomp{\tuple{\control, \saauta, \ldots, \saauta,
            \negation{\saauta_\idxj}, \saauta, \ldots, \saauta}}{1 \leq \idxj
            \leq \numstacks} \ .
        \end{array}
    \]
    That is, either the control state does not match, or at least one of the
    $\numstacks$ stacks does not match.

    We now prove $(\ref{item:emptiness})$.  We know from~\cite{BCHS12} that the
    emptiness problem for a stack automaton is PSPACE.  By checking all tuples
    to find some tuple $\tuple{\control, \saauta_1, \ldots,
    \saauta_\numstacks}$ such that $\saauta_\idxi$ is non-empty for all $\idxi$,
    we have a PSPACE algorithm for determining the emptiness of a regular set
    $\regtuples$.

    Finally, we show $(\ref{item:membership})$, recalling from~\cite{BCHS12}
    that the membership problem for stack automata is linear time.  To check
    whether $\config{\control}{\stackw_1, \ldots, \stackw_\numstacks}$ is
    contained in $\regtuples$ we check each tuple $\tuple{\control, \saauta_1,
    \ldots, \saauta_\numstacks} \in \regtuples$ to see if $\stackw_\idxi$ is
    contained in $\saauta_\idxi$ for all $\idxi$.  This requires linear time.
\end{proof}

\section{Proofs for Ordered CPDS}
\label{sec:ordered-appendix}

\subsection{Proofs for Simulation by $\rightcpds$}

We prove \reflemma{lem:ecpds-sim-ocpds} via
Lemma~\ref{lem:ecpds-sim-ocpds-only-if} and Lemma~\ref{lem:ecpds-sim-ocpds-if}
below.

\begin{lemma}\label{lem:ecpds-sim-ocpds-only-if}
    Given an $\cpdsord$-OCPDS $\cpds$ and control states $\controlin,
    \controlout$, we have 
    \[
        \config{\controlin}{\emptystack{\cpdsord}, \ldots,
        \emptystack{\cpdsord}, \stackw} \cpdstran \cdots \cpdstran
        \config{\controlout}{\emptystack{\cpdsord}, \ldots,
        \emptystack{\cpdsord}} \ .
    \]
    only if $\config{\controlin}{\stackw} \in \prestar{\rightcpds}{\saauta}$,
    where $\saauta$ is the $\controls$-stack automaton accepting only the
    configuration $\config{\controlout}{\emptystack{\cpdsord}}$.
\end{lemma}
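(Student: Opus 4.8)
The plan is to turn the given run of $\cpds$ into a run of the extended CPDS $\rightcpds$ that ends in $\config{\controlout}{\emptystack{\cpdsord}}$; since this is the unique configuration accepted by $\saauta$, and $\prestar{\rightcpds}{\saauta}$ contains $\apply{\lang}{\saauta}$ and is closed under predecessors, it will follow that $\config{\controlin}{\stackw} \in \prestar{\rightcpds}{\saauta}$. First I would decompose the run into maximal alternating segments $\runsegment_1 \altrunsegment_1 \ldots \runsegment_\numof \altrunsegment_\numof$, where throughout each $\altrunsegment_\idxi$ all of stacks $1, \ldots, \numstacks-1$ are empty and throughout each $\runsegment_\idxi$ at least one of them is non-empty (permitting empty leading/trailing segments, since the run both starts and ends with these stacks empty). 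Two structural facts make this decomposition usable. By the \wlogen assumption that every rule firing on $\sbot$ is a $\cpush{\cha}{\cpdsord}$, an all-empty phase can only be left by pushing on some stack $\idxj < \numstacks$ via a rule $\cpdsrule{\control}{\sbot}{\cpush{\chb}{\cpdsord}}{\control_1} \in \cpdsrules_\idxj$, and no rule of $\cpdsrules_1, \ldots, \cpdsrules_{\numstacks-1}$ can fire inside an all-empty phase without ending it. By the ordering restriction, a consuming operation on stack $\numstacks$ is forbidden while any lower stack is non-empty, so during every $\runsegment_\idxi$ only generating operations act on stack $\numstacks$.

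Next I would simulate the two kinds of phase. Each $\altrunsegment_\idxi$ consists entirely of moves on stack $\numstacks$ governed by $\cpdsrules_\numstacks \subseteq \cpdsrules'$, and $\rightcpds$ --- which keeps stack $\numstacks$ as its single stack --- replays them verbatim. For a phase $\runsegment_\idxi$, opening at control $\control$ with $\apply{\ctop{1}}{\stackw_\numstacks} = \cha$ via the push rule $\cpdsrule{\control}{\sbot}{\cpush{\chb}{\cpdsord}}{\control_1} \in \cpdsrules_\idxj$ and closing (when the lower stacks are empty again) at control $\control_2$, I would fire the single extended rule $\cpdsrule{\control}{\cha}{\cpdalang{\leftcpda}{\control_1}{\cha}{\control_2}{\chb}{\idxj}}{\control_2} \in \cpdsrules_\genlang$. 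Its guard matches because $\rightcpds$ sits at $\control$ with top-of-stack-$\numstacks$ equal to $\cha$; its effect is to apply to stack $\numstacks$ some word $\opseq$ of the attached language while threading the control state from $\control$ through $\control_1$ to $\control_2$ (the $\control \to \control_1$ step being the lower-stack push, which leaves stack $\numstacks$ untouched). It remains to exhibit the right $\opseq$ and to check the effect agrees with $\runsegment_\idxi$.

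The crux --- and the main obstacle --- is to show that the stack-$\numstacks$ behaviour of $\runsegment_\idxi$ is recorded by a word of $\cpdalang{\leftcpda}{\control_1}{\cha}{\control_2}{\chb}{\idxj}$ and faithfully reproduced on the real stack. I would run $\leftcpda$ in lockstep with $\runsegment_\idxi$: starting from $\config{\tuple{\control_1, \cha}}{\stackw_1, \ldots, \stackw_{\numstacks-1}}$ with stack $\idxj$ set to $\apply{\cpush{\chb}{\cpdsord}}{\emptystack{\cpdsord}}$ and the rest empty, it mimics every move on stacks $1, \ldots, \numstacks-1$ (obeying the same order on its $\numstacks-1$ stacks as $\cpds$), emitting a $\noop$ on stack $\numstacks$ for each such move, and it emits the genuine operation whenever $\cpds$ acts on stack $\numstacks$, keeping the current top character of stack $\numstacks$ in its control component. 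The reason a single tracked character is enough is exactly that $\runsegment_\idxi$ performs only generating operations on stack $\numstacks$: each such operation fixes the next top character from the present one alone ($\rew{\chc}$ and $\cpush{\chc}{\opord}$ set it to $\chc$, while $\scopy{\opord}$ and $\noop$ leave it unchanged), so $\leftcpda$'s tracked character never drifts from the real top of stack $\numstacks$ and every guard is matched correctly. As $\runsegment_\idxi$ terminates with all lower stacks empty at control $\control_2$, the emitted word $\opseq$ lies in the language. Finally, since $\opseq$ is literally the list of stack-$\numstacks$ operations (with $\noop$s interleaved) that $\cpds$ performed, applying it to $\rightcpds$'s copy of stack $\numstacks$ --- where the annotations created by each $\cpush{\chc}{\opord}$ are recomputed against the genuine stack, which by induction over the replay coincides with $\cpds$'s stack $\numstacks$ --- yields exactly the stack $\runsegment_\idxi$ produced. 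Concatenating the replays of all phases gives a run $\config{\controlin}{\stackw} \cpdsrun \config{\controlout}{\emptystack{\cpdsord}}$ of $\rightcpds$, from which membership in $\prestar{\rightcpds}{\saauta}$ follows as described.
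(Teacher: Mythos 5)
Your proof is correct and follows essentially the same route as the paper's: the same decomposition into alternating empty/non-empty phases, the same use of the w.l.o.g.\ assumption on $\sbot$-rules to confine lower-stack rules to the non-empty phases, replay of the empty phases via $\cpdsrules_\numstacks$, and simulation of each non-empty phase by a single extended rule whose language word is the projection of that phase onto stack $\numstacks$ via $\leftcpda$. The only differences are presentational (you build the $\rightcpds$-run forwards and then invoke closure of $\prestar{\rightcpds}{\saauta}$ under predecessors, where the paper inducts backwards on membership) plus a welcome elaboration of the lockstep argument the paper compresses into ``we can immediately project a sequence''.
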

\begin{proof}
    Take such a run
    \[
        \config{\controlin}{\emptystack{\cpdsord}, \ldots,
        \emptystack{\cpdsord}, \stackw} \cpdstran \cdots \cpdstran
        \config{\controlout}{\emptystack{\cpdsord}, \ldots,
        \emptystack{\cpdsord}} 
    \]
    of $\cpds$.  Observe that the run can be partitioned into
    $\altrunsegment_0\runsegment_1\altrunsegment_1\ldots\runsegment_\numof\altrunsegment_\numof$
    where during each $\altrunsegment_\idxi$, the first $(\numstacks - 1)$
    stacks are $\emptystack{\cpdsord}$, and, during each $\runsegment_\idxi$,
    there is at least one stack in the first $(\numstacks - 1)$ stacks that is
    not $\emptystack{\cpdsord}$.  Let $\control^1_\idxi$ be the control state of
    the first configuration of $\altrunsegment_\idxi$, $\control^2_\idxi$ be the
    control state in the final configuration of $\altrunsegment_\idxi$,
    $\control^3_\idxi$ be the control state at the beginning of each
    $\runsegment_\idxi$, and $\control^4_\idxi$ be the control state at the end
    of each $\runsegment_\idxi$.  Note, $\control^4_\numof = \controlout$ and
    $\control^1_1 = \controlin$.  Next, let $\cpdsruler_\idxi$ be the rule fired
    between the final configuration of $\altrunsegment_{\idxi-1}$ and the first
    configuration of $\runsegment_\idxi$ (if it exists).  Finally, let
    $\stackw_\idxi$ be the contents of stack $\numstacks$ in the final
    configuration of each $\altrunsegment_\idxi$.  Note $\stackw_\numof =
    \stackw$.

    We proceed by backwards induction from $\idxi = \numof$ down to $\idxi = 0$.
    Trivially it is the case that $\config{\control^4_\numof}{\stackw_\numof}
    \in \prestar{\rightcpds}{\saauta}$.

    In the inductive step, first assume
    $\config{\control^4_\idxi}{\stackw_\idxi} \in
    \prestar{\rightcpds}{\saauta}$.  We have the final configuration of
    $\altrunsegment_\idxi$ is $\config{\control^4_\idxi}{\emptystack{\cpdsord},
    \ldots, \emptystack{\cpdsord}, \stackw_\idxi}$.  Let
    $\config{\control^3_\idxi}{\emptystack{\cpdsord}, \ldots,
    \emptystack{\cpdsord}, \stackw'}$ be the first configuration of
    $\altrunsegment_\idxi$.  Note, since we assume all rules of the form
    $\cpdsrule{\control_1}{\sbot}{\genop}{\control_2}$ have $\genop =
    \cpush{\cha}{\cpdsord}$ for some $\cha$, and during $\altrunsegment_\idxi$
    the first $(\numstacks - 1)$ stacks are empty, we know that no rule from
    $\cpdsrules_1, \ldots, \cpdsrules_{\numstacks - 1}$ was used during
    $\altrunsegment_\idxi$.  Thus, $\altrunsegment_\idxi$ is a run of
    $\rightcpds$ using only rules from $\cpdsrules_\numstacks$.  Hence, we have
    $\config{\control^3_\idxi}{\stackw'} \in \prestar{\rightcpds}{\saauta}$.  

    Now consider $\runsegment_\idxi$ with
    $\config{\control^3_\idxi}{\emptystack{\cpdsord}, \ldots,
    \emptystack{\cpdsord}, \stackw'}$ appended to the end.  Suppose we have that
    $\cpdsruler_{\idxi-1} =
    \cpdsrule{\control^4_{\idxi-1}}{\sbot}{\cpush{\chb}{\cpdsord}}{\control^1_\idxi}
    \in \cpdsrules_\idxj$.  We thus have a run 
    \[
        \config{\control^1_\idxi}{\stackw'_1, \ldots, \stackw'_{\numstacks-1},
        \stackw_{\idxi-1}} \cpdatran{\cpdsruler^1} \cdots
        \cpdatran{\cpdsruler^{\numof-1}} \config{\control^2_\idxi}{\stackw''_1,
        \ldots, \stackw''_\numstacks} \cpdatran{\cpdsruler^\numof}
        \config{\control^3_\idxi}{\emptystack{\cpdsord}, \ldots,
        \emptystack{\cpdsord}, \stackw'}
    \]
    where $\stackw'_\idxj =
    \apply{\cpush{\chb}{\cpdsord}}{\emptystack{\cpdsord}}$ and
    $\stackw'_{\idxj'} = \emptystack{\cpdsord}$ for all $\idxj' \neq \idxj$.
    Since it is not the case that the first $(\numstacks - 1)$ stacks are empty,
    we know that only generating rules from $\cpdsrules_\numstacks$ can be used
    during this run.  Let $\apply{\ctop{1}}{\stackw_{\idxi-1}} = \cha$.  From
    this run we can immediately project a sequence
    $\cpdsrule{\control^0}{\cha^1}{\genop^1}{\control^1}\cpdsrule{\control^1}{\cha^2}{\genop^2}{\control^2}\ldots\cpdsrule{\control^{\numof'-1}}{\cha^\numof}{\genop^{\numof'}}{\control^{\numof'}}
    \in
    \cpdalang{\leftcpda}{\control^1_\idxi}{\cha}{\control^3_\idxi}{\chb}{\idxj}$
    such that we have $\stackw' =
    \apply{\genop^{\numof'}}{\cdots\apply{\genop^1}{\stackw_{\idxi-1}}}$,
    $\control^0 = \control^1_\idxi$ and $\control^{\numof'} = \control^3_\idxi$.
    Since we have $\config{\control^3_\idxi}{\stackw'} \in
    \prestar{\rightcpds}{\saauta}$ and a rule
    $\cpdsrule{\control^4_{\idxi-1}}{\cha}{\cpdalang{\leftcpda}{\control^1_\idxi}{\cha}{\control^3_\idxi}{\chb}{\idxj}}{\control^3_\idxi}$
    in $\rightcpds$, we thus have
    $\config{\control^4_{\idxi-1}}{\stackw_{\idxi-1}} \in
    \prestar{\rightcpds}{\saauta}$ as required.
    
    Hence, when $\idxi = 0$, we have $\config{\controlin}{\stackw} \in
    \prestar{\rightcpds}{\saauta}$, completing the proof.    
\end{proof}

\begin{lemma}\label{lem:ecpds-sim-ocpds-if}
    Given an $\cpdsord$-OCPDS $\cpds$ and control states $\controlin,
    \controlout$, we have 
    \[
        \config{\controlin}{\emptystack{\cpdsord}, \ldots,
        \emptystack{\cpdsord}, \stackw} \cpdstran \cdots \cpdstran
        \config{\controlout}{\emptystack{\cpdsord}, \ldots,
        \emptystack{\cpdsord}} \ .
    \]
    whenever $\config{\controlin}{\stackw} \in \prestar{\rightcpds}{\saauta}$,
    where $\saauta$ is the $\controls$-stack automaton accepting only the
    configuration $\config{\controlout}{\emptystack{\cpdsord}}$.
\end{lemma}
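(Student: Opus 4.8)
The plan is to invert the projection argument of Lemma~\ref{lem:ecpds-sim-ocpds-only-if}: from a run of $\rightcpds$ witnessing $\config{\controlin}{\stackw} \in \prestar{\rightcpds}{\saauta}$, I reconstruct a run of $\cpds$ over all $\numstacks$ stacks. First I would unfold the membership into a concrete run $\config{\controlin}{\stackw} = \config{\control_0}{\stackw^{(0)}} \cpdstran \cdots \cpdstran \config{\control_\numof}{\stackw^{(\numof)}} = \config{\controlout}{\emptystack{\cpdsord}}$ of $\rightcpds$, which exists because $\saauta$ accepts only $\config{\controlout}{\emptystack{\cpdsord}}$.

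The argument then proceeds by induction on the number of $\rightcpds$-steps, maintaining the invariant that $\cpds$ can reach $\config{\control_j}{\emptystack{\cpdsord}, \ldots, \emptystack{\cpdsord}, \stackw^{(j)}}$ from its initial configuration; that is, the first $(\numstacks-1)$ stacks are empty at every step boundary, while stack $\numstacks$ holds $\stackw^{(j)}$. If the $j$th step uses an ordinary rule $\cpdsrule{\control_j}{\cha}{\genop}{\control_{j+1}} \in \cpdsrules_\numstacks$, I simply replay it on stack $\numstacks$ of $\cpds$: since the first $(\numstacks-1)$ stacks are empty, the order restriction is satisfied even when $\genop$ is consuming, so this is a legal $\cpds$-transition that re-establishes the invariant.

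The substantive case is an extended rule $\cpdsrule{\control_j}{\cha}{\genlang}{\control_{j+1}} \in \cpdsrules_\genlang$, where $\genlang = \cpdalang{\leftcpda}{\control_1}{\cha}{\control_{j+1}}{\chb}{\idxi}$ arises from some $\cpdsrule{\control_j}{\sbot}{\cpush{\chb}{\cpdsord}}{\control_1} \in \cpdsrules_\idxi$ with $\idxi < \numstacks$. Unfolding the ECPDS semantics gives a witnessing word $\opseq \in \genlang$ whose operations $\genop_1, \ldots, \genop_{\numof'}$ satisfy $\stackw^{(j+1)} = \apply{\genop_{\numof'}}{\cdots\apply{\genop_1}{\stackw^{(j)}}}$, and by definition of $\cpdalang{\leftcpda}{\cdot}{\cdot}{\cdot}{\cdot}{\cdot}$ a run of $\leftcpda$ over $\opseq$ from $\config{\tuple{\control_1, \cha}}{\stackw_1, \ldots, \stackw_{\numstacks-1}}$, with the $\idxi$th stack equal to $\apply{\cpush{\chb}{\cpdsord}}{\emptystack{\cpdsord}}$ and the rest empty, to $\config{\tuple{\control_{j+1}, \chc}}{\emptystack{\cpdsord}, \ldots, \emptystack{\cpdsord}}$ for some $\chc$. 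I would first prepend the $\cpds$-transition using $\cpdsrule{\control_j}{\sbot}{\cpush{\chb}{\cpdsord}}{\control_1}$ on stack $\idxi$ (legal since stack $\idxi$ is empty), and then reconstruct the segment $\runsegment$ from the $\leftcpda$-run: each $\leftcpda$-step taken via a rule of $\cpdsrules'_\idxj$ is replayed as the corresponding $\cpdsrules_\idxj$-transition on stack $\idxj$ (leaving stack $\numstacks$ untouched), and each $\leftcpda$-step that outputs a generating rule $\cpdsruler \in \cpdsrules_\numstacks$ is replayed by applying $\cpdsruler$ to stack $\numstacks$.

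The main obstacle is verifying that this reconstruction really is a run of $\cpds$, which rests on two invariants established along the segment. First, the $\alphabet$-component carried in the control state of $\leftcpda$ stays equal to $\apply{\ctop{1}}{\stackw_\numstacks}$, so that the guard of each output generating rule matches the genuine top character of stack $\numstacks$; this is exactly the bookkeeping encoded in the rules $\cpdsrules'$ of $\leftcpda$, and I would prove it is preserved by an inner induction on the length of the segment. Second, the order restriction holds throughout: the first $(\numstacks-1)$ stacks evolve precisely as in the $\leftcpda$-run, which is itself an $\cpdsord$-OCPDA and so respects the restriction on those stacks, while stack $\numstacks$ undergoes only generating operations during the segment and therefore never invokes the restriction. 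Hence the reconstructed segment ends at $\config{\control_{j+1}}{\emptystack{\cpdsord}, \ldots, \emptystack{\cpdsord}, \stackw^{(j+1)}}$, re-establishing the outer invariant; concatenating the $\numof$ segments yields the required run of $\cpds$ from $\config{\controlin}{\emptystack{\cpdsord}, \ldots, \emptystack{\cpdsord}, \stackw}$ to $\config{\controlout}{\emptystack{\cpdsord}, \ldots, \emptystack{\cpdsord}}$.
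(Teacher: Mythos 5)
Your proposal is correct and follows essentially the same route as the paper's proof: decompose the $\rightcpds$-run into segments separated by extended rules, replay ordinary $\cpdsrules_\numstacks$-rules directly (legal since the first $\numstacks-1$ stacks are empty), and unfold each extended rule via the witnessing word in $\cpdalang{\leftcpda}{\control_1}{\cha}{\control_2}{\chb}{\idxi}$ and the corresponding $\leftcpda$-run to reconstruct the segment on the other stacks. The only differences are cosmetic — you induct forwards over steps where the paper inducts backwards over segments — and you make explicit the top-of-stack tracking invariant that the paper leaves implicit in "this sequence immediately describes a run".
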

\begin{proof}
    Since $\config{\controlin}{\stackw} \in \prestar{\rightcpds}{\saauta}$ we
    have a run of $\rightcpds$ of the form
    $\runsegment_1\ldots\runsegment_\numof$ where the rules used to connect the
    last configuration of $\runsegment_\idxi$ to $\runsegment_{\idxi+1}$ are of
    the form $\cpdsrule{\control'_\idxi}{\cha}{\genlang}{\control_{\idxi + 1}}$
    and no other rules of this form are used otherwise.  Thus, let
    $\control'_\idxi$ denote the control state at the end of $\runsegment_\idxi$
    and $\control_\idxi$ denote the control state in the first configuration of
    $\runsegment_\idxi$.  Similarly, let $\stackw'_\idxi$ denote the stack
    contents at the end of $\runsegment_\idxi$ and $\stackw_\idxi$ the stack
    contents at the beginning.

    We proceed by induction from $\idxi = \numof$ down to $\idxi = 1$.  In the
    base case, we immediately have a run from
    $\config{\control_\numof}{\emptystack{\cpdsord}, \ldots,
    \emptystack{\cpdsord}, \stackw_\numof}$ to
    $\config{\control'_\numof}{\emptystack{\cpdsord}, \ldots,
    \emptystack{\cpdsord}}$.  Now, assume the we have a run from
    $\config{\control'_\idxi}{\emptystack{\cpdsord}, \ldots,
    \emptystack{\cpdsord}, \stackw'_\idxi}$ to the final configuration.  Since
    we have a run to this configuration from
    $\config{\control_\idxi}{\stackw_\idxi}$ to
    $\config{\control'_\idxi}{\stackw'_\idxi}$ in $\rightcpds$ that uses only
    ordinary rules, we can execute the same run from
    $\config{\control_\idxi}{\emptystack{\cpdsord}, \ldots,
    \emptystack{\cpdsord}, \stackw_\idxi}$ to reach
    $\config{\control'_\idxi}{\emptystack{\cpdsord}, \ldots,
    \emptystack{\cpdsord}, \stackw'_\idxi}$.

    Now consider the rule
    $\cpdsrule{\control'_{\idxi-1}}{\cha}{\genlang}{\control_\idxi}$ that
    connects $\runsegment_{\idxi-1}$ and $\runsegment_\idxi$.  We have $\genlang
    = \cpdalang{\leftcpda}{\control^1_\idxi}{\cha}{\control_\idxi}{\chb}{\idxj}$
    for some $\control^1_\idxi$, $\chb$ and $\idxj$, and there is a rule
    $\cpdsrule{\control'_{\idxi-1}}{\sbot}{\cpush{\chb}{\cpdsord}}{\control^1_\idxi}
    \in \cpdsrules_\idxj$ of $\cpds$.  Furthermore, there is a sequence
    $\cpdsrule{\control^0}{\cha^1}{\genop^1}{\control^1}\cpdsrule{\control^1}{\cha^2}{\genop^2}{\control^2}\ldots\cpdsrule{\control^{\numof'-1}}{\cha^\numof}{\genop^{\numof'}}{\control^{\numof'}}
    \in \genlang$ such that $\stackw_\idxi =
    \apply{\genop^{\numof'}}{\cdots\apply{\genop^1}{\stackw'_{\idxi-1}}}$,
    $\control^0 = \control^1_\idxi$, and $\control^{\numof'} = \control_\idxi$.
  
    From the definition of $\leftcpda$, this sequence immediately describes a
    run
    \[
        \begin{array}{rcl}
            \config{\control'_{\idxi-1}}{\emptystack{\cpdsord}, \ldots,
            \emptystack{\cpdsord}, \stackw'_{\idxi-1}} &\cpdstran&
            \config{\control^1_\idxi}{\emptystack{\cpdsord}, \ldots,
            \apply{\cpush{\chb}{\cpdsord}}{\emptystack{\cpdsord}}, \ldots,
            \emptystack{\cpdsord}, \stackw'_{\idxi-1}} \\
            
            &\cpdstran& \cdots \\
            
            &\cpdstran& \config{\control_\idxi}{\emptystack{\cpdsord}, \ldots,
            \emptystack{\cpdsord}, \stackw_\idxi} 
        \end{array}
    \]
    of $\cpds$.  Thus we have a run from
    $\config{\control'_{\idxi-1}}{\emptystack{\cpdsord}, \ldots,
    \emptystack{\cpdsord}, \stackw'_{\idxi-1}}$ to the final configuration, to
    complete the inductive case.

    Finally, when $\idxi = 1$, we repeat the first half of the argument above to
    obtain a run from $\config{\control_1}{\emptystack{\cpdsord}, \ldots,
    \emptystack{\cpdsord}, \stackw_1}$, and since $\control_1 = \controlin$ and
    $\stackw_1 = \stackw$ we have a run of $\cpds$ as required.  
\end{proof}

\subsection{Proofs for Language Emptiness for OCPDS} 

We prove \reflemma{lem:ocpds-lang-emp} below.

\begin{proof}
    By standard product construction arguments, a run of $\langcheckcpds$ can be
    projected into runs of $\leftcpda$ and $\taaut{\saauta_\idxi}{\sat}{\sat'}$
    and vice-versa.  We need only note that in any control state
    $\tuple{\control, \sat_1}$ of $\langcheckcpds$, the corresponding state in
    $\leftcpda$ is always $\tuple{\control, \apply{\ctop{1}}{\sat_1}}$.
\end{proof}

\subsection{Global Reachability}

We provide an inductive proof of global reachability for ordered CPDS.

\begin{proof} 
    Take $\saauta_\numstacks = \prestar{\rightcpds}{\saauta}$ from
    \reflemma{lem:ecpds-sim-ocpds}.  Furthermore, let $\saauta_\sbot$ be the
    stack automaton accepting only $\emptystack{\cpdsord}$ from its initial
    state.  For each control state $\control$, we have that $\tuple{\control,
    \saauta_\sbot, \ldots, \saauta_\sbot, \saauta_\numstacks}$ represents all
    configurations $\config{\control}{\emptystack{\cpdsord}, \ldots,
    \emptystack{\cpdsord}, \stackw_\numstacks}$ for which there is a run to
    $\config{\controlout}{\emptystack{\cpdsord}, \ldots, \emptystack{\cpdsord}}$
    when $\saauta_\numstacks$ is restricted to have initial state
    $\sastate_\control$.  

    Hence, inductively assume for $\idxi+1$ that we have a finite set of tuples
    $\regtuples$ such that for each configuration
    $\config{\control}{\emptystack{\cpdsord}, \ldots, \emptystack{\cpdsord},
    \stackw_{\idxi+1}, \ldots, \stackw_\numstacks}$ for which there is a run to
    $\config{\controlout}{\emptystack{\cpdsord}, \ldots, \emptystack{\cpdsord}}$
    there is a tuple $\tuple{\control, \saauta_\sbot, \ldots, \saauta_\sbot,
    \saauta_{\idxi+1}, \ldots, \saauta_\numstacks}$ such that $\stackw_\idxj$ is
    accepted by $\saauta_\idxj$ for each $\idxj$.

    Now consider any configuration $\config{\control}{\emptystack{\cpdsord},
    \ldots, \emptystack{\cpdsord}, \stackw_{\idxi}, \ldots, \stackw_\numstacks}$
    that can reach the final configuration.  We know the run goes via some
    $\config{\control'}{\emptystack{\cpdsord}, \ldots, \emptystack{\cpdsord},
    \stackw'_{\idxi+1}, \ldots, \stackw'_\numstacks}$ accepted by some tuple
    $\tuple{\control', \saauta_\sbot, \ldots, \saauta_\sbot, \saauta_{\idxi+1},
    \ldots, \saauta_\numstacks} \in \regtuples$.  Furthermore, we know from the
    proof of correctness of the extended saturation algorithm, that there is a
    run of the $\idxi$ stack OCPDS $\langcheckcpds$ from
    $\config{\tuple{\control, \sat_{\idxi+1}, \ldots,
    \sat_\numstacks}}{\emptystack{\cpdsord}, \ldots, \emptystack{\cpdsord},
    \stackw_{\idxi}}$ to $\config{\tuple{\control', \sat'_{\idxi+1}, \ldots,
    \sat'_\numstacks}}{\emptystack{\cpdsord}, \ldots, \emptystack{\cpdsord}}$
    where 
    \begin{enumerate}
        \item $\sat'_\idxj$ is the initial transition of $\saauta_\idxj$
              accepting $\stackw'_\idxj$, and 
              
        \item the sequence of stack operations to the $\idxj$th stack $\genop_1,
              \ldots, \genop_\numof$ connected to this run give $\stackw'_\idxj =
              \apply{\genop_\numof}{\cdots\apply{\genop_1}{\stackw_\idxj}}$, and

        \item $\stackw_\idxj$ can be accepted by first taking transition
              $\sat_\idxj$ and thereafter only transitions in $\saauta_\idxj$.  
    \end{enumerate}

    Thus, let $\saauta_\idxi$ be $\prestar{\langcheckcpds}{\saauta}$ where
    $\saauta$ accepts $\config{\tuple{\control', \sat'_{\idxi+1}, \ldots,
    \sat'_\numstacks}}{\emptystack{\cpdsord}}$.  Restrict $\saauta_\idxi$ to
    have initial state $\sastate_{\tuple{\control, \sat_{\idxi+1}, \ldots,
    \sat_\numstacks}}$ and let $\saauta^{\sat_\idxj}_\idxj$ be the automaton
    $\saauta_\idxj$ with the transition $\sat_\idxj$ added from a new state,
    which is designated as the initial state.  Thus, for each configuration
    $\config{\control}{\emptystack{\cpdsord}, \ldots, \emptystack{\cpdsord},
    \stackw_\idxi, \ldots, \stackw_\numstacks}$, there is a tuple
    $\tuple{\control, \saauta_\sbot, \ldots, \saauta_\sbot, \saauta_\idxi,
        \saauta^{\sat_{\idxi+1}}_{\idxi + 1}, \ldots,
    \saauta^{\sat_\numstacks}_\numstacks}$ such that $\stackw_\idxi$ is accepted
    by $\saauta_\idxi$ and $\stackw_\idxj$ is accepted by
    $\saauta^{\sat_\idxj}_\idxj$ for all $\idxj > \idxi$.  This results in a
    finite set of tuples $\regtuples'$ satisfying the induction hypothesis.

    Thus, after $\idxi = 1$ we obtain a finite set of tuples $\regtuples$ of the
    form $\tuple{\control, \saauta_1, \ldots, \saauta_\numstacks}$ representing
    all configurations that can reach
    $\config{\controlout}{\emptystack{\cpdsord}, \ldots,
    \emptystack{\cpdsord}}$, as required.
\end{proof}

\subsection{Complexity}

Assume $\cpdsord > 1$.  Our control state reachability algorithm requires
$\exptower{\numstacks(\cpdsord-1)}{\numof}$ time, where $\numof$ is polynomial
in the size of the OCPDS.  Beginning with stack $\numstacks$, the saturation
algorithm can add at most $\bigo{\exptower{\cpdsord-1}{\numof}}$ transitions
over the same number of iterations.  Each of these iterations may require
analysis of some $\langcheckcpds$ which has
$\bigo{\exptower{\cpdsord-1}{\numof}}$ control states and thus the
stack-automaton constructed by saturation over $\langcheckcpds$ may have up to
$\bigo{\exptower{2(\cpdsord-1)}{\numof}}$ transitions.  By continuing in this
way, we have at most $\bigo{\exptower{(\numstacks-1)(\cpdsord-1)}{\numof}}$
control states when there is only one stack remaining, and thus the number of
transitions, and the total running time of the algorithm is
$\bigo{\exptower{\numstacks(\cpdsord-1)}{\numof}}$.  This also gives us at most
$\bigo{\exptower{\numstacks\cpdsord}{\numof}}$ tuples in the solution to the
global reachability problem.

\section{Phase-Bounded CPDS} 
\label{sec:phase-appendix}

Phase-bounding~\cite{lTMP07} for multi-stack pushdown systems is a restriction
where each computation can be split into a fixed number of phases.  During each
phase, characters can only be removed from one stack, but push actions may occur
on any stack. 

\begin{definition}[Phase-Bounded CPDS]
    Given a fixed number $\numphases$ of phases, an order-$\cpdsord$
    \emph{phase-bounded CPDS} ($\cpdsord$-PBCPDS) is an $\cpdsord$-MCPDS with
    the restriction that each run $\runsegment$ can be partitioned into
    $\runsegment_1 \ldots \runsegment_\numphases$ and for all $\idxi$, if some
    transition in $\runsegment_\idxi$ by $\cpdsruler \in \cpdsrules_\idxj$ on
    stack $\idxj$ for some $\idxj$ is consuming, then all consuming transitions
    in $\runsegment_\idxi$ are by some $\cpdsruler' \in \cpdsrules_\idxj$ on
    stack $\idxj$.
\end{definition}

We give a direct\footnote{For PDS, phase-bounded reachability can be reduced to
ordered PDS.  We do not know if this holds for CPDS, and prefer instead to give
a direct algorithm.} algorithm for deciding the reachability problem over
phase-bounded CPDSs.  We remark that Seth~\cite{S10} presented a saturation
technique for order-$1$ phase-bounded pushdown systems.  Our algorithm was
developed independently of Seth's, but our product construction can be compared
with Seth's automaton $T_i$.

\begin{theorem}[Decidability of the Reachability Problems]
    For $\cpdsord$-PBCPDSs the control state reachability problem and the global
    control state reachability problem are decidable.
\end{theorem}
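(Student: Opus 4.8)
The plan is to solve both problems at once by a backwards analysis: I would compute a regular representation (in the sense of \refdefinition{def:regular-set-multi}) of the set $\regset$ of all configurations that can reach the target control state $\controlout$ within the phase bound. This set is exactly the answer to the global problem, and control state reachability then follows by testing whether $\config{\controlin}{\emptystack{\cpdsord}, \ldots, \emptystack{\cpdsord}} \in \regset$, which is decidable since membership is linear time. The starting point is the regular set of all configurations whose control state is $\controlout$, with each stack automaton accepting every stack. The induction is over the number of phases rather than the number of stacks: writing a run as $\runsegment_1 \ldots \runsegment_\numphases$, I would define a \emph{single-phase predecessor} operator mapping a regular set to a regular representation of all configurations from which one phase reaches it, and iterate this operator $\numphases$ times, taking at each step the union over the choice of the phase's \emph{active} stack. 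Finiteness of this union and of the iteration count reduces the theorem to showing that the single-phase operator is effective and preserves regularity.

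The heart of the construction is the single-phase operator. Fix a phase with active stack $\idxj$: by definition of phase-bounding every consuming operation in the phase acts on stack $\idxj$, while every other stack $\idxz \neq \idxj$ receives only \emph{generating} operations. I would therefore treat stack $\idxj$ as the single genuine stack of an order-$\cpdsord$ CPDS and handle each passive stack $\idxz$ by a transition automaton in the sense of \refdefinition{def:tranaut}. Concretely, for a target tuple $\tuple{\control', \saauta_1, \ldots, \saauta_\numstacks}$ of the current regular set and each passive $\idxz$, I build the transition automaton from $\saauta_\idxz$ and carry one of its states --- a long-form transition $\sat^{(\idxz)}$ --- as a component of the control state. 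Since a long-form transition $\satranfull{\sastate}{\cha}{\sastateset_\branch}{\sastateset_1, \ldots, \sastateset_\cpdsord}$ records its own $\ctop{1}$ symbol as $\apply{\tacha}{\sat^{(\idxz)}}$, the guard of a rule of $\cpdsrules_\idxz$ can be checked directly against $\sat^{(\idxz)}$, so no separate bookkeeping of the passive tops is needed. A push to stack $\idxz$ advances $\sat^{(\idxz)}$ along the corresponding edge of $\taaut{\saauta_\idxz}{\sat}{\sat'}$, exactly mirroring the way saturation peels that push off the stack. The result is a one-stack CPDS whose control states are tuples $\tuple{\control, (\sat^{(\idxz)})_{\idxz \neq \idxj}}$, and whose target condition requires $\control = \control'$, stack $\idxj$ accepted by $\saauta_\idxj$, and each $\sat^{(\idxz)}$ to be a transition of $\saauta_\idxz$. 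This is the phase-bounded analogue of the product $\langcheckcpds$ of \refdefinition{def:langcheckcpds}, with the $(\numstacks-1)$-stack component there replaced by $(\numstacks-1)$ finite transition automata, so that its backward-reachable set is computable by the base saturation algorithm of Section~\ref{sec:stackaut}.

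Running base saturation on this one-stack CPDS yields, for stack $\idxj$, a stack automaton accepting exactly the starting contents compatible with the phase, while the starting value of each component $\sat^{(\idxz)}$ --- read off the source control states --- is a long-form transition that the starting content of stack $\idxz$ must satisfy; collecting these into tuples $\tuple{\control, \saauta_1, \ldots, \saauta_\numstacks}$ gives the regular single-phase predecessor, and iterating produces $\regset$. Correctness of this step is where the real work lies: I would establish, by an argument parallel to \reflemma{lem:ecpds-soundness} and \reflemma{lem:ecpds-completeness}, that a run $\sat^{(\idxz)} \tarun{\opseq} \sat'$ of the transition automaton witnesses precisely that applying the generating sequence $\opseq$ to a stack satisfying $\sat^{(\idxz)}$ yields one accepted via $\sat'$. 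The main obstacle is that the higher-order generating operations $\scopy{\opord}$ and $\cpush{\cha}{\opord}$ performed on a passive stack rewrite its deep nested structure, not merely its top; one must argue that the long-form transition machinery faithfully tracks these deep effects even though only the $\ctop{1}$ symbol is consulted by rule guards, and that the interleaving of active- and passive-stack moves within a phase is correctly synchronised through the shared control state. Granting this, regularity is preserved at each of the $\numphases$ iterations, and since each iteration incurs only a bounded (if towering) blow-up in the number of control states and tuples, the construction terminates, giving decidability of both reachability problems.
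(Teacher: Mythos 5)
Your proposal matches the paper's own construction essentially step for step: the paper also iterates backwards over the $\numphases$ phases, starts from automata accepting all stacks at control state $\pbcontrol{\numphases}$, and for each phase builds a product CPDS $\pbcpds{\idxi}$ whose single genuine stack is the active stack $\pbstack{\idxi}$ and whose control states carry, for every passive stack, a state of the transition automaton $\taaut{\pbsa{\idxj}{\idxi}}{\sat_\idxj}{\sat'_\idxj}$ (a guessed long-form transition), then saturates. The correctness argument you sketch --- transferring the soundness and completeness proofs for extended-CPDS saturation to show that transition-automaton runs faithfully track the generating updates to passive stacks --- is exactly how the paper proves Lemma~\ref{lem:phase-reach}.
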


In Appendix~\ref{sec:phase-complexity} we show that our control state
reachability algorithm will require
$\bigo{\exptower{\numstacks(\cpdsord-1)}{\numof}}$ time, where $\numof$ is
polynomial in the size of the PBCPDS, and we have at most
$\bigo{\exptower{\numstacks\cpdsord}{\numof}}$ tuples in the solution to the
global reachability problem.

\subparagraph*{Control State Reachability}

A run of the PBCPDS will be $\pbrunseg{1}\ldots\pbrunseg{\numphases}$,  assuming
(w.l.o.g.) that all phases are used.  We can guess (or enumerate) the sequence
$\pbcontrol{0}\pbcontrol{1}\ldots\pbcontrol{\numphases}$ of control states
occurring at the boundaries of each $\pbrunseg{\idxi}$.  That is,
$\pbrunseg{\idxi}$ ends with control state $\pbcontrol{\idxi}$,
$\pbcontrol{\numphases}$ is the target control state, and $\pbcontrol{0}$ is the
initial control state.  We also guess for each $\idxi$, the stack
$\pbstack{\idxi}$ that may perform consuming operations between
$\pbcontrol{\idxi-1}$ and $\pbcontrol{\idxi}$.  Our algorithm iterates from
$\idxi = \numphases$ down to $\idxi = 0$.

We begin with the stack automata $\pbsa{1}{\numphases}, \ldots,
\pbsa{\numstacks}{\numphases}$ which each accept
$\config{\pbcontrol{\numphases}}{\stackw}$ for all stacks $\stackw$.  Note we
can vary these automata to accept any regular set of stacks we wish. 

Thus, $\pbsa{1}{\idxi}, \ldots, \pbsa{\numstacks}{\idxi}$ will characterise a
possible set of stack contents at the end of phase $\idxi$.  We show below how
to construct $\pbsa{1}{\idxi-1}, \ldots, \pbsa{\numstacks}{\idxi-1}$ given
$\pbsa{1}{\idxi}, \ldots, \pbsa{\numstacks}{\idxi}$.  This is repeated until we
have $\pbsa{1}{0}, \ldots, \pbsa{\numstacks}{0}$.  We then check, for each
$\idxj$, that $\config{\pbcontrol{0}}{\emptystack{\cpdsord}}$ is accepted by
$\pbsa{\idxj}{0}$.  This is the case iff we have a positive instance of the
reachability problem.

We construct $\pbsa{1}{\idxi-1}, \ldots, \pbsa{\numstacks}{\idxi-1}$ from
$\pbsa{1}{\idxi}, \ldots, \pbsa{\numstacks}{\idxi}$.  For each $\idxj \neq
\pbstack{\idxi}$ we build $\pbsa{\idxj}{\idxi-1}$ by adding to
$\pbsa{\idxj}{\idxi}$ a brand new set of initial states $\sastate_\control$ and
a guessed transition $\sat_\idxj =
\satranfull{\sastate_{\pbcontrol{\idxi-1}}}{\cha}{\sastateset_\branch}{\sastateset_1,
\ldots, \sastateset_\cpdsord}$ with $\sastateset_\branch, \sastateset_1, \ldots,
\sastateset_\cpdsord$ being states of $\pbsa{\idxj}{\idxi}$ and
$\sastate_{\pbcontrol{\idxi-1}}$ being one of the new states.  The idea is
$\sat_\idxj$ will be the initial transition accepting
$\config{\pbcontrol{\idxi-1}}{\stackw}$ where $\stackw$ is stack $\idxj$ at the
beginning of phase $\idxi$.  By guessing an accompanying $\sat'_\idxj$ of
$\pbsa{\idxj}{\idxi}$ we can build
$\taaut{\pbsa{\idxj}{\idxi}}{\sat_\idxj}{\sat'_\idxj}$ (by instantiating
\refdefinition{def:tranaut} with $\saauta = \pbsa{\idxj}{\idxi}$, $\sat =
\sat_\idxj$ and $\sat' = \sat'_\idxj$) for which there will be an accepting run
if the updates to stack $\idxj$ during phase $\idxi$ are concordant with the
introduction of transition $\sat_\idxj$.

Thus, for each $\idxj \neq \pbstack{\idxi}$ we have $\pbsa{\idxj}{\idxi-1}$ and
$\taaut{\pbsa{\idxj}{\idxi}}{\sat_\idxj}{\sat'_\idxj}$.  We now consider the
$\pbstack{\idxi}$th stack.  We build a CPDS $\pbcpds{\idxi}$ that accurately
models stack $\pbstack{\idxi}$ and tracks each
$\taaut{\pbsa{\idxj}{\idxi}}{\sat_\idxj}{\sat'_\idxj}$ in its control state.  We
ensure that $\pbcpds{\idxi}$ has a run from
$\config{\pbcontrol{\idxi-1}}{\stackw}$ to
$\config{\pbcontrol{\idxi}}{\stackw'}$ for some $\stackw$ and $\stackw'$ iff
there is a corresponding run over the $\pbstack{\idxi}$th stack of $\cpds$ that
updates the remaining stacks $\idxj$ in concordance with each guessed
$\sat_\idxj$.  Thus, we define $\pbsa{\pbstack{\idxi}}{\idxi-1}$ to be the
automaton recognising $\prestar{\pbcpds{\idxi}}{\pbsa{\pbstack{\idxi}}{\idxi}}$
constructed by saturation.  The construction of $\pbcpds{\idxi}$ (given below)
follows the standard product construction of a CPDS with several finite-state
automata.

Note $\pbcpds{\idxi}$ is looking for a run from $\pbcontrol{\idxi-1}$ to
$\pbcontrol{\idxi}$ concordant with runs of $\sat_\idxj$ to $\sat'_\idxj$ for
each $\idxj$.  To let $\pbcpds{\idxi}$ start in $\pbcontrol{\idxi-1}$ and finish
in $\pbcontrol{\idxi}$, we have an initial transition from $\pbcontrol{\idxi-1}$
to $\tuple{\pbcontrol{\idxi-1}, \sat_1, \ldots, \sat_\numstacks}$.  Thereafter,
the components are updated as in a standard product construction.  When
$\tuple{\pbcontrol{\idxi}, \sat'_1, \ldots, \sat'_\numstacks}$ is reached, there
is a final transition to $\pbcontrol{\idxi}$.  To ease notation, we use dummy
variables $\sat_{\pbstack{\idxi}} = \sat'_{\pbstack{\idxi}} =
\tastate^{\pbstack{\idxi}} = \tastate^{\pbstack{\idxi}}_1$ for the transition
automaton component of the $\pbstack{\idxi}$th stack (for which we do not have a
$\sat$ and $\sat'$ to track).

In the definition below, the first line of the definition of $\cpdsrules^\idxi$
gives the initial and final transitions, the second line models rules operating
on stack $\pbstack{\idxi}$, and the final line models generating operations
occurring on the $\idxj$th stack for $\idxj \neq \pbstack{\idxi}$.

\begin{definition}[$\pbcpds{\idxi}$]
    Given for all $1 \leq \idxj \neq \pbstack{\idxi} \leq \numstacks$ a
    transition automaton $\gentaaut_\idxj =
    \taaut{\pbsa{\idxj}{\idxi}}{\sat_\idxj}{\sat'_\idxj}$ and a phase-bounded
    CPDS $\cpds = \tuple{\controls, \alphabet, \cpdsrules_1, \ldots,
    \cpdsrules_\numstacks}$ and control states $\pbcontrol{\idxi-1}$,
    $\pbcontrol{\idxi}$, we define the CPDS $\pbcpds{\idxi} =
    \tuple{\set{\pbcontrol{\idxi-1}, \pbcontrol{\idxi}} \cup \controls^\idxi,
    \cpdsrules^\idxi, \alphabet}$ where, letting $\sat_{\pbstack{\idxi}} =
    \sat'_{\pbstack{\idxi}} = \tastate^{\pbstack{\idxi}} =
    \tastate^{\pbstack{\idxi}}_1$ be dummy transitions for technical
    convenience, and letting $\tastate^\idxj$ for all $\idxj \neq
    \pbstack{\idxi}$ range over all states of $\gentaaut_\idxj$, we have
    \begin{itemize}
        \item $\controls^\idxi$ contains all states $\tuple{\control,
              \tastate^1, \ldots, \tastate^\numstacks}$ where $\control \in
              \controls$, and

        \item the rules $\cpdsrules^\idxi$ of $\pbcpds{\idxi}$ are 
              \[
                  \begin{array}{l}
                      \setcomp{\cpdsrule{\pbcontrol{\idxi-1}}{\cha}{\noop}{\tuple{\pbcontrol{\idxi-1},
                      \sat_1, \ldots, \sat_\numstacks}},
                      \cpdsrule{\tuple{\pbcontrol{\idxi}, \sat'_1, \ldots,
                      \sat'_\numstacks}}{\cha}{\noop}{\pbcontrol{\idxi}}}{\cha
                      \in \alphabet}\ \cup\\

                      \setcomp{\cpdsrule{\tuple{\control, \tastate^1, \ldots,
                      \tastate^\numstacks}}{\cha}{\genop}{\tuple{\control',
                      \tastate^1_1, \ldots, \tastate^\numstacks_1}}}{
                      \begin{array}{l}
                        \cpdsrule{\control}{\cha}{\genop}{\control'} \in
                        \cpdsrules_{\pbstack{\idxi}} \\ 

                        \forall \idxj' \neq \idxj \ . \ \tastate^{\idxj'}
                        \tatran{\cpdsrule{\control}{\_}{\noop}{\control'}}
                        \tastate^{\idxj'}_1 
                      \end{array}} \ \cup \\

                      \setcomp{\cpdsrule{\control_1}{\cha}{\genop}{\control_2}}{
                      \begin{array}{c}
                          \control_1= \tuple{\control, \tastate^1, \ldots,
                          \tastate^\idxj, \ldots \tastate^\numstacks} \land

                          \control_2 = \tuple{\control', \tastate^1_1, \ldots,
                          \tastate^\idxj_1, \ldots \tastate^\numstacks_1} \\

                          \land\ \cpdsrule{\control}{\chb}{\genop}{\control'}
                          \in \cpdsrules_\idxj \land 

                          \tastate^\idxj
                          \tatran{\cpdsrule{\control}{\chb}{\genop}{\control'}}
                          \tastate^\idxj_1 \ \land \\ 

                          \forall \idxj' \neq \idxj \ . \ \tastate^{\idxj'}
                          \tatran{\tuple{\control, \_, \noop, \control'}}
                          \tastate^{\idxj'}_1 
                      \end{array}} \ .
                  \end{array} 
              \]
    \end{itemize}
\end{definition}

We state the correctness of our reduction, deferring the proof to
Appendix~\ref{sec:phase-proof}.

\begin{namedlemma}{lem:phase-reach}{Simulation of a PBCPDS}
    Given a phase-bounded CPDS $\cpds$ control states $\pbcontrol{0}$ and
    $\pbcontrol{\numphases}$, there is a run of $\cpds$ from
    $\config{\pbcontrol{0}}{\stackw_1, \ldots, \stackw_\numstacks}$ to
    $\config{\pbcontrol{\numphases}}{\stackw'_1, \ldots, \stackw'_\numstacks}$
    iff for each $1 \leq \idxj \leq \numstacks$, we have that
    $\config{\pbcontrol{0}}{\stackw_\idxj}$ is accepted by $\pbsa{\idxj}{0}$.
\end{namedlemma}

\subsection{Global Reachability}

$\saauta^1_0, \ldots, \saauta^\numstacks_0$ were obtained by a finite sequence
of non-deterministic choices ranging over a finite number of values.  Let
$\regtuples$ be the therefore finite set of tuples $\tuple{\pbcontrol{0},
\saauta_1, \ldots, \saauta_\numstacks}$ for each sequence as above, where
$\saauta_\idxi$ is $\saauta^\idxi_0$ with initial state
$\sastate_{\pbcontrol{0}}$.  From Lemma~\ref{lem:phase-reach}, we have a regular
solution to the global control state reachability problem as required.

\subsection{Proofs for Control-State Reachability}
\label{sec:phase-proof}

In this section we prove \reflemma{lem:phase-reach} via
Lemma~\ref{lem:phase-reach-only-if} and Lemma~\ref{lem:phase-reach-if} below.

\begin{lemma}\label{lem:phase-reach-only-if}
    Given a phase-bounded CPDS $\cpds$ control states $\pbcontrol{0}$ and
    $\pbcontrol{\numphases}$, there is a run of $\cpds$ from
    $\config{\pbcontrol{0}}{\stackw_1, \ldots, \stackw_\numstacks}$ to
    $\config{\pbcontrol{\numphases}}{\stackw'_1, \ldots, \stackw'_\numstacks}$
    only if for each $1 \leq \idxj \leq \numstacks$, we have that
    $\config{\pbcontrol{0}}{\stackw_\idxj}$ is accepted by $\pbsa{\idxj}{0}$.
\end{lemma}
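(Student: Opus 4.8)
The plan is to induct backwards over the phase index, mirroring the backwards construction of the automata $\pbsa{\idxj}{\idxi}$. First I would take the assumed run of $\cpds$ and, using phase-boundedness, partition it as $\pbrunseg{1} \ldots \pbrunseg{\numphases}$. From this concrete run I read off the guesses that drive the construction: the boundary control states $\pbcontrol{0}, \ldots, \pbcontrol{\numphases}$, and for each phase $\idxi$ the unique stack $\pbstack{\idxi}$ carrying consuming operations (choosing an arbitrary stack if there are none). Writing $\stackw^\idxi_\idxj$ for the contents of stack $\idxj$ at the end of phase $\idxi$, so that $\stackw^0_\idxj = \stackw_\idxj$ and $\stackw^\numphases_\idxj = \stackw'_\idxj$, I would prove by downward induction from $\idxi = \numphases$ to $\idxi = 0$ that there is a choice of the still-free guessed transitions $\sat_\idxj, \sat'_\idxj$ under which $\config{\pbcontrol{\idxi}}{\stackw^\idxi_\idxj}$ is accepted by $\pbsa{\idxj}{\idxi}$ for every $\idxj$. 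The base case $\idxi = \numphases$ is immediate, since by definition $\pbsa{\idxj}{\numphases}$ accepts $\config{\pbcontrol{\numphases}}{\stackw}$ for every stack $\stackw$.

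For the inductive step I would fix phase $\idxi$ and, for each $\idxj$, let $\sat'_\idxj$ be the initial long-form transition on the accepting run of $\config{\pbcontrol{\idxi}}{\stackw^\idxi_\idxj}$ supplied by the induction hypothesis. If $\idxj \neq \pbstack{\idxi}$, every operation applied to stack $\idxj$ during phase $\idxi$ is generating, so these operations spell a word $\opseq$ over $\genrules$ taking $\stackw^{\idxi-1}_\idxj$ to $\stackw^\idxi_\idxj$. Replaying the inductive construction inside the proof of \reflemma{lem:ecpds-completeness} along $\opseq$ produces a long-form transition $\sat_\idxj$ with $\apply{\tacontrol}{\sat_\idxj} = \pbcontrol{\idxi-1}$ such that $\sat_\idxj \tarun{\opseq} \sat'_\idxj$ is a run of $\taaut{\pbsa{\idxj}{\idxi}}{\sat_\idxj}{\sat'_\idxj}$ and such that a run beginning with $\sat_\idxj$ accepts $\stackw^{\idxi-1}_\idxj$. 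I would guess exactly this $\sat_\idxj$ (and the $\sat'_\idxj$ above); then $\pbsa{\idxj}{\idxi-1}$, being $\pbsa{\idxj}{\idxi}$ with $\sat_\idxj$ adjoined as a fresh initial transition, accepts $\config{\pbcontrol{\idxi-1}}{\stackw^{\idxi-1}_\idxj}$ by construction.

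For the remaining stack $\idxj = \pbstack{\idxi}$ I would project phase $\idxi$ onto stack $\pbstack{\idxi}$ together with the control-state sequence, and record each generating action on a stack $\idxj' \neq \pbstack{\idxi}$ by the matching edge of $\taaut{\pbsa{\idxj'}{\idxi}}{\sat_{\idxj'}}{\sat'_{\idxj'}}$. The runs $\sat_{\idxj'} \tarun{} \sat'_{\idxj'}$ fixed in the previous step supply precisely the transition-automaton components that the control states of $\pbcpds{\idxi}$ are required to trace, so this projection is a genuine run of $\pbcpds{\idxi}$ from $\config{\pbcontrol{\idxi-1}}{\stackw^{\idxi-1}_{\pbstack{\idxi}}}$ to $\config{\pbcontrol{\idxi}}{\stackw^\idxi_{\pbstack{\idxi}}}$, prefixed and suffixed by the two $\noop$ transitions that set up and check these components. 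Since $\config{\pbcontrol{\idxi}}{\stackw^\idxi_{\pbstack{\idxi}}}$ is accepted by $\pbsa{\pbstack{\idxi}}{\idxi}$ and $\pbsa{\pbstack{\idxi}}{\idxi-1} = \prestar{\pbcpds{\idxi}}{\pbsa{\pbstack{\idxi}}{\idxi}}$, completeness of saturation yields acceptance of $\config{\pbcontrol{\idxi-1}}{\stackw^{\idxi-1}_{\pbstack{\idxi}}}$ by $\pbsa{\pbstack{\idxi}}{\idxi-1}$. Unfolding to $\idxi = 0$ then gives acceptance of each $\config{\pbcontrol{0}}{\stackw_\idxj}$ by $\pbsa{\idxj}{0}$, as required.

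The hard part will be the coordination between the second and third paragraphs: the transition $\sat_{\idxj'}$ that I am forced to guess in order to accept stack $\idxj'$ at phase $\idxi-1$ must be exactly the one whose transition-automaton run $\sat_{\idxj'} \tarun{} \sat'_{\idxj'}$ is tracked inside the control states of $\pbcpds{\idxi}$. Making this rigorous means verifying that the per-stack completeness runs and the product run of $\pbcpds{\idxi}$ arise from a single interleaving of phase $\idxi$ --- that is, that the generating operations read as input by each $\taaut{\pbsa{\idxj'}{\idxi}}{\sat_{\idxj'}}{\sat'_{\idxj'}}$ are, in order, exactly the phase-$\idxi$ operations on stack $\idxj'$. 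This concordance is where the interaction between the several stacks is reconciled, and it is the crux of the reduction.
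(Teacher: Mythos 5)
Your proposal follows essentially the same route as the paper's own proof: partition the run into phases, induct backwards over the phase index, invoke the inductive construction from the completeness proof for extended CPDS (Lemma~\ref{lem:ecpds-completeness}) to produce the guessed transitions $\sat_\idxj$ and accepting runs for the non-consuming stacks, and project the phase onto a run of $\pbcpds{\idxi}$ for the consuming stack so that saturation completeness applies. The ``concordance'' issue you flag at the end is precisely the point the paper also isolates (the sequence $\sat^0,\ldots,\sat^\numof$ being simultaneously a run of the transition automaton and the projection of the $\pbcpds{\idxi}$-run onto the corresponding component), so your identification of the crux matches the paper's treatment.
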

\begin{proof}
    Take a run of $\cpds$ from $\config{\pbcontrol{0}}{\stackw^1_0, \ldots,
    \stackw^\numstacks_0}$ to
    $\config{\pbcontrol{\numphases}}{\stackw^1_\numphases, \ldots,
    \stackw^\numstacks_\numphases}$ and split it into phases
    $\runsegment_1\ldots\runsegment_\numphases$.  Let $\control_\idxi$ be the
    control state at the end of each $\runsegment_\idxi$, and $\control_0$ be
    the control state at the beginning of $\runsegment_1$.  Similarly, let
    $\stackw^\idxj_\idxi$ be the stack contents of stack $\idxj$ at the end of
    $\runsegment_\idxi$.  We include, for convenience, the transition from the
    end of $\runsegment_\idxi$ to the beginning of $\runsegment_{\idxi+1}$ in
    $\runsegment_{\idxi+1}$.  Thus, the last configuration of
    $\runsegment_\idxi$ is also the first configuration of
    $\runsegment_{\idxi+1}$.

    We proceed by induction from $\idxi = \numphases$ down to $\idxi = 1$.  In
    the base case we know by definition that
    $\config{\control_\numphases}{\stackw^\idxj_\numphases}$ is accepted by
    $\pbsa{\idxj}{\numphases}$.  

    Hence, assume $\config{\control_{\idxi+1}}{\stackw^\idxj_{\idxi+1}}$ is
    accepted by $\pbsa{\idxj}{\idxi+1}$.  We show the case for $\idxi$.  First
    consider $\pbstack{\idxi}$.  Take the run
    \[
        \config{\control_\idxi}{\stackw^1_\idxi, \ldots,
        \stackw^\numstacks_\idxi} \cpdstran \cdots \cpdstran
        \config{\control_{\idxi+1}}{\stackw^1_{\idxi+1}, \ldots,
        \stackw^\numstacks_{\idxi+1}} \ .
    \]
    We want to find a run
    \[
        \config{\control_\idxi}{\stackw^{\pbstack{\idxi}}_\idxi} \cpdstran
        \config{\tuple{\control_\idxi, \sat_1, \ldots,
        \sat_\numstacks}}{\stackw^{\pbstack{\idxi}}_\idxi} \cpdstran \cdots
        \cpdstran \config{\tuple{\control_{\idxi+1}, \sat'_1, \ldots,
        \sat'_\numstacks}}{\stackw^{\pbstack{\idxi}}_{\idxi+1}} \cpdstran
        \config{\control_1}{\stackw^{\pbstack{\idxi}}_{\idxi+1}}
    \]
    of $\pbcpds{\idxi}$, giving us that
    $\config{\control_\idxi}{\stackw^{\pbstack{\idxi}}_\idxi}$ is accepted by
    $\pbsa{\pbstack{\idxi}}{\idxi}$.  This is almost by definition, except we
    need to prove for each $\idxj \neq \pbstack{\idxi}$ that there is a sequence
    $\sat^0, \ldots, \sat^\numof$ that is also the projection of the run of
    $\pbcpds{\idxi}$ to the $(\idxj+1)$th component (that is, the state of the
    $\idxj$th transition automaton).  In particular, we require $\sat^0 =
    \sat_\idxj$ and $\sat^\numof = \sat'_\idxj$.  The proof proceeds in exactly
    the same manner as the case of
    $\cpdsrule{\control}{\cha}{\genlang}{\control'}$ in the proof of
    \reflemma{lem:ecpds-completeness} for ECPDS.  Namely, from the sequence of
    operations $\genop^0, \ldots, \genop^\numof$ taken from the run $\sat^0,
    \ldots, \sat^\numof$, we obtain a sequence of stacks such that at each
    $\idxz$ there is an accepting run of the $\idxz$th stack constructed from
    $\sat^\idxz$ and thereafter only transitions of $\pbsa{\idxj}{\idxi+1}$.
    Thus, since $\sat_\idxj$ is added to $\pbsa{\idxj}{\idxi+1}$ to obtain
    $\pbsa{\idxj}{\idxi}$, we additionally get an accepting run of
    $\pbsa{\idxj}{\idxi}$ over $\config{\control_\idxi}{\stackw^\idxj_\idxi}$.
    We do not repeat the arguments here.

    Finally, then, when $\idxi$ reaches $1$, we repeat the arguments above to
    conclude $\config{\control_0}{\stackw^\idxj_0}$ is accepted by
    $\pbsa{\idxj}{0}$ for each $\idxj$, giving the required lemma.
\end{proof}

\begin{lemma}\label{lem:phase-reach-if}
    Given a phase-bounded CPDS $\cpds$ control states $\pbcontrol{0}$ and
    $\pbcontrol{\numphases}$, there is a run of $\cpds$ from
    $\config{\pbcontrol{0}}{\stackw_1, \ldots, \stackw_\numstacks}$ to
    $\config{\pbcontrol{\numphases}}{\stackw'_1, \ldots, \stackw'_\numstacks}$
    whenever for each $1 \leq \idxj \leq \numstacks$, we have that
    $\config{\pbcontrol{0}}{\stackw_\idxj}$ is accepted by $\pbsa{\idxj}{0}$.
\end{lemma}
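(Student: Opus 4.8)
The plan is to run the argument of Lemma~\ref{lem:phase-reach-only-if} in reverse: rather than reading off acceptance from a given run, I reconstruct a run from the given acceptance. Concretely, I prove by downward induction on $\idxi$, from $\idxi = \numphases$ to $\idxi = 0$, the strengthened statement that if $\config{\pbcontrol{\idxi}}{\stackv_\idxj}$ is accepted by $\pbsa{\idxj}{\idxi}$ for every $1 \leq \idxj \leq \numstacks$, then $\cpds$ admits a run obeying phases $\idxi+1, \ldots, \numphases$ (with the guessed consuming stacks $\pbstack{\idxi+1}, \ldots, \pbstack{\numphases}$ and boundary control states $\pbcontrol{\idxi+1}, \ldots, \pbcontrol{\numphases}$) from $\config{\pbcontrol{\idxi}}{\stackv_1, \ldots, \stackv_\numstacks}$ to some $\config{\pbcontrol{\numphases}}{\stackw'_1, \ldots, \stackw'_\numstacks}$. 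The base case $\idxi = \numphases$ is immediate, as $\pbsa{\idxj}{\numphases}$ accepts every $\config{\pbcontrol{\numphases}}{\stackw}$ and the empty run works. Taking $\stackv_\idxj = \stackw_\idxj$, the statement at $\idxi = 0$ is exactly the lemma.

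For the inductive step I assume the statement for $\idxi$ and prove it for $\idxi-1$; so suppose $\config{\pbcontrol{\idxi-1}}{\stackw_\idxj}$ is accepted by $\pbsa{\idxj}{\idxi-1}$ for all $\idxj$. I first handle the distinguished stack $\pbstack{\idxi}$. Since $\pbsa{\pbstack{\idxi}}{\idxi-1}$ is the saturation of $\pbsa{\pbstack{\idxi}}{\idxi}$, by \reftheorem{thm:ecpds-reachability} it accepts exactly $\prestar{\pbcpds{\idxi}}{\pbsa{\pbstack{\idxi}}{\idxi}}$, so the accepted configuration $\config{\pbcontrol{\idxi-1}}{\stackw_{\pbstack{\idxi}}}$ has a run of $\pbcpds{\idxi}$ to some configuration accepted by $\pbsa{\pbstack{\idxi}}{\idxi}$. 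By the definition of $\pbcpds{\idxi}$ this run passes $\tuple{\pbcontrol{\idxi-1}, \sat_1, \ldots, \sat_\numstacks}$, evolves through product states, and reaches $\tuple{\pbcontrol{\idxi}, \sat'_1, \ldots, \sat'_\numstacks}$ before returning to $\pbcontrol{\idxi}$. Projecting it onto the genuine CPDS component yields a phase-$\idxi$ run over stack $\pbstack{\idxi}$ of $\cpds$, while projecting onto the $\idxj$th transition-automaton component, for each $\idxj \neq \pbstack{\idxi}$, yields a run $\sat_\idxj \tarun{\opseq_\idxj} \sat'_\idxj$ of $\gentaaut_\idxj = \taaut{\pbsa{\idxj}{\idxi}}{\sat_\idxj}{\sat'_\idxj}$, with $\opseq_\idxj \in \brac{\genrules}^\ast$ recording the operations the product charged to stack $\idxj$. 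Interleaving these operations across all $\numstacks$ stacks replays as a genuine phase-$\idxi$ run of $\cpds$ from $\config{\pbcontrol{\idxi-1}}{\stackw_1, \ldots, \stackw_\numstacks}$ to $\config{\pbcontrol{\idxi}}{\stackv_1, \ldots, \stackv_\numstacks}$, where $\stackv_{\pbstack{\idxi}}$ is the stack reached by the $\pbcpds{\idxi}$ run and, for $\idxj \neq \pbstack{\idxi}$, $\stackv_\idxj = \apply{\opseq_\idxj}{\stackw_\idxj}$ denotes $\stackw_\idxj$ after applying the operations of $\opseq_\idxj$ in order. This replay is legitimate precisely because of the phase restriction: in phase $\idxi$ only stack $\pbstack{\idxi}$ may consume, so every operation charged to another stack is generating, lies in $\genrules$, and is faithfully tracked by its transition automaton.

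It remains to check that each $\config{\pbcontrol{\idxi}}{\stackv_\idxj}$ is accepted by $\pbsa{\idxj}{\idxi}$, so that the induction hypothesis applies to $\config{\pbcontrol{\idxi}}{\stackv_1, \ldots, \stackv_\numstacks}$ and the phase-$\idxi$ run can be prepended to the run it supplies. For $\idxj = \pbstack{\idxi}$ this is immediate, since $\stackv_{\pbstack{\idxi}}$ is by construction accepted by $\pbsa{\pbstack{\idxi}}{\idxi}$. For $\idxj \neq \pbstack{\idxi}$ I use the combinatorial fact that is the converse of the reasoning in \reflemma{lem:ecpds-completeness}: whenever the transition automaton has an edge $\sat_1 \satran{\cpdsruler} \sat_2$ with $\cpdsruler = \cpdsrule{\control}{\cha}{\genop}{\control'}$, every stack accepted via the long-form transition $\sat_1$ is carried by $\genop$ to a stack accepted via $\sat_2$. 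Iterating this along $\sat_\idxj \tarun{\opseq_\idxj} \sat'_\idxj$, and using that $\config{\pbcontrol{\idxi-1}}{\stackw_\idxj}$ is accepted by $\pbsa{\idxj}{\idxi-1}$ via a run whose first long-form transition is exactly the guessed $\sat_\idxj$ (the fresh initial transition of $\pbsa{\idxj}{\idxi-1}$), I conclude that $\stackv_\idxj$ is accepted via $\sat'_\idxj$, which is a transition of $\pbsa{\idxj}{\idxi}$. Hence $\config{\pbcontrol{\idxi}}{\stackv_\idxj}$ is accepted by $\pbsa{\idxj}{\idxi}$, and the induction closes.

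The step I expect to be the main obstacle is the combinatorial fact of the previous paragraph, precisely because it is the direction not supplied by \reflemma{lem:ecpds-completeness}: that proof builds acceptance of a preimage from acceptance of its image by following the transition-automaton edges backwards, whereas here I need acceptance of an image from acceptance of its preimage by following the edges forwards. I would discharge it by a case analysis on $\genop \in \set{\rew{\chb}, \noop, \scopy{\opord}, \cpush{\chb}{\opord}}$ parallel to the four cases of \reflemma{lem:ecpds-completeness}; the delicate cases are $\scopy{\opord}$ and $\cpush{\chb}{\opord}$, where the point is that the unions of state-sets appearing in the corresponding clauses of $\auxsat{\cpdsruler}$ make a single copy simultaneously satisfy the primed and the unprimed constraints, so that after the operation one copy discharges the unprimed role and the other the primed role. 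The remaining work is bookkeeping: verifying that the interleaving of the projected operations respects the phase boundaries and that the guessed $\pbstack{\idxi}$ and boundary controls agree with those recorded in the product control states.
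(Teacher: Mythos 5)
Your proof is correct and follows essentially the same route as the paper's: extract a run of $\pbcpds{\idxi}$ from the fact that $\pbsa{\pbstack{\idxi}}{\idxi-1}$ is a $\prestar{\pbcpds{\idxi}}{\pbsa{\pbstack{\idxi}}{\idxi}}$ automaton, project it onto the phase-$\idxi$ run over stack $\pbstack{\idxi}$ and onto the transition-automaton runs $\sat_\idxj \tarun{\opseq_\idxj} \sat'_\idxj$, and transfer acceptance of $\stackw_\idxj$ via $\sat_\idxj$ forward along $\opseq_\idxj$ to acceptance via $\sat'_\idxj$ in $\pbsa{\idxj}{\idxi}$ (the paper merely runs the induction upward from phase $0$ carrying a prefix run, rather than downward carrying a suffix run). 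For the forward-transfer step you identify as the main obstacle, the paper's primary justification is to invoke soundness of the extended saturation (\reflemma{lem:ecpds-soundness}), but it explicitly offers your direct case analysis as an equivalent alternative, noting --- as you correctly exploit --- that the argument goes through here precisely because the fresh transition $\sat_\idxj$ can only occur at the head of the accepting run.
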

\begin{proof}
    Assume for each $1 \leq \idxj \leq \numstacks$, we have that
    $\config{\pbcontrol{0}}{\stackw_\idxj}$ is accepted by $\pbsa{\idxj}{0}$.  

    Thus, we can inductively assume for each $\idxj$ we have
    $\config{\pbcontrol{\idxi}}{\stackw^\idxj_\idxi}$ accepted by
    $\pbsa{\idxj}{\idxi}$ and a run of $\cpds$ of the form
    \[
        \config{\pbcontrol{0}}{\stackw_1, \ldots, \stackw_\numstacks} \cpdstran
        \cdots \cpdstran \config{\pbcontrol{\idxi}}{\stackw^1_\idxi, \ldots,
        \stackw^\numstacks_\idxi} \ .
    \]
    Taking $\stackw^\idxj_0 = \stackw_\idxj$ trivially gives us the base case.
    We prove the case for $(\idxi+1)$.
    
    From the induction hypothesis, we have in particular that
    $\config{\pbcontrol{\idxi}}{\stackw^{\pbstack{\idxi}}_\idxi}$ is accepted by
    $\pbsa{\pbstack{\idxi}}{\idxi}$ and hence we have a run of
    $\pbcpds{\idxi+1}$ of the form
    \[
        \config{\control_\idxi}{\stackw^{\pbstack{\idxi}}_\idxi} \cpdstran
        \config{\tuple{\control_\idxi, \sat_1, \ldots,
        \sat_\numstacks}}{\stackw^{\pbstack{\idxi}}_\idxi} \cpdstran \cdots
        \cpdstran \config{\tuple{\control_{\idxi+1}, \sat'_1, \ldots,
        \sat'_\numstacks}}{\stackw^{\pbstack{\idxi}}_{\idxi+1}} \cpdstran
        \config{\control_1}{\stackw^{\pbstack{\idxi}}_{\idxi+1}}
    \]
    such that $\config{\control_1}{\stackw^{\pbstack{\idxi}}_{\idxi+1}}$ is
    accepted by $\pbsa{\pbstack{\idxi}}{\idxi+1}$.  From this run, due to the
    definition of $\pbcpds{\idxi}$ we can build a run 
    \[
        \config{\control_\idxi}{\stackw^1_\idxi, \ldots,
        \stackw^\numstacks_\idxi} \cpdstran \cdots \cpdstran
        \config{\control_{\idxi+1}}{\stackw^1_{\idxi+1}, \ldots,
        \stackw^\numstacks_{\idxi+1}}
    \]
    of $\cpds$ where for all $\idxj \neq \pbstack{\idxi}$, we define
    $\stackw^\idxj_{\idxi+1} =
    \apply{\genop^\numof}{\cdots\apply{\genop^1}{\stackw^\idxj_\idxi}}$ where
    \[
        \cpdsrule{\control^0}{\cha^1}{\genop^1}{\control^1}\cpdsrule{\control^1}{\cha^2}{\genop^2}{\control^2}\ldots\cpdsrule{\control^{\numof-1}}{\cha^\numof}{\genop^\numof}{\control^\numof} 
    \] 
    is the sequence of labels on the run of
    $\taaut{\pbsa{\idxj}{\idxi}}{\sat_\idxj}{\sat'_\idxj}$.  We have to prove
    for all $\idxj \neq \pbstack{\idxi}$ that
    $\config{\control_{\idxi+1}}{\stackw^\idxj_{\idxi+1}}$ is accepted by
    $\pbsa{\idxj}{\idxi+1}$.  For the proof observe that the introduction of
    $\sat_\idxj$ to $\pbsa{\idxj}{\idxi+1}$ to form $\pbsa{\idxj}{\idxi}$
    followed the saturation technique for extended CPDS for a rule
    $\cpdsrule{\control_\idxi}{\cha}{\genlang}{\control_{\idxi+1}}$ where
    $\genlang$ is the language of possible sequences of the form above.  Thus,
    from the soundness of the saturation method for extended CPDS, we have that
    there must be the required run of $\pbsa{\idxj}{\idxi+1}$ over
    $\config{\control_{\idxi+1}}{\stackw^\idxj_{\idxi+1}}$ beginning with
    transition $\sat'_\idxj$.

    Alternatively, we can argue similarly to the proof of
    \reflemma{lem:ecpds-completeness}, but in the reverse direction.  That is,
    we start with the observation that the accepting run of
    $\config{\control_\idxi}{\stackw^\idxj_\idxi}$ uses $\sat_\idxj = \sat^0$
    for the first transition, and thereafter only transitions from
    $\pbsa{\idxj}{\idxi+1}$.  We prove this by induction for the stack obtained
    by applying $\genop^1$ and $\sat^1$, then for the stack obtained by applying
    $\genop^2$ and $\sat^2$.  This continues until we reach
    $\stackw^\idxj_{\idxi+1}$, and since $\sat^\numof = \sat'_\idxj$ with
    $\sat'_\idxj$ being a transition of $\pbsa{\idxj}{\idxi+1}$, we get the
    accepting run we need.  We remark that this is how the soundness proof for
    the standard saturation algorithm would proceed if we were able to assume
    that each new transition is only used at the head of any new runs the
    transition introduces (but in general this is not the case because new
    transitions may introduce loops).  We leave the construction of this proof
    as an exercise for the interested reader, for which they may follow the
    proof of the extended rule case for \reflemma{lem:ecpds-soundness}.

    Thus, finally, by induction, we obtain a run to
    $\config{\control_\numphases}{\stackw_1, \ldots, \stackw_\numstacks}$ such
    that $\config{\control_\numphases}{\stackw_\idxj}$ is accepted by
    $\pbsa{\idxj}{\numphases}$. 
\end{proof}

\subsection{Complexity}
\label{sec:phase-complexity}

Assume $\cpdsord > 1$.  Our control state reachability algorithm requires
$\exptower{\numphases(\cpdsord-1)}{\numof}$ time, where $\numof$ is polynomial
in the size of the PBCPDS.  Beginning with phase $\numphases$, the saturation
algorithm can add at most $\bigo{\exptower{\cpdsord-1}{\numof}}$ transitions
over the same number of iterations to
$\saauta^{\pbstack{\numphases}}_{\numphases-1}$.  Thus we assume each
$\saauta^\idxj_\idxi$ to have at most $\bigo{\exptower{(\numphases -
\idxi)(\cpdsord-1)}{\numof}}$ transitions.  The largest automaton
$\saauta^\idxj_{\idxi-1}$ construction is when $\idxj = \pbstack{\idxi}$.  For
this we build a CPDS with $\bigo{\exptower{(\numphases -
\idxi)(\cpdsord-1)}{\numof}}$ control states and thus
$\saauta^{\pbstack{\idxi}}_{\idxi-1}$ has at most $\bigo{\exptower{(\numphases -
\idxi + 1)(\cpdsord-1)}{\numof}}$ transitions.  Hence, when $\idxi = 0$, we have
at most $\bigo{\exptower{\numphases(\cpdsord-1)}{\numof}}$ transitions, which
also gives the run time of the algorithm.  This also implies we have at most
$\bigo{\exptower{\numphases\cpdsord}{\numof}}$ tuples in the solution to the
global reachability problem.

\section{Proofs for Scope-Bounded CPDS}
\label{sec:scope-appendix}

\subsection{Operations on Layer Automata}

\subparagraph*{Shift of a Layer Automaton}

The idea behind $\lsashift$ is that all transitions in layer $\idxi$ are moved
up to layer $(\idxi+1)$ and transitions involving states in layer $\numscopes$
are removed.  Intuitively this is because the stack elements in layer
$\numscopes$ will ``go out of scope'' when the context switch corresponding to
the $\lsashift$ occurs.  In more detail, states of layer $\idxi$ are renamed to
become states of layer $(\idxi+1)$, with all states of layer $\numscopes$ being
deleted.  Similarly, all transitions that involved a layer $\numscopes$ state
are also removed.  

We define $\apply{\lsashift}{\saauta}$ of an order-$\cpdsord$ $\numscopes$-layer
stack automaton
\[
    \saauta = \tuple{ 
                  \sastates_\cpdsord,\ldots,\sastates_1, 
                  \alphabet,
                  \sadelta_\cpdsord,\ldots,\sadelta_1,
                  \emptyset ,\ldots, \emptyset
              } 
\] 
to be 
\[
    \saauta' = \tuple{ 
                   \sastates'_\cpdsord,\ldots,\sastates'_1, 
                   \alphabet,
                   \sadelta'_\cpdsord,\ldots,\sadelta'_1,
                   \emptyset ,\ldots, \emptyset
               } 
\]
where defining 
\[
    \apply{\lsashift}{\sastate} = \begin{cases}
    
        \sastate & \text{if $\sastate \in \sastates_\opord$, $\cpdsord > \opord$
        and $\sastate$ is layer $\idxi < \numscopes$} \\ 

        \salyrst{\sastate_\control}{\idxi+1} & \text{if $\sastate =
        \salyrst{\sastate_\control}{\idxi} \in \sastates_\cpdsord$ and $\idxi <
        \numscopes$} \\

        \text{undefined} & \text{otherwise} 
    \end{cases}
\]
and extending $\lsashift$ point-wise to sets of states, we have 
\[
    \sadelta'_\cpdsord = \setcomp{\apply{\lsashift}{\sastate} \satran{\sastate'}
    \apply{\lsashift}{\sastateset}}{\sastate \satran{\sastate'} \sastateset \in
    \sadelta_\cpdsord \text{ and $\sastate$ is layer $\idxi < \numscopes$}} 
\]
and for all $\cpdsord > \opord > 1$ 
\[
    \sadelta'_\opord = \setcomp{\sastate \satran{\sastate'}
    \apply{\lsashift}{\sastateset}}{\sastate \satran{\sastate'} \sastateset \in
    \sadelta_\opord \text{ and $\sastate$ is layer $\idxi < \numscopes$}} 
\]
and
\[
    \sadelta'_1 = \setcomp{\sastate
    \satrancol{\sastate'}{\apply{\lsashift}{\sastateset_\branch}}
    \apply{\lsashift}{\sastateset}}{\sastate
    \satrancol{\sastate'}{\sastateset_\branch} \sastateset \in \sadelta_1 \text{
    and $\sastate$ is layer $\idxi < \numscopes$}} \ .
\]
In all cases above, transitions are only created if the applications of
$\lsashift$ result in a defined state or set of states.  This operation will
erase all layer $\numscopes$ states, and all transitions that go to a layer
$\numscopes$ state.  All other states will be shifted up one layer.  E.g.  layer
$1$ states become layer $2$.

\subparagraph*{Environment Moves}

Given an automaton $\saauta$, define $\apply{\lsaenvmove}{\saauta, \sastate,
\sastate'}$ of an order-$\cpdsord$ $\numscopes$-layer stack automaton to be
$\saauta'$ obtained from $\saauta$ by adding for each transition
$\satranfull{\sastate'}{\cha}{\sastateset_\branch}{\sastateset_1, \ldots,
\sastateset_\cpdsord}$ the transition
$\satranfull{\sastate}{\cha}{\sastateset_\branch}{\sastateset_1, \ldots,
\sastateset_\cpdsord}$.  This operation can be thought of as a saturation rule
that captures the effect of an external context, and could be considered as
rules $\cpdsrule{\control}{\cha}{\noop}{\control'}$ for each $\cha \in
\alphabet$.

\subparagraph*{Saturating a Layer Automaton}

Given a layer automaton $\saauta$, we define
$\apply{\lsasaturate{\idxj}}{\saauta}$ to be the result of applying the
saturation procedure with the CPDS $\tuple{\controls, \alphabet,
\cpdsrules_\idxj}$ and the stack automaton $\saauta$ with initial state-set
$\setcomp{\salyrst{\sastate_\control}{1}}{\control \in \controls}$.

\subsection{Size of the Reachability Graph}

We define $\sbmax$. 

\begin{lemma} \label{lem:sbmax}
    The maximum number of states in any layer automaton constructable by
    repeated applications of $\lsapredecessor{\idxj}$ is
    $\exptower{\cpdsord-2}{\polyof{\numscopes, \sizeof{\controls}}}$ states for
    some computable polynomial $\poly$.
\end{lemma}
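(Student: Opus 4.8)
The plan is to bound the state set order by order, from $\cpdsord$ down to $1$, after first observing that of the three operations composing $\lsapredecessor{\idxj}$ only $\lsasaturate{\idxj}$ can introduce new states. First I would note that $\lsaenvmove$ merely copies existing long-form transitions from one existing state onto another, so it adds no states at all, while $\lsashift$ only renames order-$\cpdsord$ states (layer $\idxi$ becoming layer $\idxi+1$) and deletes the layer-$\numscopes$ states together with their incident transitions; neither enlarges any $\sastates_\opord$. The whole question therefore reduces to bounding the states of a \emph{saturated} layer automaton, for which I would reuse the state-counting argument behind the complexity of saturation: every order-$\opord$ state with $\opord < \cpdsord$ is either already present before saturation or arises as the middle component $\sastate'$ of some transition $\sastate \satran{\sastate'} \sastateset$ of $\sadelta_{\opord+1}$.

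The structural fact I would exploit is the determinism axiom of stack automata: for each $\sastate$ and $\sastateset$ there is at most one $\sastate'$ with $\tuple{\sastate, \sastate', \sastateset} \in \sadelta_{\opord+1}$. Hence the distinct order-$\opord$ states that label a transition from order $\opord+1$ inject into the pairs $(\sastate, \sastateset) \in \sastates_{\opord+1} \times 2^{\sastates_{\opord+1}}$, yielding the recurrence $\sizeof{\sastates_\opord} \leq \sizeof{\sastates_{\opord+1}} \cdot 2^{\sizeof{\sastates_{\opord+1}}}$ up to the fixed, polynomially many accepting (non-initial) states. By the very definition of a $\numscopes$-layered stack automaton, $\sastates_\cpdsord$ is always contained in the fixed set $\setcomp{\salyrst{\sastate_\control}{\idxi}}{\control \in \controls \land 1 \leq \idxi \leq \numscopes}$, so $\sizeof{\sastates_\cpdsord} \leq \numscopes \cdot \sizeof{\controls}$ throughout, and I would check that none of $\lsashift$, $\lsaenvmove$, $\lsasaturate{\idxj}$ ever creates an order-$\cpdsord$ state outside this set.

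The single exponential saving that produces height $\cpdsord-2$ rather than $\cpdsord-1$ comes from the non-alternation-at-order-$\cpdsord$ optimisation recalled in Section~\ref{sec:stackaut}, which applies to all $\saauta_0$ used here and is preserved by $\lsashift$ and $\lsaenvmove$: every $\sastateset_\cpdsord$ has size at most one. This caps the order-$\cpdsord$ transitions at $\sizeof{\sastates_\cpdsord} \cdot (\sizeof{\sastates_\cpdsord} + 1)$, so $\sizeof{\sastates_{\cpdsord-1}}$ is already merely polynomial in $\numscopes$ and $\sizeof{\controls}$; unwinding the recurrence above then adds one exponential per descending order, giving $\sizeof{\sastates_1} \leq \exptower{\cpdsord-2}{\polyof{\numscopes, \sizeof{\controls}}}$. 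Summing over all orders, with the order-$1$ term dominating, yields the claimed bound.

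Finally, I would stress that this cap must hold \emph{uniformly} over arbitrarily long sequences of $\lsapredecessor{\idxj}$ applications, and this is where the main difficulty lies. The point is that the recurrence bounds the states of any single automaton purely in terms of its fixed order-$\cpdsord$ state set, via the determinism axiom alone, so the count cannot accumulate across iterations: any automaton in the image of repeated $\lsapredecessor{\idxj}$ already satisfies the stated bound regardless of how it was produced. The step I expect to be delicate is justifying that no spurious states survive the rename-and-delete performed by $\lsashift$ and the reintroduction of the layer-$1$ initial states by the subsequent $\lsasaturate{\idxj}$; one must verify that each lower-order state stays identified with its generating $(\sastate, \sastateset)$ pair after shifting, so that the universe of reachable states remains fixed and the per-automaton bound never degrades as the iteration count grows.
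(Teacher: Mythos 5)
Your proposal is correct and follows essentially the same argument as the paper: fix $\sizeof{\sastates_\cpdsord} \leq \numscopes\sizeof{\controls}$, use the non-alternation restriction at order $\cpdsord$ to keep order-$(\cpdsord-1)$ polynomial, and then propagate the bound downwards via the at-most-one-$\sastate'$-per-$(\sastate,\sastateset)$ condition, gaining one exponential per order. The paper's proof is terser (it notes only in passing that $\lsashift$ deletes states that would become non-initial, which is your uniformity concern), but the substance is identical.
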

\begin{proof} 
    A $\numscopes$-layer automaton may have in $\sastate_\cpdsord$ only the
    states $\salyrst{\sastate_\control}{\idxi}$ for $1 \leq \idxi \leq
    \numscopes$ and $\control \in \controls$, and thus at most $\numscopes
    \sizeof{\controls} = \const$ states.  There may be at most $\const$
    transitions from any state at order-$\cpdsord$ using the restricted
    saturation algorithm where $\sastateset_\cpdsord$ has cardinality $1$ for
    any transition added, and thus at most $\const \cdot \const$ states at
    order-$(\cpdsord-1)$ (noting that the shift operation deletes all states
    that would become non-initial if they were to remain).  
    
    Next, there may be at most $2^{\const \cdot \const}$ transitions from any
    state at order-$(\cpdsord-1)$, and thus at most $\const \cdot \const \cdot
    2^{\const \cdot \const}$ states at order-$(\cpdsord-2)$ (noting that the
    shift operation deletes all states that would become non-initial if they
    were to remain).  
    
    Thus, we can repeat this argument down to order-$1$ and obtain
    $\exptower{\cpdsord-2}{\polyof{\numscopes, \sizeof{\controls}}}$ states for
    some computable polynomial $\poly$.
\end{proof}

Take the automaton accepting any $\config{\control_\idxi}{\stackw}$ from
$\salyrst{\sastate_{\control_\idxi}}{1}$.  This automaton has order-$\cpdsord$
states of the form $\salyrst{\sastate_\control}{\idxi}$, and at most a single
transition from each of the layer $1$ states to $\emptyset$.  Each of these
transitions is labelled by a state with at most one transition to $\emptyset$,
and so on until order-$1$.

\begin{definition}[$\sbmax$]
    Following Lemma~\ref{lem:sbmax}, we take $\sbmax =
    \exptower{\cpdsord-2}{\polyof{\numscopes, \const}}$ for some computable
    polynomial $\poly$.
\end{definition}

\subsection{Proofs for Control State Reachability}

In this section, we prove \reflemma{lem:scope-reach-graph}.  The proof is split
in to two directions, given in Lemma~\ref{lem:scope-reach-graph-only-if} and
Lemma~\ref{lem:scope-reach-graph-if} below.

\begin{lemma} \label{lem:scope-reach-graph-only-if}
    Given a scope-bounded CPDS $\cpds$ and control states $\controlin$ and
    $\controlout$, there is a run of $\cpds$ from
    $\config{\controlin}{\stackw_1, \ldots, \stackw_\numstacks}$ to
    $\config{\controlout}{\stackw'_1, \ldots, \stackw'_\numstacks}$ for some
    $\stackw'_1, \ldots, \stackw'_\numstacks$ only if there is a path in
    $\reachgraph{\cpds}{\controlout}$ from an initial vertex to a vertex
    \[ 
        \tuple{\control_0, \saauta_1, \control_1, \ldots,
        \control_{\numstacks-1}, \saauta_\numstacks, \control_\numstacks} 
    \]
    where for all $\idxi$ we have $\config{\control_{\idxi-1}}{\stackw_\idxi}$
    accepted from the $1$st layer of $\saauta_\idxi$ and $\control_0 =
    \controlin$.
\end{lemma}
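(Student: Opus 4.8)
The plan is to build the required path explicitly from the run, using one graph vertex per round. Since $\cpds$ is scope-bounded, the run $\config{\controlin}{\stackw_1, \ldots, \stackw_\numstacks} \cpdstran \cdots \cpdstran \config{\controlout}{\stackw'_1, \ldots, \stackw'_\numstacks}$ is round-partitionable as $\runsegment_1 \ldots \runsegment_\numof$, and each round $\runsegment_\idxz$ splits further into context-partitioned segments $\runsegment_\idxz^1 \ldots \runsegment_\idxz^\numstacks$ during which only $\cpdsrules_\idxi$ acts on stack $\idxi$. First I would record, for each round $\idxz$, the control states $\control_0^\idxz, \ldots, \control_\numstacks^\idxz$ at the segment boundaries, so that $\control_{\idxi-1}^\idxz$ is the state entering $\runsegment_\idxz^\idxi$ and $\control_\idxi^\idxz$ the state leaving it; note $\control_\numstacks^\idxz = \control_0^{\idxz+1}$, $\control_0^1 = \controlin$ and $\control_\numstacks^\numof = \controlout$. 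I also write $u_\idxi^\idxz$ for the contents of stack $\idxi$ at the start of round $\idxz$, which, because stack $\idxi$ is untouched earlier in the round, is exactly its content on entering segment $\runsegment_\idxz^\idxi$; thus $\runsegment_\idxz^\idxi$ is a run using only $\cpdsrules_\idxi$ from $\config{\control_{\idxi-1}^\idxz}{u_\idxi^\idxz}$ to $\config{\control_\idxi^\idxz}{u_\idxi^{\idxz+1}}$.

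Next I would define, by downward induction on $\idxz$ from $\numof$ to $1$, a vertex $V_\idxz = \tuple{\control_0^\idxz, \saauta_1^\idxz, \ldots, \saauta_\numstacks^\idxz, \control_\numstacks^\idxz}$, maintaining the invariant that each $\saauta_\idxi^\idxz$ accepts $\config{\control_{\idxi-1}^\idxz}{u_\idxi^\idxz}$ from its layer-$1$ state $\salyrst{\sastate_{\control_{\idxi-1}^\idxz}}{1}$. For the base case $\idxz = \numof$ I take $\saauta_\idxi^\numof = \apply{\lsasaturate{\idxi}}{\saauta}$ with $\saauta$ accepting every $\config{\control_\idxi^\numof}{\stackw}$ from layer $1$; since $\runsegment_\numof^\idxi$ drives $\config{\control_{\idxi-1}^\numof}{u_\idxi^\numof}$ to some $\config{\control_\idxi^\numof}{\stackv}$ using only $\cpdsrules_\idxi$, saturation places $u_\idxi^\numof$ in the layer-$1$ language, and $V_\numof$ is an initial vertex because $\control_\numstacks^\numof = \controlout$. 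For the step I set $\saauta_\idxi^\idxz = \apply{\lsapredecessor{\idxi}}{\saauta_\idxi^{\idxz+1}, \sastate_{\control_\idxi^\idxz}, \sastate_{\control_{\idxi-1}^{\idxz+1}}}$, which makes $V_\idxz \to V_{\idxz+1}$ an edge by definition. Unfolding $\lsapredecessor{\idxi}$: $\lsashift$ reindexes the content $u_\idxi^{\idxz+1}$ of $\saauta_\idxi^{\idxz+1}$ up one layer, so it is read from $\salyrst{\sastate_{\control_{\idxi-1}^{\idxz+1}}}{2}$; $\lsaenvmove$ copies those transitions onto $\salyrst{\sastate_{\control_\idxi^\idxz}}{1}$, recording that the control state evolves from $\control_\idxi^\idxz$ to $\control_{\idxi-1}^{\idxz+1}$ over the other stacks between $\runsegment_\idxz^\idxi$ and $\runsegment_{\idxz+1}^\idxi$; and $\lsasaturate{\idxi}$ takes $\cpdsrules_\idxi$-predecessors, so that $u_\idxi^\idxz$ becomes accepted from $\salyrst{\sastate_{\control_{\idxi-1}^\idxz}}{1}$ since $\runsegment_\idxz^\idxi$ carries $\config{\control_{\idxi-1}^\idxz}{u_\idxi^\idxz}$ to $\config{\control_\idxi^\idxz}{u_\idxi^{\idxz+1}}$. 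The edges $V_\idxz \to V_{\idxz+1}$ then join the initial vertex $V_\numof$ to the vertex $V_1$, whose invariant at $\idxi$ reads off $\config{\control_{\idxi-1}^1}{\stackw_\idxi}$ (as $u_\idxi^1 = \stackw_\idxi$) accepted from layer $1$, with $\control_0^1 = \controlin$---exactly the path demanded.

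The hard part will be making the inductive step hold at the level of the stack itself rather than only of control states, i.e. proving that the composite $\lsashift$-$\lsaenvmove$-$\lsasaturate{\idxi}$ really transforms a layer automaton accepting $u_\idxi^{\idxz+1}$ into one accepting $u_\idxi^\idxz$. This rests on the semantic reading of a transition into a layer-$\idxj$ state as ``the read portion is removed $\idxj$ rounds later'', and the most delicate point is the deletion by $\lsashift$ of every layer-$\numscopes$ transition: I would argue that any portion surviving to layer $\numscopes$ in $\saauta_\idxi^{\idxz+1}$ can only be popped or collapsed more than $\numscopes$ rounds after round $\idxz$, which scope-bounding forbids, so discarding it is sound. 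Verifying this requires carrying the pop-round and collapse-round bookkeeping of the scope-bounded semantics, together with the annotation links of the collapsible stack, through each $\genop$---in particular $\scopy{\opord}$ (which duplicates inner rounds) and $\cpush{\chb}{\opord}$ (which sets a fresh collapse target). This interplay of layering with collapse annotations is the genuinely CPDS-specific difficulty and would form the bulk of the formal argument, in the backward direction mirroring the case analysis over $\genop$ used for extended rules in \reflemma{lem:ecpds-soundness}.
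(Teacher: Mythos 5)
Your proposal follows essentially the same route as the paper's proof: downward induction over the rounds of the run, with $\lsapredecessor{\idxi}$ supplying the edges from the initial vertex for the last round, and a strengthened invariant --- a transition into a layer-$\idxj$ state only reads material removed $\idxj$ rounds later --- maintained by a case split on $\genop$ with pop-/collapse-round bookkeeping, which is exactly how the paper justifies discarding the layer-$\numscopes$ transitions under $\lsashift$. The only slip is that the deferred per-operation case analysis is the completeness-style argument (deriving an accepting run over $\stackw$ from one over $\stackw'$, as in Lemma~\ref{lem:ecpds-completeness}), not the soundness one you cite, but this does not affect the structure of the argument.
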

\begin{proof}
    Take a run of the scope-bounded CPDS from $\config{\controlin}{\stackw_1,
    \ldots, \stackw_\numstacks}$ to $\config{\controlout}{\stackw'_1, \ldots,
    \stackw'_\numstacks}$.  We proceed by induction over the number of rounds in
    the run.  In the following we will override the $\stackw_\idxi$ and
    $\stackw'_\idxi$ in the statement of the lemma to ease notation.

    In the base case, take a single round 
    \[
        \config{\control_0}{\stackw_1, \ldots, \stackw_\numstacks} \cpdsrun
        \config{\control_1}{\stackw'_1, \stackw_2, \ldots, \stackw_\numstacks}
        \cpdsrun \cdots \cpdsrun \config{\control_\numstacks}{\stackw'_1,
        \ldots, \stackw'_\numstacks}
    \]
    where $\control_\idxi$ is the control state after the run on stack $\idxi$,
    and $\stackw'_\idxi$ is the $\idxi$th stack at the end of this run.  Take an
    initial vertex 
    \[
        \tuple{\control_0, \saauta_1, \control_1, \ldots,
        \control_{\numstacks-1}, \saauta_\numstacks, \control_\numstacks} \ .
    \]
    We know $\saauta_\idxi$ is constructed by saturation from an automaton
    accepting $\config{\control_\idxi}{\stackw'_\idxi}$ and thus
    $\config{\control_{\idxi-1}}{\stackw_\idxi}$ is accepted by $\saauta_\idxi$
    from the $1$st layer.  This vertex then gives us a path in the reachability
    graph to a vertex where for all $\idxi$ we have
    $\config{\control_{\idxi-1}}{\stackw_\idxi}$ accepted from the $1$st layer
    of $\saauta_\idxi$.

    Now consider the inductive step where we have a round
    \[
        \config{\control_0}{\stackw_1, \ldots, \stackw_\numstacks} \cpdsrun
        \config{\control_1}{\stackw'_1, \stackw_2, \ldots, \stackw_\numstacks}
        \cpdsrun \cdots \cpdsrun \config{\control_\numstacks}{\stackw'_1,
        \ldots, \stackw'_\numstacks}
    \]
    and a run from $\config{\control_\numstacks}{\stackw'_1, \ldots,
    \stackw'_\numstacks}$ to the destination control state.  By induction we
    have a vertex in the reachability graph 
    \[
        \tuple{\control'_0, \saauta'_1, \control'_1, \ldots,
        \control'_{\numstacks-1}, \saauta'_\numstacks, \control'_\numstacks} 
    \]
    with $\control_\numstacks = \control'_0$ that is reachable from an initial
    vertex and has for all $\idxi$ that
    $\config{\control'_{\idxi-1}}{\stackw'_\idxi}$ is accepted from the $1$st
    layer of $\saauta'_\idxi$.

    By definition of the reachability graph, there exists an edge to this vertex
    from a vertex
    \[
        \tuple{\control_0, \saauta_1, \control_1, \ldots,
        \control_{\numstacks-1}, \saauta_\numstacks, \control_\numstacks} \ .
    \]
    such that $\saauta_\idxi = \apply{\lsapredecessor{\idxi}}{\saauta'_\idxi,
    \sastate_{\control_\idxi}, \sastate_{\control'_{\idxi-1}}}$.

    Since the run of $\cpds$ is scope-bounded, we know there is an accepting run
    of $\stackw'_\idxi$ from $\salyrst{\sastate_{\control'_{\idxi-1}}}{1}$ in
    $\saauta'_\idxi$ that does not use any layer $\numscopes$ states (by the
    further condition described below and since layer $\numscopes$ corresponds
    to the round out of scope for elements of $\stackw'_\idxi$).  Therefrom, we
    have an accepting run of $\stackw'_\idxi$ from
    $\salyrst{\sastate_{\control'_{\idxi-1}}}{2}$ in
    $\apply{\lsashift}{\saauta'_\idxi}$.  Thus, there is an accepting run of
    $\stackw'_\idxi$ from $\salyrst{\control_\idxi}{1}$ after the application of
    $\lsaenvmove$.  Since there is a run over stack $\idxi$ from
    $\config{\control_{\idxi-1}}{\stackw_\idxi}$ to
    $\config{\control_\idxi}{\stackw'_\idxi}$ we therefore have an accepting run
    of $\stackw_\idxi$ from $\salyrst{\sastate_{\control_{\idxi-1}}}{1}$ in
    $\saauta_\idxi$.

    In addition to the above, we need a further property that reflects the scope
    boundedness.  In particular, if no character or stack with pop- or
    collapse-round $0$ is removed during the $\idxz$th round, then there is a
    run over $\stackw_\idxi$ that uses only transitions $\sastate
    \satran{\sastate'} \sastateset$ to read stacks $\stacku$ such that no layer
    $\idxz$ state is in $\sastateset$ and, similarly, for characters $\cha$, the
    run uses only transitions $\sastate
    \satrancol{\cha}{\sastateset_\branch}{\sastateset}$ to read the instance of
    $\cha$ where no layer $\idxz$ state appears in $\sastateset$ and no layer
    $\idxz$ state appears in $\sastateset_\branch$.  
    
    Note that the base case is for the automata accepting any stack, only
    containing transitions to the empty set, for which the property is trivial.
    In the inductive step, we prove this property by further induction over the
    length of the run from $\config{\control_\idxi}{\stackw_\idxi}$ to
    $\config{\control_{\idxi+1}}{\stackw'_\idxi}$.  In the base case we have a
    run of length $0$ and the property holds since, by induction, we can assume
    that $\saauta'_\idxi$ has the property (with the round numbers shifted) and
    it is maintained by the $\lsashift$ and $\lsaenvmove$.  Hence, assume we
    have a run beginning $\config{\control}{\stackw} \cpdstran
    \config{\control'}{\stackw'}$ and the required run over $\stackw'$.  We do a
    case split on the stack operation $\genop$ associated with the transition.
    \begin{enumerate}
        \item If $\genop = \pop{\opord}$ then we have $\stackw =
              \ccompose{\stacku}{\opord}{\stackv}$ and $\stackw' = \stackv$.  If
              $\idxz = 1$ and $\stacku$ has pop-round $0$ (i.e.  appears in
              $\stackw_\idxi$), then this case cannot occur because the
              transition we're currently analysing appears in round $1$ and by
              assumption $\stacku$ is not removed in round $1$.  Hence, assume
              $\idxz > 1$.  We had a run over $\stackw'$ from
              $\satranfullk{\salyrst{\sastate_{\control'}}{1}}{\sastate_\opord}{\sastateset_{\opord+1},
              \dots, \sastateset_\cpdsord}$ in $\saauta_\idxi$ respecting the
              property, and by saturation we have a run over $\stackw$ beginning
              with 
              \[
                  \satranfull{\salyrst{\sastate_{\control}}{1}}{\cha}{\emptyset}{\emptyset,
                  \ldots, \emptyset, \set{\sastate_\opord},
                  \sastateset_{\opord+1}, \ldots, \sastateset_\cpdsord}
              \]
              that also respects the property, since $\sastate_\opord$ is layer
              $1$ and $\idxz \neq 1$.

        \item When $\genop = \scopy{\opord}$ we have $\stackw =
              \ccompose{\stacku}{\opord}{\stackv}$ and $\stackw' =
              \ccompose{\stacku}{\opord}{\ccompose{\stacku}{\opord}{\stackv}}$.
              Let
              $\satranfull{\salyrst{\sastate_{\control'}}{1}}{\cha}{\sastateset_\branch}{\sastateset_1,
              \ldots, \sastateset_\opord, \ldots \sastateset_\cpdsord}$ and
              $\satranfull{\sastateset_\opord}{\cha}{\sastateset'_\branch}{\sastateset'_1,
              \ldots, \sastateset'_\opord}$ be the initial transitions used on
              the run of $\stackw'$.  We know neither these transitions, nor the
              runs from these transitions, pass a layer $\idxz$ state on any
              component with pop- or collapse-round $0$.  Furthermore, we know
              the first $\stacku$ has pop-round $1$.  The second $\stacku$ may
              have pop-round $0$.  If it does, we know $\sastateset'_\opord$
              does not contain any layer $\idxz$ states.

              From the saturation algorithm, we have a transition
              \[ 
                  \satranfull{\salyrst{\sastate_{\control}}{1}}{\cha}{\sastateset_\branch
                  \cup \sastateset'_\branch}{\sastateset_1 \cup \sastateset'_1,
                  \ldots, \sastateset_{\opord-1} \cup \sastateset'_{\opord-1},
                  \sastateset'_\opord, \sastateset_{\opord+1}, \ldots,
                  \sastateset_\cpdsord} \ .
              \]
              from which we have an accepting run of $\stackw$ that satisfies
              the property.

        \item If $\genop = \collapse{\opord}$, $\stackw =
              \ccompose{\annot{\cha}{\stacku'}}{1}{\ccompose{\stacku}{(\opord+1)}{\stackv}}$
              and $\stackw' = \ccompose{\stacku'}{(\opord+1)}{\stackv}$.  When
              $\opord = \cpdsord$, we have an accepting run of $\stackw'$
              respecting the property, and from the saturation, an accepting run
              of $\stackw$ beginning with a transition
              $\satranfull{\salyrst{\sastate_{\control}}{1}}{\cha}{\set{\salyrst{\sastate_{\control'}}{1}}}{
              \emptyset, \ldots, \emptyset}$ and $\stackw' = \stacku'$.  When
              $\idxz = 1$ and $\cha$ has collapse-round $0$, this case cannot
              occur because the transition we're currently analysing appears in
              round $1$ (similarly to the $\pop{\opord}$ case).  Otherwise
              $\idxz > 1$ and we have a run over $\stackw$ respecting the
              property.  

              When $\opord < \cpdsord$, we have an accepting run of $\stackw'$
              in beginning with
              $\satranfullk{\salyrst{\sastate_{\control'}}{1}}{\sastate_\opord}{\sastateset_{\opord+1},
              \dots, \sastateset_\cpdsord}$ that respects the property.  By
              saturation, we have an accepting run of $\stackw$ beginning with a
              transition
              $\satranfull{\salyrst{\sastate_{\control}}{1}}{\cha}{\set{\sastate_\opord}}{\emptyset,
              \ldots, \emptyset, \sastateset_{\opord+1}, \ldots,
              \sastateset_\cpdsord}$.  If the collapse-round of $\cha$ is $0$
              and $\idxz = 1$, this case cannot occur.  Otherwise, the run over
              $\stackw$ satisfies the property since the run over $\stackw'$
              does and $\sastate_\opord$ is layer $1$ and $\idxz > 1$.

        \item When $\genop = \cpush{\chc}{\opord}$, let $\stackw =
              \ccompose{\stacku_{\opord-1}}{\opord}{\ccompose{\stacku_\opord}{\opord+1}{\ccompose{\cdots}{\cpdsord}{\stacku_\cpdsord}}}$.
              We know $\stackw' = \apply{\cpush{\chc}{\opord}}{\stackw}$ is 
              \[
                  \ccompose{\annot{\chc}{\stacku_\opord}}{1}{\ccompose{\stacku_{\opord-1}}{\opord}{\ccompose{\cdots}{\cpdsord}{\stacku_\cpdsord}}}
                  \ .
              \]
              Let
              $\satranfull{\salyrst{\sastate_{\control'}}{1}}{\chc}{\sastateset_\branch}{\sastateset_1,
              \ldots, \sastateset_\cpdsord} \quad \text{and} \quad \sastateset_1
              \satrancol{\cha}{\sastateset'_\branch} \sastateset'_1$ be the
              first transitions used on the accepting run of $\stackw'$.  If the
              pop-round of $\cha$ is $0$, we know there are no layer $\idxz$
              states in $\sastateset'_1$.  Similarly if the pop-round of
              $\stacku_\opord$ is $0$ we know that there are no layer $\idxz$
              states in $\sastateset_\branch$.  The saturation algorithm means
              we have
              $\satranfull{\salyrst{\sastate_{\control}}{1}}{\cha}{\sastateset'_\branch}{\sastateset'_1,
              \sastateset_2,  \ldots, \sastateset_\opord \cup
              \sastateset_\branch, \ldots, \sastateset_\cpdsord}$ leading to an
              accepting run that respects the property.

        \item If $\genop = \rew{\chb}$ then $\stackw =
              \ccompose{\annot{\cha}{\stacku}}{1}{\stackv}$ and $\stackw' =
              \ccompose{\annot{\chb}{\stacku}}{1}{\stackv}$.  Note none of the
              pop- or collapse-rounds are changed, and the run of $\stackw'$
              beginning
              $\satranfull{\salyrst{\sastate_{\control'}}{1}}{\chb}{\sastateset_\branch}{\sastateset_1,
              \dots, \sastateset_\cpdsord}$ and satisfying the property implies
              a run of $\stackw$ beginning
              $\satranfull{\salyrst{\sastate_{\control}}{1}}{\cha}{\sastateset_\branch}{\sastateset_1,
              \dots, \sastateset_\cpdsord}$ and also satisfying the property. 

        \item If $\genop = \noop$ then $\stackw =
              \ccompose{\annot{\cha}{\stacku}}{1}{\stackv}$ and $\stackw' =
              \ccompose{\annot{\cha}{\stacku}}{1}{\stackv}$.  Note none of the
              pop- or collapse-rounds are changed, and the run of $\stackw'$
              beginning
              $\satranfull{\salyrst{\sastate_{\control'}}{1}}{\cha}{\sastateset_\branch}{\sastateset_1,
              \dots, \sastateset_\cpdsord}$ and satisfying the property implies
              a run of $\stackw$ beginning
              $\satranfull{\salyrst{\sastate_{\control}}{1}}{\cha}{\sastateset_\branch}{\sastateset_1,
              \dots, \sastateset_\cpdsord}$ and also satisfying the property. 
    \end{enumerate}

    Finally then, by induction over the number of rounds, we reach the first
    round beginning with $\config{\control_0}{\stackw_1, \ldots,
    \stackw_\numstacks}$ and we know there is a path from an initial vertex to a
    vertex
    \[ 
        \tuple{\control_0, \saauta_1, \control_1, \ldots,
        \control_{\numstacks-1}, \saauta_\numstacks, \control_\numstacks} 
    \]
    with $\control_0 = \control$ and for all $\idxi$ we have
    $\config{\control_{\idxi-1}}{\stackw_\idxi}$ accepted from the $1$st layer
    of $\saauta_\idxi$.
\end{proof}

\begin{lemma} \label{lem:scope-reach-graph-if}
    Given a scope-bounded CPDS $\cpds$ and control states $\controlin$ and
    $\controlout$, there is a run of $\cpds$ from
    $\config{\controlin}{\stackw_1, \ldots, \stackw_\numstacks}$ to
    $\config{\controlout}{\stackw'_1, \ldots, \stackw'_\numstacks}$ for some
    $\stackw'_1, \ldots, \stackw'_\numstacks$ whenever there is a path in
    $\reachgraph{\cpds}{\controlout}$ from an initial vertex to a vertex
    \[ 
        \tuple{\control_0, \saauta_1, \control_1, \ldots,
        \control_{\numstacks-1}, \saauta_\numstacks, \control_\numstacks} 
    \]
    with $\control_0 = \controlin$ and for all $\idxi$ we have
    $\config{\control_{\idxi-1}}{\stackw_\idxi}$ accepted from the $1$st layer
    of $\saauta_\idxi$.
\end{lemma}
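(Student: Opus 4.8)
The plan is to run the argument of Lemma~\ref{lem:scope-reach-graph-only-if} in reverse, by induction on the length of the path in $\reachgraph{\cpds}{\controlout}$ connecting the initial vertex to $\tuple{\control_0, \saauta_1, \ldots, \saauta_\numstacks, \control_\numstacks}$. At each step nearer the vertex we extract one extra \emph{round} of the run and prepend it to the run delivered by the induction hypothesis, so that the first round reads the stacks $\stackw_1, \ldots, \stackw_\numstacks$ with context switches at $\control_0, \ldots, \control_\numstacks$ and the whole run starts at $\config{\controlin}{\stackw_1, \ldots, \stackw_\numstacks}$ and ends in control state $\controlout$.

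For the base case the path has length $0$, so the vertex is an initial vertex and each $\saauta_\idxi = \apply{\lsasaturate{\idxi}}{\saauta}$ for an automaton $\saauta$ accepting exactly $\config{\control_\idxi}{\stackw}$ from $\salyrst{\sastate_{\control_\idxi}}{1}$. Since $\config{\control_{\idxi-1}}{\stackw_\idxi}$ is accepted from the first layer of $\saauta_\idxi$, soundness of saturation (the single-stack analogue of \reflemma{lem:ecpds-soundness} for the CPDS $\tuple{\controls, \alphabet, \cpdsrules_\idxi}$) yields a run on stack $\idxi$ using only $\cpdsrules_\idxi$ from $\config{\control_{\idxi-1}}{\stackw_\idxi}$ to a configuration with control state $\control_\idxi$. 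Chaining these for $\idxi = 1, \ldots, \numstacks$ along $\control_0 \to \control_1 \to \cdots \to \control_\numstacks = \controlout$ produces a single context-partitioned round; being a single round it is trivially scope-bounded, and so is the required run.

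For the inductive step I would take the edge incident to the current vertex on the path, relating it to a vertex $V' = \tuple{\control'_0, \saauta'_1, \ldots, \saauta'_\numstacks, \control'_\numstacks}$ one step nearer the initial vertex, where $\control_\numstacks = \control'_0$ and $\saauta_\idxi = \apply{\lsapredecessor{\idxi}}{\saauta'_\idxi, \sastate_{\control_\idxi}, \sastate_{\control'_{\idxi-1}}}$. Unfolding $\lsapredecessor{\idxi} = \lsasaturate{\idxi} \circ \lsaenvmove \circ \lsashift$ and peeling it off one operation at a time, acceptance of $\config{\control_{\idxi-1}}{\stackw_\idxi}$ from the first layer of $\saauta_\idxi$ gives, by soundness of $\lsasaturate{\idxi}$, a run on stack $\idxi$ under $\cpdsrules_\idxi$ to some $\config{\control_\idxi}{\stackw^\bullet_\idxi}$ with $\stackw^\bullet_\idxi$ accepted from $\salyrst{\sastate_{\control_\idxi}}{1}$ of the pre-saturation automaton; the $\lsaenvmove$ step rewrites this as acceptance from the layer-$2$ state $\salyrst{\sastate_{\control'_{\idxi-1}}}{2}$ of $\apply{\lsashift}{\saauta'_\idxi}$; and since $\lsashift$ raises every layer by one, this in turn is acceptance of $\stackw^\bullet_\idxi$ from $\salyrst{\sastate_{\control'_{\idxi-1}}}{1}$ of $\saauta'_\idxi$. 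That is precisely the hypothesis needed to invoke the induction hypothesis at $V'$, giving a run from $\config{\control'_0}{\stackw^\bullet_1, \ldots, \stackw^\bullet_\numstacks}$ to $\controlout$; prepending the freshly extracted round (control states $\control_0 \to \cdots \to \control_\numstacks = \control'_0$) yields the desired run.

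The main obstacle, dual to the extra property carried through Lemma~\ref{lem:scope-reach-graph-only-if}, will be to guarantee that the reconstructed run is genuinely scope-bounded, since a priori the run extracted from $\lsasaturate{\idxi}$ could pop material created arbitrarily far in the past. Here I would exploit the key invariant that a transition landing in a layer-$\idxz$ state reads exactly a stack or annotation that is removed $\idxz$ rounds later: because $\lsashift$ deletes all layer-$\numscopes$ states and transitions, and no state of layer exceeding $\numscopes$ ever exists, every $\pop{}$ or $\collapse{}$ performed in the extracted round removes something created within the preceding $\numscopes$ rounds, so conditions (1)--(3) of the scope-bounded definition hold. Formalising this needs an inner induction on the length of the extracted round with a case split on the stack operation $\genop$, mirroring the six cases of Lemma~\ref{lem:scope-reach-graph-only-if}, to check that each transition used on $\stackw^\bullet_\idxi$ respects the layer discipline; this careful layer bookkeeping, rather than any new idea, is where the real work lies.
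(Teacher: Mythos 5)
Your proposal is correct and follows essentially the same route as the paper's proof: an outer induction on the path length, a base case assembling a single (trivially scope-bounded) round, an inductive step that peels $\lsapredecessor{\idxi}$ apart into $\lsasaturate{\idxi}$, $\lsaenvmove$ and $\lsashift$ to move acceptance from layer $1$ of $\saauta_\idxi$ to layer $1$ of $\saauta'_\idxi$, and a layer-tracking invariant guaranteeing scope-boundedness. The only difference is that you leave the hardest part --- the inner induction with the six-way case split on $\genop$, which the paper carries out in full and for which it invokes the counter-example generation machinery of~\cite{BCHS13} to actually construct (and terminate the construction of) the witnessing run --- as a sketch, though you correctly identify both the invariant and where the real work lies.
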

\begin{proof}
    Note, in the following proof, we override the $\stackw_\idxi$ and
    $\stackw'_\idxi$ in the statement of the lemma.  Take a path in the
    reachability graph.  The proof goes by induction over the length of the
    path.  When the path is of length $0$ we have a single vertex
    $\tuple{\control_0, \saauta_1, \control_1, \ldots, \control_{\numstacks-1},
    \saauta_\numstacks, \control_\numstacks}$.  Take any configuration
    $\config{\control_{\idxi-1}}{\stackw_\idxi}$ accepted by $\saauta_\idxi$.
    We know $\saauta_\idxi$ accepts all configurations that can reach
    $\config{\control_\idxi}{\stackw}$ for some $\stackw$.  Therefore, from the
    initial configuration 
    \[
        \config{\control_0}{\stackw_1, \ldots, \stackw_\numstacks}
    \]
    we first apply the run over the $1$st stack to $\control_1$ to obtain 
    \[
        \config{\control_1}{\stackw'_1, \stackw_2, \ldots, \stackw_\numstacks}
    \]
    for some $\stackw'_1$.  Then we apply the run over the $2$nd stack to
    $\control_2$ and so on until we reach
    \[
        \config{\control_\numstacks}{\stackw'_1, \ldots, \stackw'_\numstacks}
    \]
    for some $\stackw'_1, \ldots, \stackw'_\numstacks$.  This witnesses the
    reachability property as required. 
    
    Now consider the inductive case where we have a path beginning with an edge
    of the reachability graph from 
    \[ 
        \tuple{\control_0, \saauta_1, \control_1, \ldots,
        \control_{\numstacks-1}, \saauta_\numstacks, \control_\numstacks} 
    \]
    to
    \[ 
        \tuple{\control'_0, \saauta'_1, \control'_1, \ldots,
        \control'_{\numstacks-1}, \saauta'_\numstacks, \control'_\numstacks} \ .
    \]
    By induction we have a run from 
    \[
        \config{\control_\numstacks}{\stackw'_1, \ldots, \stackw'_\numstacks}
    \]
    to the final control state for any $\stackw'_\idxi$ accepted by
    $\saauta'_\idxi$ from $\salyrst{\sastate_{\control_{\idxi-1}}}{1}$.

    Now, similarly to the base case, take any configuration
    $\config{\control_{\idxi-1}}{\stackw_\idxi}$ accepted by $\saauta_\idxi$.
    We know $\saauta_\idxi$ accepts all configurations that can reach
    $\config{\control_\idxi}{\stackw}$ for some $\stackw$ accepted from
    $\salyrst{\sastate_{\control'_{\idxi-1}}}{2}$ in
    $\apply{\lsashift}{\saauta'_\idxi}$ and therefore, from
    $\salyrst{\sastate_{\control'_{\idxi-1}}}{1}$ in $\saauta'_\idxi$.  Hence,
    from the initial configuration 
    \[
        \config{\control_0}{\stackw_1, \ldots, \stackw_\numstacks}
    \]
    we first apply the run over the $1$st stack to $\control_1$ to obtain 
    \[
        \config{\control_1}{\stackw'_1, \stackw_2, \ldots, \stackw_\numstacks}
    \]
    for some $\stackw'_1$.  Then we apply the run over the $2$nd stack to
    $\control_2$ and so on until we reach
    \[
        \config{\control_\numstacks}{\stackw'_1, \ldots, \stackw'_\numstacks}
    \]
    for some $\stackw'_1, \ldots, \stackw'_\numstacks$ and then, by induction,
    we have a run from this configuration to the target control state as
    required.
    
    We need to prove a stronger property that we can in fact build a
    scope-bounded run.  In particular, we show that, for all stacks $\stacku$ in
    $\stackw_\idxi$, if the accepting run of $\stackw_\idxi$ uses only
    transitions $\sastate \satran{\sastate'} \sastateset$ to read $\stacku$ such
    that no layer $\idxz$ state is in $\sastateset$, then there is a run to the
    final control state such that $\stacku$ is not popped during round $\idxz$.
    Similarly, for characters $\cha$, if the accepting run uses only transitions
    $\sastate \satrancol{\cha}{\sastateset_\branch}{\sastateset}$ to read the
    instance of $\cha$ where no layer $\idxz$ state appears in $\sastateset$,
    then $\cha$ is not popped in round $\idxz$.  Similarly, if no layer $\idxz$
    state appears in $\sastateset_\branch$, then collapse is not called on that
    character during round $\idxz$.  We observe the property is trivially true
    for the base case where the automata accept any stack using only transitions
    to $\emptyset$.  The inductive case is below.
    
    We start from $\config{\control}{\stackw} =
    \config{\control_\idxi}{\stackw_\idxi}$.  First assign each stack and
    character in $\stackw$ pop- and collapse-round $0$.  Noting that $\saauta$
    is obtained by saturation from $\saauta'$ (after a $\lsashift$ and
    $\lsaenvmove$ --- call this automaton $\saautb$), we aim to exhibit a run
    from $\config{\control}{\stackw}$ to
    $\config{\control_{\idxi+1}}{\stackw_{\idxi+1}}$ (in fact we choose
    $\stackw_{\idxi+1}$ via this procedure) such that all stacks and characters
    in $\stackw_{\idxi+1}$ with pop- or collapse-round $0$ do not pass layer
    $\idxz$ states in $\saautb$.  Since we have a run over $\stackw_{\idxi+1}$
    in $\saauta'_\idxi$ that does not pass layer $1$ states for parts of the
    stack with pop- or collapse-round $0$, we know by induction we have a run
    from $\config{\control_{\idxi+1}}{\stackw_{\idxi+1}}$ that is scope bounded.

    To generate such a run we follow the counter-example generation algorithm
    in~\cite{BCHS13}.  We refer the reader to this paper for a precise
    exposition of the algorithm.  Furthermore, that this routine terminates is
    non-trivial and requires a subtle well-founded relation over stacks, which
    is also shown in~\cite{BCHS13}.

    Beginning with the run over $\config{\control_\idxi}{\stackw_\idxi}$ that
    has the property of not passing layer $\idxz$ states, we have our base case.
    Now assume we have a run to $\config{\control}{\stackw}$ such that the run
    over $\stackw$ has no transitions to layer $\idxz$ states reading stacks or
    characters with pop- or collapse-rounds of $0$.  We take the first
    transition of such a run, which was introduced by the saturation algorithm
    because of a rule $\cpdsrule{\control}{\cha}{\genop}{\control'}$ and certain
    transitions of the partially saturated $\saautb$.  Let
    $\config{\control'}{\stackw'}$ be the configuration reached via this rule.
    We do a case split on $\genop$.  
    \begin{enumerate}
        \item If $\genop = \pop{\opord}$, then we have $\stackw =
              \ccompose{\stacku}{\opord}{\stackv}$ and the accepting run of
              $\stackw$ begins with 
              \[
                  \satranfull{\salyrst{\sastate_{\control}}{1}}{\cha}{\emptyset}{\emptyset,
                  \ldots, \emptyset, \set{\sastate_\opord},
                  \sastateset_{\opord+1}, \ldots, \sastateset_\cpdsord}
              \]
              where
              $\satranfullk{\salyrst{\sastate_{\control'}}{1}}{\sastate_\opord}{\sastateset_{\opord+1},
              \dots, \sastateset_\cpdsord}$ was already in $\saautb$.  This
              gives us an accepting run of $\stackv$ beginning with this
              transition.  Note that $\sastate_\opord$ is of layer $1$.  Thus,
              if $\stacku$ has pop-round $0$ and $\idxz = 1$, this case cannot
              occur.  Otherwise, we have that the run of $\stackv$ visits a
              subset of the states in the run over $\stackw$ and thus maintains
              the property.

        \item If $\genop = \scopy{\opord}$, then we have $\stackw =
              \ccompose{\stacku}{\opord}{\stackv}$ and $\stackw' =
              \ccompose{\stacku}{\opord}{\ccompose{\stacku}{\opord}{\stackv}}$.
              Furthermore, we had an accepting run of $\stackw$ using the
              initial transition
              \[ 
                  \satranfull{\salyrst{\sastate_{\control}}{1}}{\cha}{\sastateset_\branch
                  \cup \sastateset'_\branch}{\sastateset_1 \cup \sastateset'_1,
                  \ldots, \sastateset_{\opord-1} \cup \sastateset'_{\opord-1},
                  \sastateset'_\opord, \sastateset_{\opord+1}, \ldots,
                  \sastateset_\cpdsord} 
              \]
              and an accepting run of $\saautb$ on $\stackw'$ using the initial
              transitions
              $\satranfull{\salyrst{\sastate_{\control'}}{1}}{\cha}{\sastateset_\branch}{\sastateset_1,
              \ldots,\sastateset_\opord,\ldots, \sastateset_\cpdsord}$ and
              $\satranfull{\sastateset_\opord}{\cha}{\sastateset'_\branch}{\sastateset'_1,
              \ldots, \sastateset'_\opord}$ from which we have an accepting run
              over $\stackw'$.  Note that, to prove the required property, we
              observe that for all elements of $\stackw'$ obtaining their pop-
              and collapse-rounds from $\stackw$, the targets of the transitions
              used to read them already appear in the run of $\stackw$, hence
              the run satisfies the property.  The only new part of the run is
              to $\sastateset'_\opord$ after reading the new copy of $\stacku$,
              which has pop-round $1$.  Thus the property is maintained.

        \item If $\genop = \collapse{\opord}$ then we have $\stackw =
              \ccompose{\annot{\cha}{\stacku'}}{1}{\ccompose{\stacku}{(\opord+1)}{\stackv}}$
              and $\stackw' = \ccompose{\stacku'}{(\opord+1)}{\stackv}$.  When
              $\opord = \cpdsord$, the accepting run of $\stackw$ begins with a
              transition
              $\satranfull{\salyrst{\sastate_{\control}}{1}}{\cha}{\set{\salyrst{\sastate_{\control'}}{1}}}{
              \emptyset, \ldots, \emptyset}$ and $\stackw' = \stacku'$.  When
              $\idxz = 1$ and $\cha$ has collapse-round $0$, this case cannot
              occur because the initial transition goes to a layer $\idxz$
              state.  Otherwise, we have a run over $\stackw'$ that is a subrun
              of that over $\stackw$, and thus the property is transferred.  

              When $\opord < \cpdsord$, the accepting run of $\stackw$ begins
              with
              $\satranfull{\salyrst{\sastate_{\control}}{1}}{\cha}{\set{\sastate_\opord}}{\emptyset,
              \ldots, \emptyset, \sastateset_{\opord+1}, \ldots,
              \sastateset_\cpdsord}$ and we have an accepting run of $\stackw'$
              in $\saautb$ beginning with
              $\satranfullk{\salyrst{\sastate_{\control'}}{1}}{\sastate_\opord}{\sastateset_{\opord+1},
              \dots, \sastateset_\cpdsord}$.  If the collapse-round of $\cha$ is
              $0$ and $\idxz = 1$, this case cannot occur because
              $\sastate_\opord$ is layer $\idxz$.  Otherwise, the run over
              $\stackw'$ is a subrun of that over $\stackw$ and the property is
              transferred.

        \item If $\genop = \cpush{\chb}{\opord}$ then $\stackw' =
              \ccompose{\annot{\chb}{\stacku}}{1}{\stackw}$ where $\stacku =
              \apply{\ctop{\opord+1}}{\apply{\pop{\opord}}{\stackw}}$ and the
              collapse-round of $\chb$ is the pop-round of
              $\apply{\ctop{\opord}}{\stackw}$.  The run of $\stackw$ begins
              with a transition
              \[ 
                  \satranfull{\salyrst{\sastate_{\control}}{1}}{\cha}{\sastateset'_\branch}{\sastateset'_1,
                  \sastateset_2, \ldots, \sastateset_{\opord-1},
                  \sastateset_\opord \cup \sastateset_\branch,
                  \sastateset_{\opord+1}, \ldots, \sastateset_\cpdsord} 
              \]
              and there is a run over $\stackw'$ in $\saautb$ beginning with
              $\satranfull{\salyrst{\sastate_{\control'}}{1}}{\chb}{\sastateset_\branch}{\sastateset_1,
              \ldots, \sastateset_\cpdsord}$ and $\sastateset_1
              \satrancol{\cha}{\sastateset'_\branch} \sastateset'_1$.  Note
              that, to prove the required property, we observe that for all
              elements of $\stackw'$ obtaining their pop- and collapse-rounds
              from $\stackw$, the targets of the transitions used to read them
              already appear in the run of $\stackw$, hence the run satisfies
              the property.  The only new parts of the run are to
              $\sastateset'_1$ after reading $\chb$, which has pop-round $1$,
              and the transition to $\sastateset_\branch$ on the collapse branch
              of $\chb$.  Note, however, that $\chb$ has the collapse-round
              equal to the pop-round of $\apply{\ctop{\opord}}{\stackw}$ and
              hence we know that $\sastateset_\branch$ has no layer $\idxz$
              states if the collapse-round of $\chb$ is $0$.  Thus the property
              is maintained.

        \item If $\genop = \rew{\chb}$ then $\stackw =
              \ccompose{\annot{\cha}{\stacku}}{1}{\stackv}$ and $\stackw' =
              \ccompose{\annot{\chb}{\stacku}}{1}{\stackv}$.  Note none of the
              pop- or collapse-rounds are changed, and the run of $\stackw$
              beginning
              $\satranfull{\salyrst{\sastate_{\control}}{1}}{\cha}{\sastateset_\branch}{\sastateset_1,
              \dots, \sastateset_\cpdsord}$ and satisfying the property implies
              a run of $\stackw'$ in $\saautb$ beginning
              $\satranfull{\salyrst{\sastate_{\control'}}{1}}{\chb}{\sastateset_\branch}{\sastateset_1,
              \dots, \sastateset_\cpdsord}$ and also satisfying the property. 

        \item If $\genop = \noop$ then $\stackw =
              \ccompose{\annot{\cha}{\stacku}}{1}{\stackv}$ and $\stackw' =
              \ccompose{\annot{\cha}{\stacku}}{1}{\stackv}$.  Note none of the
              pop- or collapse-rounds are changed, and the run of $\stackw$
              beginning
              $\satranfull{\salyrst{\sastate_{\control}}{1}}{\cha}{\sastateset_\branch}{\sastateset_1,
              \dots, \sastateset_\cpdsord}$ and satisfying the property implies
              a run of $\stackw'$ in $\saautb$ beginning
              $\satranfull{\salyrst{\sastate_{\control'}}{1}}{\cha}{\sastateset_\branch}{\sastateset_1,
              \dots, \sastateset_\cpdsord}$ and also satisfying the property. 
    \end{enumerate}
    Thus we are done.
\end{proof}

\subsection{Complexity}

Solving the control state reachability problem requires finding a path in the
reachability graph.  Since each vertex can be stored in
$\bigo{\exptower{\cpdsord-1}{\polyof{\numscopes,\numof}}}$ space, where $\poly$
is a polynomial and $\numof$ the number of control states, and we require
$\bigo{\exptower{\cpdsord-1}{\polyof{\numscopes,\numof}}}$ time to decide the
edge relation, we have via Savitch's algorithm, a
$\bigo{\exptower{\cpdsord-1}{\polyof{\numscopes,\numof}}}$ space procedure for
deciding the control state reachability problem.  We also observe that the
solution to the global control state reachability problem may contain at most
$\bigo{\exptower{\cpdsord}{\polyof{\numscopes,\numof}}}$ tuples.

\end{document}